\documentclass[12pt,a4paper]{article}

\usepackage{amsmath,amssymb,amsthm,natbib,graphicx,mathrsfs,tikz,amsfonts}
\usepackage{multirow}
\usepackage[colorlinks,citecolor=blue,urlcolor=blue]{hyperref}

\usetikzlibrary[decorations.pathreplacing]
\usetikzlibrary{patterns}

\linespread{1.6}
\addtolength{\hoffset}{-2cm}
\addtolength{\textwidth}{4cm}
\addtolength{\voffset}{-2cm}
\addtolength{\textheight}{3cm}

\begin{document}

\title{The Spectral Approach to Linear Rational Expectations Models}
\author{Majid M.\ Al-Sadoon\footnote{Thanks are due to Todd Walker, Bernd Funovits, Mauro Bambi, Piotr Zwiernik, Abderrahim Taamouti, Benedikt P\"{o}tscher, three anonymous referees, and seminar participants at Heriot-Watt University, Universitat Pompeu Fabra, University of Bologna, and Aarhus University.}\\Durham University Business School}

\newtheorem{lem}{Lemma}
\newtheorem{thm}{Theorem}
\newtheorem{cor}{Corollary}
\newtheorem{prop}{Proposition}
\theoremstyle{definition}
\newtheorem{exmp}{Example}
\newtheorem{defn}{Definition}
\newtheorem{alg}{Algorithm}

\renewcommand{\labelenumi}{(\roman{enumi})}

\maketitle

\abstract{
This paper considers linear rational expectations models in the frequency domain. The paper characterizes existence and uniqueness of solutions to particular as well as generic systems. The set of all solutions to a given system is shown to be a finite dimensional affine space in the frequency domain. It is demonstrated that solutions can be discontinuous with respect to the parameters of the models in the context of non-uniqueness, invalidating mainstream frequentist and Bayesian methods. The ill-posedness of the problem motivates regularized solutions with theoretically guaranteed uniqueness, continuity, and even differentiability properties.

\bigskip

\noindent JEL Classification: C10, C32, C62, E32.

\bigskip

\noindent Keywords: Linear rational expectations models, frequency domain, spectral representation, Wiener-Hopf factorization, regularization, Gaussian likelihood function.}

\newpage

\section{Introduction}
Spectral (or frequency domain) analysis of covariance stationary processes is a cornerstone of time series analysis. Since its beginnings in the late 1930s, it has benefited from being at the intersection of a number of fundamental mathematical subjects including probability, functional analysis, and complex analysis \citep{rozanov,nikolski,bingham2,bingham1}. Almost concurrently, economic theory began to focus on expectations of future earnings, prices, interest rates etc.\ as determinants of present economic activity \citep{knight,keynes,cagan}. This led to the pioneering work of \cite{muth} who proposed that expectations, rather than being arbitrary inputs into models or arbitrarily determined within models, could be made both endogenous and model-consistent, hence ``rational'' (see \cite{pesaran1987} for further historical context). Today, rational expectations models are the mainstay of business cycle research \citep{canova,dd,hs}. This paper attempts to bridge the gap between the two strands of literature.

Classical linear systems, such as vector autoregressive moving average (VARMA) models, are linear transformations from an input process to an output process, with present values of the output depending linearly on present and past values of the input as well as past values of the output in every time period. The linear rational expectations model (LREM) class extends classical linear systems by allowing linear dependence on expectations of future values of the output as well. Such models arise naturally from the inter-temporal optimization problems of households and firms in economic modeling. Spectral analysis has focused almost entirely on classical linear systems \citep{bd,pourahmadi,lp}. Although a number of works have considered LREMs in the frequency domain \citep{whiteman,onatski,tanwalker,tan,meyer}, none have attempted a general account that parallels the aforementioned textbook treatments. Thus, the first aim of this paper is to situate LREMs in the frequency domain literature at a high level of generality that extends the aforementioned textbook treatments.

Using the spectral representation of covariance stationary processes due to \cite{kolmogorov1,kolmogorov2,kolmogorov3} and \cite{cramer1,cramer2}, this paper recasts the LREM problem in the classical Hilbert space of the frequency domain literature. This is demonstrated concretely on simple scalar models before generalizing to multivariate models. Whereas classical spectral analysis focused entirely on the backward shift operator, the new spectral analysis of LREMs requires the introduction of a new operator associated with expectations. The LREM problem then reduces to a linear system in Hilbert space. However, unlike the special case of VARMA models, LREMs cannot be solved by simply inverting a polynomial matrix due to the presence of the new expectation operators. Solving the LREM problem requires using the method of Wiener-Hopf factorization \citep{wh,gf,cg}. This paper characterizes existence and uniqueness of solutions to particular as well as generic LREMs, generalizing results by \cite{onatski}. The set of all solutions to a given LREM is shown to be a finite dimensional affine space in the frequency domain. The dimension of this space is expressed much more simply than in \cite{funovits,funovits2020}. It is important to note that the underlying assumptions in this paper are weaker than in the previous literature \citep{whiteman,onatski,tanwalker,tan,meyer}, which requires the exogenous process to have a purely non deterministic \cite{wold} representation, an assumption that is demonstrably unnecessary. The weaker assumptions of this paper also permit a clear answer as to why unit roots must be excluded, an aspect of the theory absent from the previous literature.

The main results of the paper concern the ill-posedness of the LREM problem in macroeconometrics. \cite{hadamard} defines a problem to be well-posed if its solutions satisfy the conditions of existence, uniqueness, and continuous dependence on its parameters. The LREM problem is ill-posed because it violates not just the second condition but also the third. Indeed, it has long been accepted that non-uniqueness is a feature of the LREM problem and many techniques have been developed to select a solution for any given LREM exhibiting non-uniqueness (e.g.\  \cite{taylor}, \cite{msv}, \cite{sunspots}, \cite{fanelli}, \cite{farmeretal}, \cite{bianchinicolo}). This paper highlights the fact that selections that have been proposed in the literature are not guaranteed to be continuous with respect to parameters of the LREM in the context of non-uniqueness. This problem seems to be either not well understood or not fully appreciated so far; to the author's knowledge, \cite{generic} is the only acknowledgement of this problem.

The problem of discontinuity is quite serious because it invalidates mainstream econometric methodology as reviewed, for example, in \cite{canova}, \cite{dd}, or \cite{hs}. This methodology takes a black-box approach to estimation and inference, whereby data is used to construct certain objective functions (likelihood functions, GMM criterion functions, or posterior distributions), which are then either numerically optimized by the Newton-Raphson algorithm or explored by the random walk Metropolis-Hastings algorithm (or the many variations of such algorithms) to produce estimates of parameters as well as confidence or credible intervals. When the discontinuous selected solutions are fed into this methodology, the aforementioned objective functions become discontinuous, invalidating assumptions required for the aforementioned methodologies to work. This paper illustrates concretely how things can go wrong with a simple example in Section \ref{sec:application}.

Fortunately, the literature on ill-posed problems offers an immediate solution to the problem: regularization. The idea here is that when theory is insufficient to pin down a unique solution, other information can be brought to bear. The method can be interpreted in at least two ways: (i) penalizing economically unreasonable solutions or shrinking towards economically reasonable ones or (ii) imposing prior beliefs on the frequencies of fluctuations that solutions ought to exhibit. For example, we may like to avoid solutions where certain variables vary too wildly or impose the prior that solutions to a business cycle model should exhibit fluctuations of period between 4 and 32 quarters in quarterly data. The paper provides conditions for existence and uniqueness of regularized solutions and proves that they are continuously (even differentiably) dependent on their parameters. Thus, under non-uniqueness, regularization selects solutions that can be used in any mainstream econometric method, frequentist or Bayesian.

This work is related to several recent strands in the literature. \cite{kn}, \cite{tq17}, \cite{kk18}, \cite{ident}, and \cite{kk23} study the identification of LREMs based on the spectral density of observables. \cite{cv}, \cite{tq12qe}, \cite{sala15} utilize spectral domain methods for estimating LREMs using ideas that go back to \cite{hs80}. \cite{recoverability} study the problem of subordination (what they call ``recoverability'') in the context of macroeconometric models. \cite{linsys} utilizes a generalization of Wiener-Hopf factorization in order to study unstable and non-stationary solutions ot LREMs. \cite*{ephermidze} provide recent results on the continuity of spectral factorization. Finally, \cite{regular} provides numerical algorithms for computing regularized solutions.

This paper is organized as follows. Section \ref{sec:notation} sets up the notation and reviews the fundamental concepts of spectral analysis of time series. Section \ref{sec:examples} considers the solution of simple LREMs with elementary frequency domain methods. Section \ref{sec:WHF} introduces Wiener-Hopf factorization. Section \ref{sec:eu} sets up the LREM problem, its existence and uniqueness properties, and establishes its ill-posedness. Section \ref{sec:regular} introduces regularized solutions. Section \ref{sec:application} provides an application of the theory. Section \ref{sec:conclusion} concludes. Sections \ref{sec:setsol}-\ref{sec:proofs} comprise the Appendix.

\section{Notation and Review}\label{sec:notation}
We will denote by $\mathbb{Z}$, $\mathbb{R}$, and $\mathbb{C}$ the sets of integers, real numbers, and complex numbers respectively. We will denote by $\mathbb{T}=\{z\in\mathbb{C}:|z|=1\}$ the unit circle in $\mathbb{C}$.  The letters $i$, $j$, $n$, and $m$ will always stand for natural numbers. By $\mathbb{C}^{m\times n}$ we will denote the set of $m\times n$ matrices of complex numbers. We will use $I_n$ to denote the identity $n\times n$ matrix. For $M\in\mathbb{C}^{m\times m}$, $\mathrm{tr}(M)=\sum_{i=1}^m M_{ii}$. For $M\in\mathbb{C}^{m\times n}$, $M^\ast\in\mathbb{C}^{n\times m}$ is the conjugate transpose of $M$, and $\|M\|_{\mathbb{C}^{m\times n}}=\left(\mathrm{tr}(MM^\ast)\right)^{1/2}=\left(\sum_{i=1}^m\sum_{j=1}^n|M_{ij}|^2\right)^{1/2}$.

All random variables in this paper are defined over a single probability space $(\Omega,\mathscr{F},P)$. For a complex random variable $x$, the expectation is denoted by $\boldsymbol{E}x=\int_\Omega x(\omega)dP(\omega)$. The space $\mathscr{L}_2$ is defined as the Hilbert space of complex valued random variables $x$, modulo $P$-almost sure equality, such that $\boldsymbol{E}|x|^2<\infty$ with inner product and norm
\begin{align*}
\langle x,y\rangle=\boldsymbol{E}x\overline{y},\qquad \|x\|_{\mathscr{L}_2}^2=\langle x,x\rangle,\qquad x,y\in \mathscr{L}_2.
\end{align*}
Similar to other Hilbert spaces we will consider in this paper, $\mathscr{L}_2$ is a set of equivalence classes of functions and not a set of functions. However, mathematical convention ``relegate[s] this distinction to the status of a tacit understanding'' \cite[p.\ 67]{xrudin}. Thus, we will write ``$x=y$'' instead of ``$x(\omega)=y(\omega)$ for $P$-almost all $\omega\in\Omega$'' and similarly for elements of all of the other Hilbert spaces we consider in this paper. The Hilbert space $\mathscr{L}_2^n$ is defined as the $n$-fold Cartesian product, $\mathscr{L}_2\times\cdots\times \mathscr{L}_2$ consisting of column vectors $x=\left[\begin{smallmatrix} x_1\\ \vdots\\ x_n\end{smallmatrix}\right]$, $x_j\in \mathscr{L}_2$, $j=1,\ldots, n$, with the inner product and norm
\begin{align*}
\langle\langle x,y\rangle\rangle=\sum_{j=1}^n\langle x_j,y_j\rangle,\qquad\|x\|_{\mathscr{L}_2^n}^2=\langle\langle x,x\rangle\rangle,\qquad x,y\in \mathscr{L}_2^n.
\end{align*}
If $\mathscr{S}\subset\mathscr{L}_2$ is a closed subspace and $x\in\mathscr{L}_2^n$, then the minimum of
\begin{align*}
\|x-y\|_{\mathscr{L}_2^n}^2=\sum_{j=1}^n\|x_j-y_j\|_{\mathscr{L}_2}^2,
\end{align*}
with respect to $y\in\mathscr{S}^n=\mathscr{S}\times\cdots\times\mathscr{S}$ is attained by minimizing each term on the right hand side independently with respect to $y_j\in\mathscr{S}$, $j=1,\ldots, n$. It follows that the orthogonal projection of $x\in\mathscr{L}_2^n$ onto $\mathscr{S}^n$ in $\mathscr{L}_2^n$, denoted by $\boldsymbol{P}(x|\mathscr{S}^n)$, is given by $\left[\begin{smallmatrix} \boldsymbol{P}(x_1|\mathscr{S})\\ \vdots\\ \boldsymbol{P}(x_n|\mathscr{S})\end{smallmatrix}\right]$, where $\boldsymbol{P}(x_j|\mathscr{S})$ is the orthogonal projection of $x_j$ onto $\mathscr{S}$ in $\mathscr{L}_2$, $j=1,\ldots, n$.

Let $\ldots,\xi_{-1},\xi_0,\xi_1,\ldots$ be a doubly infinite sequence in $\mathscr{L}_2^n$. We will refer to this stochastic process simply as $\xi$. The $j$-th element of $\xi_t$ will be denoted by $\xi_{jt}$. If for all $t,s\in\mathbb{Z}$, $\boldsymbol{E}\xi_t=\boldsymbol{E}\xi_s$ and $\boldsymbol{E}\xi_t\xi_s^\ast$ depends on $t$ and $s$ only through $t-s$, we say that $\xi$ is a covariance stationary process. Given $\xi$, we may define a number of useful objects.

Let $\mathscr{H}$ be the closure in $\mathscr{L}_2$ of the set of all finite complex linear combinations of the set $\{\xi_{jt}: j=1,\ldots,n, t\in\mathbb{Z}\}$. We define $\mathscr{H}^n\subset \mathscr{L}_2^n$ to be the $n$-fold Cartesian product of $\mathscr{H}$.

Traditionally, spectral analysis utilizes the forward (rather than backward) shift operator,
\begin{align*}
\boldsymbol{U}:\xi_{jt}\mapsto \xi_{jt+1},\qquad j=1,\ldots, n,\quad t\in\mathbb{Z}.
\end{align*}
This operator extends uniquely to a unitary operator on $\mathscr{H}$ \cite[Theorem I.4.1]{rozanov}.

We may next define the unique spectral measure $F$ on the unit circle, $\mathbb{T}$, which satisfies
\begin{align*}
\boldsymbol{E}\xi_t\xi_s^\ast=\int z^{t-s}dF,\qquad t,s\in\mathbb{Z},
\end{align*}
\cite[Theorems I.5.1 and I.5.2]{rozanov}. Many textbooks express the integral above as\linebreak $\int_{-\pi}^\pi e^{\mathrm{i}(t-s)\lambda}dF(\lambda)$; the notation here is substantially more compact. When each element of $F$ is absolutely continuous with respect to normalized Lebesgue measure on $\mathbb{T}$, defined by
\begin{align*}
\mu\left(\{e^{\mathrm{i}\lambda}:a\leq\lambda\leq b\}\right)=\frac{1}{2\pi}(b-a),\qquad 0\leq b-a\leq 2\pi,
\end{align*}
then the Radon-Nikod\'{y}m derivative $dF/d\mu$ is the spectral density matrix of $\xi$. Note that the spectral density of $n$-dimensional standardized white noise is $I_n$. Another important example of a spectral measure, is the Dirac measure at $w\in\mathbb{T}$ defined as
\begin{align*}
\delta_w(\Lambda)=\begin{cases}
1, &\;w\in\Lambda,\\
0, &\;w\not\in\Lambda,
\end{cases}
\end{align*}
for Borel subsets $\Lambda\subset\mathbb{T}$. Note that $F=\delta_w I_n$ is the spectral measure of the purely deterministic process $\xi_t=\xi_0 w^t$ for $t\in\mathbb{Z}$, with $\boldsymbol{E}\xi_0\xi_0^\ast=I_n$.

We may also define the Hilbert space $H$ of all Borel-measurable mappings $\phi:\mathbb{T}\rightarrow\mathbb{C}^{1\times n}$ (i.e.\ $\phi(z)$ is a row vector for $z\in\mathbb{T}$) such that $\int \phi dF\phi^\ast<\infty$ with inner product and norm
\begin{align*}
(\phi,\varphi)=\int \phi dF\varphi^\ast,\qquad\|\phi\|^2_H=(\phi,\phi),\qquad \phi,\varphi\in H,
\end{align*}
where we identify $\phi=\varphi$ if $\|\phi-\varphi\|_H^2=\int(\phi-\varphi)dF(\phi-\varphi)^\ast=0$ \citep[Lemma I.7.1]{rozanov}. We define the Hilbert space $H^m=H\times\cdots\times H$ to be the $m$-fold Cartesian product of $H$ consisting of the $m\times n$ matrices $\phi=\left[\begin{smallmatrix} \phi_1\\ \vdots\\ \phi_m\end{smallmatrix}\right]$, $\phi_j\in H$, $j=1,\ldots, m$ endowed with the inner product and norm
\begin{align*}
((\phi,\varphi))=\sum_{j=1}^m(\phi_j,\varphi_j),\qquad\|\phi\|^2_{H^m}=((\phi,\phi)),\qquad \phi,\varphi\in H^m.
\end{align*}
By the same argument as used earlier, if $S\subset H$ is a closed subspace, the orthogonal projection of $\phi\in H^m$ onto $S^m$ in $H^m$, denoted by $\boldsymbol{P}(\phi|S^m)$, is given by $\left[\begin{smallmatrix} \boldsymbol{P}(\phi_1|S)\\ \vdots\\ \boldsymbol{P}(\phi_m|S)\end{smallmatrix}\right]$, where $\boldsymbol{P}(\phi_j|S)$ is the orthogonal projection of $\phi_j$ onto $S$ in $H$, $j=1,\ldots, m$.

The spectral representation theorem states that
\begin{align*}
\xi_t=\int z^td\Phi,\qquad t\in\mathbb{Z},
\end{align*}
for a column vector of random measures $\Phi$ with $\Phi(\Lambda)\in \mathscr{H}^n$ and $\boldsymbol{E}\Phi(\Lambda)\Phi^\ast(\Lambda)=F(\Lambda)$ for Borel subsets $\Lambda\subset\mathbb{T}$ \cite[Theorem I.4.2]{rozanov}. Many textbooks express the integral above equivalently as $\int_{-\pi}^\pi e^{\mathrm{i}t\lambda}dZ(\lambda)$, where $Z(\lambda)=\Phi(\{e^{\mathrm{i}\tau}:\tau\in(-\pi,\lambda]\})$ for $\lambda\in(-\pi,\pi]$; again, the notation here is substantially more compact. The spectral representation theorem establishes a unitary mapping $H^m\rightarrow \mathscr{H}^m$ defined by $\phi\mapsto h=\int \phi d\Phi$ \citep[p.\ 32]{rozanov}. We call $\phi$ the spectral characteristic of $h$. Thus,
\begin{align*}
\left\langle\left\langle \int \phi d\Phi,\int \varphi d\Phi\right\rangle\right\rangle=((\phi,\varphi)),\qquad \left\|\int \phi d\Phi\right\|^2_{\mathscr{L}_2^m}=\|\phi\|^2_{H^m},\qquad \phi,\varphi\in H^m.
\end{align*}

Denote by $e_j\in\mathbb{C}^{1\times n}$ the row vector whose elements are all equal to zero except for the $j$-th which is equal to one. The spectral representation theorem implies that $\mathscr{H}_t$, the closure in $\mathscr{L}_2$ of the set of all finite complex linear combinations of $\{\xi_{js}: j=1,\ldots,n, s\leq t\}$, is in correspondence with the closure of the set of all finite complex linear combinations of $\{z^se_j: j=1,\ldots,n, s\leq t\}$ in $H$, denoted $H_t$. Let $\mathscr{H}^m_t$ and $H_t^m$ be $m$-fold Cartesian products of $\mathscr{H}_t$ and $H_t$ respectively. This implies that
\begin{align*}
\boldsymbol{P}\left(\int\phi d\Phi \bigg|\mathscr{H}^m_t\right)=\int \boldsymbol{P}(\phi|H_t^m)d\Phi,\qquad t\in\mathbb{Z}.
\end{align*}
Thus, the best linear prediction of $\xi_{t+s}$ in terms of $\xi_t, \xi_{t-1},\ldots$ is given by
\begin{align*}
\boldsymbol{P}(\xi_{t+s}|\mathscr{H}_t^n)=\int \boldsymbol{P}(z^{t+s}I_n|H_t^n)d\Phi,\qquad s,t\in\mathbb{Z}.
\end{align*}
It is easily established that $\mathscr{H}_t=\boldsymbol{U}^t\mathscr{H}_0$ and $H_t=z^tH_0$ for all $t\in\mathbb{Z}$ so that $\boldsymbol{P}(\boldsymbol{U}^t\xi|\mathscr{H}_t)=\boldsymbol{U}^t\boldsymbol{P}(h|\mathscr{H}_0)$ for all $t\in\mathbb{Z}$ and $h\in \mathscr{H}$ and likewise $\boldsymbol{P}(z^t\phi|H_t)=z^t\boldsymbol{P}(\phi|H_0)$ for all $t\in\mathbb{Z}$ and $\phi\in H$ \citep[p.\ 52]{rozanov}. Thus
\begin{align*}
\boldsymbol{P}(\xi_{t+s}|\mathscr{H}^n_t)=\int z^t\boldsymbol{P}\left(z^s I_n|H_0^n\right)d\Phi,\qquad s,t\in\mathbb{Z}.
\end{align*}

If $\zeta$ is an additional covariance stationary processes such that for every $t,s\in\mathbb{Z}$, $\boldsymbol{E}\zeta_s\xi_t^\ast$ depends on $t$ and $s$ only through $t-s$, we will say that $\zeta$ is causal in $\xi$ if $\zeta_0\in \mathscr{H}_0^m$. Finally, if $\nu$ is causal in $\xi$ and satisfies $\boldsymbol{P}(\nu_{t+1}|\mathscr{H}_t)=0$ for all $t\in\mathbb{Z}$, we call $\nu$ an innovation process.

\section{Examples}\label{sec:examples}
Armed with the basic machinery above, we now make a first attempt at solving LREMs in the frequency domain. We will see that solving simple univariate LREMs involves only elementary spectral domain techniques as discussed in textbook treatments of spectral analysis  such as \cite{bd} or \cite{pourahmadi}. The methods also provide strong hints to the general approach to solving LREMs. In this section, $\xi$ is a scalar covariance stationary process with $\boldsymbol{U}$, $\Phi$, $F$, $\mathscr{H}$, and $H$ defined as in the previous section.

\subsection{The Autoregressive Model}
We begin on familiar territory with the stationary autoregression
\begin{align}
X_t-\alpha X_{t-1}=\xi_t,\qquad t\in\mathbb{Z},\label{eq:ar1}
\end{align}
where $|\alpha|<1$. The frequency-domain analysis of this model is available in many textbooks. For completeness, we provide a treatment here that is geared towards understanding the more general cases to come.

We require a covariance stationary solution causal in $\xi$ (further motivation of this restriction can be found in Section \ref{sec:eu}). Thus, we require
\begin{align*}
X_t=\int z^t\phi d\Phi,\qquad t\in\mathbb{Z},
\end{align*}
for some spectral characteristic $\phi\in H_0$.

Notice that we may restrict attention to the equation
\begin{align*}
(1-\alpha \boldsymbol{U}^{-1})X_0=\xi_0.
\end{align*}
If a solution $X_0$ exists, then the rest of the process can be generated as
\begin{align*}
X_t=\boldsymbol{U}^tX_0,\qquad t\in\mathbb{Z},
\end{align*}
and clearly satisfies \eqref{eq:ar1}.

Thus, we must solve for $\phi$ in
\begin{align*}
\int (1-\alpha z^{-1})\phi d\Phi=\int d\Phi.
\end{align*}
Since integration with respect to $\Phi$ is a unitary mapping $H\rightarrow\mathscr{H}$, it has an inverse and the equation above is equivalent to
\begin{align}
(1-\alpha z^{-1})\phi=1\label{eq:ar1freq}
\end{align}
in the frequency domain.

The linear mapping $\phi\mapsto \alpha z^{-1} \phi$ on $H_0$ is bounded in norm by $|\alpha|<1$, thus the mapping $\phi\mapsto (1-\alpha z^{-1})\phi$ is invertible and
\begin{align*}
\phi=\sum_{s=0}^\infty \alpha^s z^{-s}
\end{align*}
is the unique solution to \eqref{eq:ar1freq}, where the summation is understood to converge with respect to $H$ norm \cite[Theorem 2.8.1]{basic}. Thus, we arrive at the unique solution to $X_0$,
\begin{align*}
X_0&=\int \sum_{s=0}^\infty \alpha^s z^{-s} d\Phi=\sum_{s=0}^\infty \alpha^s \int z^{-s}d\Phi=\sum_{s=0}^\infty \alpha^s \xi_{-s}.
\end{align*}
The interchange of the summation and the stochastic integral is admissible because the stochastic integral is a bounded linear operator mapping from $H$ to $\mathscr{H}$ and the inner summation converges in $H$. It follows that the unique covariance stationary solution is
\begin{align*}
X_t=\boldsymbol{U}^t X_0=\sum_{s=0}^\infty \alpha^s \boldsymbol{U}^t\xi_{-s}=\sum_{s=0}^\infty \alpha^s \xi_{t-s},\qquad t\in\mathbb{Z}.
\end{align*}
The operator $\boldsymbol{U}^t$ is interchangeable with the summation because it is a bounded linear operator on $\mathscr{H}$ and the summation converges in $\mathscr{L}_2$.

\subsection{The Cagan Model}
The Cagan model is given as,
\begin{align}
X_t-\beta \boldsymbol{P}(X_{t+1}|\mathscr{H}_t)=\xi_t,\qquad t\in\mathbb{Z},\label{eq:cagan}
\end{align}
with $|\beta|<1$. Again, we look for a covariance stationary solution causal in $\xi$ and we restrict attention to the equation
\begin{align*}
X_0-\beta \boldsymbol{P}(UX_0|\mathscr{H}_0)=\xi_0,
\end{align*}
and, following a similar argument to that used above, we arrive at the underlying frequency domain problem,
\begin{align}
\phi-\beta \boldsymbol{P}(z\phi|H_0)=1.\label{eq:0func}
\end{align}

Since the linear mapping $\phi\mapsto \beta \boldsymbol{P}(z\phi|H_0)$ on $H_0$ is bounded in norm by $|\beta|<1$, we see that the mapping $\phi\mapsto \phi-\beta \boldsymbol{P}(z\phi|H_0)$ is invertible so that
\begin{align*}
\phi=\sum_{s=0}^\infty \beta^s \boldsymbol{P}(z^s|H_0)
\end{align*}
is the unique solution to \eqref{eq:0func}, where the summation is understood to converge with respect to $H$ norm \cite[Theorem 2.8.1]{basic}. Thus, we arrive at the unique solution to $X_0$,
\begin{align*}
X_0&=\int \sum_{s=0}^\infty \beta^s \boldsymbol{P}(z^s|H_0)d\Phi=\sum_{s=0}^\infty \beta^s \int \boldsymbol{P}(z^s|H_0)d\Phi=\sum_{s=0}^\infty \beta^s \boldsymbol{P}(\xi_s|\mathscr{H}_0).
\end{align*}
The interchange of the summation and the stochastic integral is admissible because the stochastic integral is a bounded linear operator mapping $H$ to $\mathscr{H}$ and the inner summation converges in $H$. It follows that the unique covariance stationary solution is
\begin{align*}
X_t=\boldsymbol{U}^t X_0=\sum_{s=0}^\infty \beta^s \boldsymbol{U}^t\boldsymbol{P}(\xi_s|\mathscr{H}_0)=\sum_{s=0}^\infty \beta^s \boldsymbol{P}(\xi_{t+s}|\mathscr{H}_t),\qquad t\in\mathbb{Z}.
\end{align*}
The operator $\boldsymbol{U}^t$ is interchangeable with the summation because it is a bounded linear operator on $\mathscr{H}$ and the summation converges in $\mathscr{L}_2$.

\subsection{The Mixed Model}
Now suppose we have the more general model
\begin{align}
a \boldsymbol{P}(X_{t+1}|\mathscr{H}_t)+bX_t+c X_{t-1}=\xi_t,\qquad t\in\mathbb{Z},\label{eq:lremsimple}
\end{align}
where $a,b,c\in\mathbb{C}$ and $ac\neq0$. This leads to the frequency-domain equation
\begin{align}
\boldsymbol{P}((az+b+cz^{-1})\phi|H_0)=1.\label{eq:lremsimplefreq}
\end{align}
As noted by \cite{sargent79}, the solution to this system depends on the factorization of
\begin{align*}
M(z)=az+b+cz^{-1}=a z^{-1}(z-\delta)(z-\gamma).
\end{align*}
We assume, without loss of generality, that $|\gamma|\leq|\delta|$. There are four cases to consider.

Suppose $|\gamma|<1<|\delta|$. We can then write $M(z)=a(z-\delta)(1-\gamma z^{-1})$ and express the system as
\begin{align*}
\boldsymbol{P}((1-\delta^{-1}z)\varphi|H_0)=1,\qquad -a\delta(1-\gamma z^{-1})\phi=\varphi.
\end{align*}
The first equation can be solved as in the Cagan model (since $|\delta^{-1}|<1$), while the second can be solved as in the autoregressive model (since $|\gamma|<1$). This procedure leads us to the unique solution
\begin{align*}
\phi=-\frac{1}{a\delta}\sum_{u=0}^\infty\sum_{s=0}^\infty \gamma^u \delta^{-s} z^{-u}\boldsymbol{P}(z^s|H_0)=-\frac{1}{a\delta}\sum_{u=0}^\infty\sum_{s=0}^\infty \gamma^u \delta^{-s} \boldsymbol{P}(z^{s-u}|H_{-u}).
\end{align*}
In the time domain, we obtain the following solution
\begin{align*}
X_0=-\frac{1}{a\delta}\sum_{u=0}^\infty\sum_{s=0}^\infty \gamma^u \delta^{-s} \boldsymbol{P}(\xi_{s-u}|\mathscr{H}_{-u}),
\end{align*}
which then gives us the general solution,
\begin{align*}
X_t=-\frac{1}{a\delta}\sum_{u=0}^\infty\sum_{s=0}^\infty \gamma^u \delta^{-s} \boldsymbol{P}(\xi_{s+t-u}|\mathscr{H}_{t-u}),\qquad t\in\mathbb{Z}.
\end{align*}
Thus, when $|\gamma|<1<|\delta|$, there exists a unique solution.

Next, suppose that $|\delta|<1$. Then we may write $M(z)=az(1-\delta z^{-1})(1-\gamma z^{-1})$ and express our system as
\begin{align*}
\boldsymbol{P}(z\varphi|H_0)=1,\qquad a(1-\delta z^{-1})(1-\gamma z^{-1})\phi=\varphi.
\end{align*}
The first equation does not have a unique solution in general. For example, when $F = \mu$, then $\varphi=z^{-1}+\psi$ solves the equation for any $\psi\in\mathbb{C}$. More generally, every solution is
of the form
\begin{align*}
\varphi=z^{-1}+\psi,
\end{align*}
where $\psi\in H_0$ with $z\psi$ orthogonal to $H_0$. We can then solve for $\phi$ as
\begin{align*}
\phi=\frac{1}{a}\sum_{u=0}^\infty\sum_{s=0}^\infty\gamma^u\delta^sz^{-s-u}(z^{-1}+\psi).
\end{align*}
In the time domain, this leads to the general solution
\begin{align*}
X_t=\frac{1}{a}\sum_{u=0}^\infty\sum_{s=0}^\infty \gamma^u\delta^s(\xi_{t-s-u-1}+\nu_{t-s-u}),\qquad t\in\mathbb{Z},
\end{align*}
where $\nu_t=\int z^t\psi d\Phi\in \mathscr{H}_t$ for $t\in\mathbb{Z}$ is an arbitrary innovation process. Indeed, $\nu_t \in \mathscr{H}_t$ for all $t \in \mathbb{Z}$ because $\psi \in H_0$ and $\nu_{t+1}$ is orthogonal to $\mathscr{H}_t$ for all $t \in \mathbb{Z}$ because $\boldsymbol{P} (\nu_{t+1}|\mathscr{H}_t) = \int\boldsymbol{P} (z^{t+1}\psi|H_t)d\Phi = \int z^t\boldsymbol{P} (z\psi|H_0)d\Phi = 0$ as $z\psi$ is orthogonal to $H_0$. Thus, when $|\delta|<1$, there exist potentially infinitely many solutions.

Now suppose $|\gamma|>1$. Then we may write $M(z)=a\delta\gamma z^{-1}(1-\delta^{-1}z)(1-\gamma^{-1}z)$ and
\begin{align*}
\boldsymbol{P}(a\delta\gamma(1-\delta^{-1}z)(1-\gamma^{-1}z)\varphi|H_0)=1,\qquad z^{-1}\phi=\varphi.
\end{align*}
Clearly $\varphi$ can be solved as in the Cagan model to produce
\begin{align*}
\varphi=\frac{1}{a\delta\gamma}\sum_{u=0}^\infty\sum_{s=0}^\infty\gamma^{-u}\delta^{-s} \boldsymbol{P}(z^{s+u}|H_0).
\end{align*}
This implies that
\begin{align*}
z^{-1}\phi=\frac{1}{a\delta\gamma}\sum_{u=0}^\infty\sum_{s=0}^\infty\gamma^{-u}\delta^{-s} \boldsymbol{P}(z^{s+u}|H_0).
\end{align*}
However, this equation cannot hold generally. To see this, let $\xi$ be a standard white noise so that $F=\mu$. Then $\{z^s: s \in \mathbb{Z}\}$ is an orthonormal set and the equation above reduces to
\begin{align*}
z^{-1}\phi=\frac{1}{a\delta\gamma}.
\end{align*}
Since $\phi$ is in the closure of the linear span of $\{z^s:s\leq0\}$, the zero-th Fourier coefficient of the left hand side is equal to zero, while that of the right hand side is equal to $\frac{1}{a\delta\gamma}\neq0$. So there is no solution in general when $|\gamma|>1$.

Finally, suppose either $|\gamma|=1$ or $|\delta|=1$ so that $M(w)=0$ for some $w\in\mathbb{T}$. Then there is no solution in general, in the sense that there exist processes $\xi$ for which no covariance stationary solution $X$ can be found. To see this, let $F=\delta_w$, the Dirac measure at $w$. If a solution $\phi\in H_0$ to \eqref{eq:lremsimplefreq} exists, then it must satisfy
\begin{align*}
\|\phi\|_{H_0}^2=\int |\phi|^2dF=|\phi(w)|^2<\infty.
\end{align*}
This then implies that $M\phi=M(w)\phi(w)=0$ in $H$, which implies that $\boldsymbol{P}(M\phi|H_0)=0$, contradicting the fact that $\boldsymbol{P}(M\phi|H_0)=1$. Thus, when $|\gamma|=1$ or $|\delta|=1$ there exists no solution in general.

We will see that the analysis of the simple models above contains most of the elements necessary for the general multivariate case with more than one lead and/or lag and arbitrary covariance stationary $\xi$. The reader interested in a more complete elementary analysis of the simple models above (e.g.\ considering the case $|\beta|>1$ of the Cagan model) is directed to previous versions of this paper available on arXiv.

\section{Wiener-Hopf Factorization}\label{sec:WHF}
The approach we have taken in the last section is well understood in the theory of convolution equations \citep{gf}. The requisite factorization of $M(z)$ into two parts, a part to solve like the Cagan model and a part to solve like the autoregressive model, is known as a Wiener-Hopf factorization \citep{wh}. In this section, we state the basic concepts and properties of Wiener-Hopf factorization that we will need.

\begin{defn}\label{defn:w}
Let $\mathcal{W}$ be the class of functions $M:\mathbb{T}\rightarrow\mathbb{C}$ defined by
\begin{align}
M(z)=\sum_{s=-\infty}^\infty M_sz^s,\label{eq:W}
\end{align}
where $M_s\in\mathbb{C}$ for $s\in\mathbb{Z}$ and $\sum_{s=-\infty}^\infty |M_s|<\infty$. Define $\mathcal{W}_\pm\subset\mathcal{W}$ to be the class of functions \eqref{eq:W} with $M_s=0$ for $s\lessgtr0$. The sets $\mathcal{W}^{m\times n}$, $\mathcal{W}^{m\times n}_\pm$ are defined as the sets of matrices of size $m\times n$ populated by elements of $\mathcal{W}$ and $\mathcal{W}_\pm$ respectively.
\end{defn}

The class of functions $\mathcal{W}$ is known as the Wiener algebra in the functional analysis literature \cite[Section XXIX.2]{classes2}. Note that every function analytic in a neighborhood of $\mathbb{T}$ defines an element of $\mathcal{W}$ but the opposite inclusion does not hold (e.g.\ $\sum_{s=0}^\infty s^{-2}z^s\in\mathcal{W}_+$ diverges outside $\mathbb{T}$). For $M\in\mathcal{W}^{m\times n}$, we also have that
\begin{align*}
\|M\|_\infty\leq\sum_{s=-\infty}^\infty \|M_s\|_{\mathbb{C}^{m\times n}}<\infty,
\end{align*}
where $\|M\|_\infty$ is the $\mu$--essential supremum of $\|M(z)\|_{\mathbb{C}^{m\times n}}$. We will also need the operator
\begin{align*}
\left[\sum_{s=-\infty}^\infty M_sz^s\right]_-=\sum_{s=-\infty}^0 M_sz^s
\end{align*}
defined over $\mathcal{W}^{m\times n}$ and, more generally, over the class of square summable series satisfying $\sum_{s=-\infty}^\infty \|M_s\|_{\mathbb{C}^{n\times m}}^2<\infty$ \citep[Corollary XXIII.3.3]{classes2}.

\begin{defn}\label{defn:whf}
A Wiener-Hopf factorization of $M\in\mathcal{W}^{m\times m}$ is a factorization
\begin{align}
M=M_+M_0M_-,\label{eq:whf}
\end{align}
where $M_+\in\mathcal{W}^{m\times m}_+$, $M_+^{-1}\in\mathcal{W}^{m\times m}_+$, $M_-\in\mathcal{W}^{m\times m}_-$, $M_-^{-1}\in\mathcal{W}^{m\times m}_-$, $M_0$ is a diagonal matrix with diagonal elements $z^{\kappa_1},\ldots,z^{\kappa_m}$, and $\kappa_1\geq\cdots\geq\kappa_m$ are integers, called partial indices.
\end{defn}

We remark that the factorization given in Definition \ref{defn:whf} is termed a left factorization in the Wiener-Hopf factorization literature \cite[p.\ 185]{gf}. It differs from the right factorization utilized in \cite{onatski} and \cite{linsys}, where the roles of $M_\pm(z)$ are reversed. The difference is due to the fact that the present analysis works with the forward shift operator, whereas \cite{onatski} and \cite{linsys} work with the backward shift operator. A left factorization of $M$ is obtained from a right factorization of $M^\ast$.

\begin{thm}\label{thm:whf}
Let $M\in\mathcal{W}^{m\times m}$ and suppose $\det(M(z))\neq0$ for all $z\in\mathbb{T}$. Then $M$ has a Wiener-Hopf factorization and its partial indices are unique.
\end{thm}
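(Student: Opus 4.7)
The plan is to establish existence first and then uniqueness of the partial indices, the latter being the cleaner step. For \emph{scalar existence} in the case $m=1$, the hypothesis reads $M\in\mathcal{W}$ with $M(z)\neq0$ on $\mathbb{T}$. As a continuous nowhere-zero function, $M$ has a well-defined winding number $\kappa\in\mathbb{Z}$ about the origin, so $z^{-\kappa}M(z)$ has winding number zero and admits a continuous logarithm on $\mathbb{T}$. The Wiener--L\'evy theorem places this logarithm $g$ back in $\mathcal{W}$. Splitting $g=g_++g_-$ with $g_+\in\mathcal{W}_+$ and $g_-\in\mathcal{W}_-$ (assigning the zero-index term to $g_+$) yields $M=e^{g_+}z^\kappa e^{g_-}$, and since $\mathcal{W}_\pm$ are Banach subalgebras closed under the exponential map, this is a Wiener--Hopf factorization.

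For \emph{matrix existence} with $m\geq2$, I would approximate $M$ in the $\mathcal{W}$ norm by a sequence of Laurent polynomial matrices $M^{(k)}$ whose determinants remain nowhere zero on $\mathbb{T}$, which is possible because $|\det M|$ is bounded below on $\mathbb{T}$ by hypothesis and $\det$ is continuous on $\mathcal{W}^{m\times m}$. Each rational $M^{(k)}$ admits a Wiener--Hopf factorization constructed via the Smith--McMillan form or via elementary row and column operations over $\mathbb{C}[z,z^{-1}]$. The classical Gohberg--Krein stability theorem then ensures that for $k$ large the partial indices stabilize at fixed integers $\kappa_1\geq\cdots\geq\kappa_m$ and that, with an appropriate normalization, the plus and minus factors converge in $\mathcal{W}^{m\times m}$ to the desired Wiener--Hopf factors of $M$.

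For \emph{uniqueness}, suppose $M=M_+M_0M_-=\tilde M_+\tilde M_0\tilde M_-$. Setting $A=\tilde M_+^{-1}M_+$ and $B=\tilde M_-M_-^{-1}$ gives $A,A^{-1}\in\mathcal{W}_+^{m\times m}$, $B,B^{-1}\in\mathcal{W}_-^{m\times m}$, and $AM_0=\tilde M_0B$. Componentwise, $a_{ij}z^{\kappa_j}=z^{\tilde\kappa_i}b_{ij}$, and Fourier-support analysis forces $a_{ij}=0$ whenever $\kappa_j>\tilde\kappa_i$, since $z^{\kappa_j-\tilde\kappa_i}a_{ij}$ would otherwise have Fourier support in strictly positive indices yet must equal $b_{ij}\in\mathcal{W}_-$. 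Invertibility of $A$ in $\mathcal{W}_+$ forces the constant matrix $A(0)\in\mathbb{C}^{m\times m}$ to be invertible, so no column of $A(0)$ vanishes. If $\kappa_1>\tilde\kappa_1$, then $\tilde\kappa_i\leq\tilde\kappa_1<\kappa_1$ for every $i$ gives $a_{i1}\equiv0$, making the first column of $A(0)$ zero, a contradiction. The symmetric argument applied to $A^{-1}\tilde M_0=M_0B^{-1}$ yields $\tilde\kappa_1\leq\kappa_1$; hence $\kappa_1=\tilde\kappa_1$, and induction on the trailing block matches all remaining indices.

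The hard part is matrix existence: showing that partial indices of Laurent polynomial approximants stabilize under $\mathcal{W}$-norm perturbation and that the outer factors may be chosen to converge. This is the substance of the Gohberg--Krein stability theorem and relies on a Fredholm-index analysis of the Toeplitz operator associated to $M$ on $\mathcal{W}_+$. By contrast, the scalar case follows immediately from Wiener--L\'evy, and uniqueness reduces to Fourier-support accounting once the algebraic setup is in place.
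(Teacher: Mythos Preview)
Your scalar existence argument and your uniqueness argument are both essentially correct and follow the classical lines. The paper's own proof simply cites Gohberg--Feldman (Theorem VIII.2.2 for existence, Theorem VIII.1.1 for uniqueness), so any self-contained argument necessarily goes beyond what the paper provides.

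However, your matrix existence argument has a genuine gap. You approximate $M$ by Laurent polynomial matrices $M^{(k)}$, factor each of these, and then invoke the Gohberg--Krein stability theorem to conclude that the partial indices stabilize and the outer factors converge. But the Gohberg--Krein stability theorem says precisely the opposite of what you need: the partial indices are stable under small perturbations \emph{only} when $\kappa_1-\kappa_m\leq1$, i.e.\ only on the generic set $\mathcal{W}^{m\times m}_\circ$. At non-generic points the partial indices are unstable and the outer factors can fail to converge. The paper itself exploits this in Section~\ref{sec:regular}: for $M(z)=\left[\begin{smallmatrix}z^2&0\\0&1\end{smallmatrix}\right]$ (partial indices $(2,0)$), the Laurent polynomial approximants $M^{(k)}(z)=\left[\begin{smallmatrix}z^2&0\\k^{-1}z&1\end{smallmatrix}\right]$ converge to $M$ in $\mathcal{W}^{2\times2}$ but each has partial indices $(1,1)$, and the computed $M_+^{(k)}$, $M_-^{(k)}$ diverge as $k\to\infty$. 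Your scheme would therefore produce a sequence of factorizations that does not converge to any factorization of $M$.

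The standard proofs of matrix existence in the Wiener algebra do not proceed by approximation and stability; they go either through the Fredholm theory of the associated block Toeplitz operator directly (constructing the factors from the kernel/cokernel structure, as in Gohberg--Feldman) or through the vector Riemann--Hilbert problem. If you wish to salvage the approximation route, you would need to specify the approximants so that their partial indices match those of $M$ from the outset, but doing so presupposes exactly the factorization you are trying to construct.
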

\begin{proof}
See Theorems VIII.1.1 and VIII.2.2 of \cite{gf}.
\end{proof}

The condition in Theorem \ref{thm:whf} is minimal for existence of a Wiener-Hopf factorization and uniqueness of the $M_0$ part. The $M_\pm$ parts are not unique but their general form is well understood \citep[Theorem VIII.1.2]{gf}. The non-uniqueness of $M_\pm$ has no bearing on any of our results. Wiener-Hopf factorizations can be computed in a variety of ways (see \cite{constructive} for a recent survey).

The partial indices allow us to identify an important subset of $\mathcal{W}^{m\times m}$.

\begin{defn}
Let $\mathcal{W}^{m\times m}_\circ$ be the subset of $M\in\mathcal{W}^{m\times m}$ such that $\det(M(z))\neq0$ for all $z\in\mathbb{T}$ and $\kappa_1-\kappa_m\leq1$.
\end{defn}

When $m=1$, $0=\kappa_1-\kappa_m\leq1$ so that $\mathcal{W}^{1\times 1}_\circ$ is the set of $M\in\mathcal{W}$ such that $M(z)\neq0$ for all $z\in\mathbb{T}$. Notice that for every element of $\mathcal{W}^{m\times m}_\circ$, the partial indices are either all non-negative, all non-positive, or all zero. This implies that for every element of $\mathcal{W}^{m\times m}_\circ$,
\begin{align*}
\mathrm{sign}(\kappa_i)=\mathrm{sign}\left(\sum_{i=1}^m\kappa_i\right),\qquad i=1,\ldots,m,
\end{align*}
where $\mathrm{sign}$ is equal to 1, $-1$, or 0 according to whether the argument is positive, negative, or zero respectively. Since $\sum_{i=1}^m\kappa_i$ is the winding number of $\det(M(z))$ as $z$ traverses the unit circle counter-clockwise \cite[Theorem VIII.3.1 (c)]{gf}, we can easily determine the sign of the partial indices of elements of $\mathcal{W}^{m\times m}_\circ$ by looking at the winding number. The importance of this fact will become clear in the next section when we combine it with the following fact.

\begin{thm}\label{thm:generic}
If $\mathcal{W}^{m\times m}$ is endowed with the $\mu$--essential supremum norm then $\mathcal{W}^{m\times m}_\circ$ is open and dense in $\{M\in\mathcal{W}^{m\times m}: \det(M(z))\neq0\text{ for all }z\in\mathbb{T}\}$.
\end{thm}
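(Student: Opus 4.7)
The plan is to establish openness and density separately, both leveraging results from the Wiener-Hopf factorization literature in \cite{gf}.

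For openness, fix $M \in \mathcal{W}^{m\times m}_\circ$ with partial indices $\kappa_1 \geq \cdots \geq \kappa_m$. Since elements of $\mathcal{W}^{m\times m}$ are uniform limits of trigonometric polynomials, they are continuous on $\mathbb{T}$ and $\|\cdot\|_\infty$ coincides with the ordinary supremum norm. Non-vanishing of $\det M$ on the compact unit circle forces $|\det M(z)| \geq \delta > 0$ uniformly; since $\det$ is polynomial in the entries, any $\tilde M$ sufficiently close to $M$ in $\|\cdot\|_\infty$ inherits non-vanishing determinant on $\mathbb{T}$. The nontrivial ingredient is then the classical Gohberg-Krein stability theorem from \cite{gf}, which asserts that the partial indices of $M$ are stable under small $\|\cdot\|_\infty$ perturbations precisely when $\kappa_1 - \kappa_m \leq 1$. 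Applied here, this yields a neighbourhood of $M$ on which the partial indices coincide exactly with those of $M$, so the whole neighbourhood lies in $\mathcal{W}^{m\times m}_\circ$.

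For density, let $M \in \mathcal{W}^{m\times m}$ and $\varepsilon > 0$ be given, and proceed in two stages. First, I would approximate $M$ by a matrix with non-vanishing determinant: because $\{M(z) : z \in \mathbb{T}\}$ is a norm-compact family of $m \times m$ matrices, the union of their spectra is compact in $\mathbb{C}$, so one can choose $\lambda$ with $|\lambda| < \varepsilon/2$ such that $-\lambda$ avoids every such spectrum, making $\det(M + \lambda I)(z) \neq 0$ for all $z \in \mathbb{T}$. Second, given a matrix with non-vanishing determinant whose partial indices satisfy $\kappa_1 - \kappa_m \geq 2$, I would construct a small perturbation that strictly reduces the spread. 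Using a Wiener-Hopf factorization $M = M_+ M_0 M_-$, it suffices to perturb $M_0 = \operatorname{diag}(z^{\kappa_1}, \ldots, z^{\kappa_m})$ by a small off-diagonal entry bridging an extreme pair of indices; the resulting matrix still has the same winding number of its determinant and hence the same sum of partial indices, but the added off-diagonal term forces the new partial indices strictly inward. Conjugating back by $M_+$ and $M_-$ yields a small perturbation of $M$ with smaller spread; iterating finitely often brings the spread down to at most one, with total perturbation arbitrarily small.

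The main obstacle is the second stage of the density argument: exhibiting an explicit small perturbation that demonstrably reduces $\kappa_1 - \kappa_m$. This requires a careful analysis of how partial indices respond to off-diagonal bordering of the diagonal factor, which is the technical heart of the Gohberg-Krein genericity results in \cite{gf}; the constructive survey \cite{constructive} provides additional algorithmic perspective. Everything else, namely continuity of determinants, stability in the balanced regime, and first-stage approximation by non-singular matrices, reduces to relatively standard compactness and continuity considerations.
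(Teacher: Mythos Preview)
The paper's own proof is nothing more than a pointer to Theorems 1.20 and 1.21 of \cite{gks}; your sketch is essentially an unpacking of the classical Gohberg--Krein argument that underlies those theorems. Openness via the stability criterion $\kappa_1-\kappa_m\leq 1$, density via first perturbing to non-singularity on $\mathbb{T}$ and then perturbing the diagonal factor $M_0$ by a small off-diagonal entry to pull the extreme indices inward---this is precisely the route taken in the cited literature, and you have correctly identified the one genuinely nontrivial step (that the off-diagonal perturbation strictly reduces the spread) and deferred it to \cite{gf}. In short, your proposal is correct and coincides in substance with what the paper invokes, only with more of the mechanism made visible.
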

\begin{proof}
See Theorems 1.20 and 1.21 of \cite{gks}.
\end{proof}

Theorem \ref{thm:generic} implies that the generic or typical element of $\mathcal{W}^{m\times m}$ that admits a Wiener-Hopf factorization is an element of $\mathcal{W}_\circ^{m\times m}$. Said differently, the elements of $\mathcal{W}^{m\times m}\backslash\mathcal{W}_\circ^{m\times m}$ are non-generic or exceptional in the space of Wiener-Hopf factorizable elements of $\mathcal{W}^{m\times m}$.

To see the role played by Wiener-Hopf factorization, consider system \eqref{eq:lremsimplefreq} again. In the first case, $M$ factorized as
\begin{align*}
M_+(z)&=1-\delta^{-1}z,& M_0(z)&=1& M_-(z)&=-a\delta(1-\gamma z^{-1}).\\
\intertext{In the second case, $M$ factorized as}
M_+(z)&=1,& M_0(z)&=z& M_-(z)&=a(1-\delta z^{-1})(1-\gamma z^{-1}).\\
\intertext{In the third case, $M$ factorized as}
M_+(z)&=a\delta\gamma (1-\delta^{-1}z)(1-\gamma^{-1}z),& M_0(z)&=z^{-1}& M_-(z)&=1.
\end{align*}
In the fourth case, $M(w)=0$ for some $w\in\mathbb{T}$ and there does not exist a solution in general. Notice that the case of existence and uniqueness is associated with a partial index of zero, the case of existence and non-uniqueness is associated with a positive partial index, and the cases of non-existence in general are associated with a negative partial index and/or a zero of $M(z)$. These associations are not accidental as we will see in the next section.

\section{Existence and Uniqueness}\label{sec:eu}
Given the Wiener-Hopf factorization techniques, we can now address the general LREM problem in the frequency domain. Our first task is to define existence and uniqueness of solutions to the LREM problem in the time domain.

\begin{defn}\label{defn:formal}
An LREM is a pair $(M,N)\in\mathcal{W}^{m\times m}\times\mathcal{W}^{m\times n}$, expressed formally as
\begin{align}
\sum_{s=-\infty}^\infty M_sE_tX_{t+s}=\sum_{s=-\infty}^\infty N_sE_t\xi_{t+s},\qquad t\in\mathbb{Z}\label{eq:lrem}
\end{align}
where $\xi$ is exogenous and $X$ is endogenous.
\end{defn}

Expression \eqref{eq:lrem} is understood as a formal set of structural equations whose mathematical meaning we will postpone to Definition \ref{defn:lrem} below. The structural equations relate current, expected, and lagged values of the endogenous process $X$ to current, expected, and lagged values of an exogenous process $\xi$. In particular, $E_tX_{t+s}$ is understood to mean the expectation of $X_{t+s}$ at time $t$ if $s>0$ and $X_{t-|s|}$ if $s\leq0$; $E_t\xi_{t+s}$ is understood similarly. The notion of expectation pertinent to us will be made clear in Definition \ref{defn:lrem}. A textbook example is the New Keynesian LREM,
\begin{align}
\begin{aligned}
X_{1t}&=\theta_1 E_tX_{1,t+1}+\theta_2 X_{2t}\\
X_{2t}&=E_tX_{2,t+1}+\theta_3 X_{3t}\\
X_{3t}&=\theta_4 X_{1,t-1}+\theta_5 X_{2,t-1}+\theta_6\xi_{1t},\label{eq:nk}
\end{aligned}
\end{align}
where $X_1$, $X_2$, and $X_3$ consists of inflation, output, and the interest rate respectively; $\xi$ is interpreted as a policy shock; and $\theta=(\theta_1,\ldots,\theta_6)$ is a parameter. The first equation of \eqref{eq:nk} is the New Keynesian Phillips curve, relating current inflation to one-period-ahead expected inflation and current output, and results from the inter-temporal optimization of firms; the second equation relates current output to one-period-ahead expected output and the interest rate and results from the inter-temporal optimization of households; finally, the third equation is the policy rule by which the monetary authority sets the interest rate (see e.g. \cite{galibook}). LREM \eqref{eq:nk} is represented by the pair,
\begin{align*}
M(z,\theta)=\left[\begin{array}{ccc}
1-\theta_1z & -\theta_2 & 0 \\
0 & 1-z & -\theta_3\\
-\theta_4z^{-1} & -\theta_5z^{-1} & 1
\end{array}\right],\qquad
N(z,\theta)=\left[\begin{array}{c}
0 \\ 0 \\ \theta_6
\end{array}\right].
\end{align*}
Occasionally, researchers also consider solutions driven by sunspots, economic forces that do not appear explicitly in the system (their associated columns of $N_s$ are equal to zero) but are nevertheless considered to influence the behavior of the solution; we discuss these solutions further in Section \ref{sec:regular}.

The class of models considered in Definition \ref{defn:formal} includes the class of structural equation models ($M$ and $N$ are constant), the class of VARMA models ($M$ and $N$ are matrix polynomials in $z^{-1}$), and all of the LREMs considered in \cite{canova}, \cite{dd}, and \cite{hs}. For example, the model of \cite{sw}, studied in Section 6.2 of \cite{hs}, consists of $m=14$ endogenous variables and $n=7$ exogenous shocks.

Now in order to endow \eqref{eq:lrem} with mathematical meaning, we must first note that LREMs take as inputs not only realizations of exogenous inputs $\xi$ but also a spectral measure $F$.

\begin{center}
\begin{tikzpicture}
\node (i) at (-2.5,0) {$(\,\ldots,\xi_{t-1}(\omega),\xi_t(\omega),\xi_{t+1}(\omega),\ldots,F)$};
\node (t) at (2,0) [rectangle,draw,thick,minimum size=10mm] {$(M,N)$};
\node (o) at (6.5,0) {$(\,\ldots,X_{t-1}(\omega),X_t(\omega),X_{t+1}(\omega),\ldots\,)$};
\draw[->,thick] (i) -- (t);
\draw[->,thick] (t) -- (o);
\end{tikzpicture}
\end{center}
That is because the output $X$ of an LREM solves a system of equations in past, present, as well as expected values of $X$. In the time domain perspective on LREMs, the role of $F$ is played by a filtration with respect to which conditional expectations can be computed \cite[Definition 4.2]{linsys}. Here, conditional expectations are substituted by linear projections, the more convenient analogue to expectations in the frequency domain and, in order to compute projections, $F$ needs to be specified as well. We will see that the transfer function of solutions requires the triple $(M,N,F)$, while the realizations of the output require, additionally, the realizations of the inputs. In following this system-theoretic approach to LREMs, therefore, we will often use phrases such as ``for every spectral measure $F$'' or ``for every covariance stationary process $\xi$''.

\begin{defn}\label{defn:lrem}
Let $\xi$ be a zero-mean, $n$-dimensional, covariance stationary process with spectral measure $F$ and let $(M,N)$ be an LREM. A solution to $(M,N)$ (or equation \eqref{eq:lrem}) is an $m$-dimensional covariance stationary process $X$, causal in $\xi$, and satisfying
\begin{align}
\sum_{s=-\infty}^\infty M_s\boldsymbol{P}(X_{t+s}|\mathscr{H}_t^m)=\sum_{s=-\infty}^\infty N_s\boldsymbol{P}(\xi_{t+s}|\mathscr{H}_t^n),\qquad t\in\mathbb{Z},\label{eq:lremsol}
\end{align}
where the series converge in $\mathscr{H}^m$. We say that $(M,N)$ has no solution in general if it is possible to find a $\xi$ such that no solution to $(M,N)$ exists. A solution $X$ is unique if whenever $Y$ is also a solution, then $X_t=Y_t$ for all $t\in\mathbb{Z}$.
\end{defn}

By Definition \ref{defn:lrem}, if a unique solution (or at least a unique selection from among all possible solutions) exists for every covariance stationary $\xi$, then the LREM is a mathematical system that transforms arbitrary covariance stationary inputs into covariance stationary outputs \citep{kalmanetal,kailath,hd,sontag,caines}. If no solution exists for a given $\xi$, then the LREM is said to have no solution in general (it will often be convenient to consider purely deterministic $\xi$ to show that no solution exists in general). The restriction to covariance stationarity involves no loss of generality as $X$ and $\xi$ describe deviations away from steady state in modern LREMs (see e.g. \cite{galibook}; non-stationary LREMs are studied in \cite{linsys}).

Definition \ref{defn:lrem} also restricts attention to causal solutions. This is because the purpose of an LREM is to explain the behavior of economic variables in terms of past and present economic shocks and to obtain impulse responses. It makes little sense to consider models where current inflation is determined by a future shock to technology, for example. Note, however, that we do not impose invertibility of $\xi$ in terms of $X$ as it is not necessary for our purposes, although it does become necessary for estimation purposes (see \cite{ident}).

As in the previous section, it suffices to solve the $t=0$ equation of \eqref{eq:lremsol},
\begin{align*}
\sum_{s=-\infty}^\infty M_s\boldsymbol{P}(X_s|\mathscr{H}_0^m)=\sum_{s=-\infty}^\infty N_s\boldsymbol{P}(\xi_s|\mathscr{H}_0^n).
\end{align*}
For if $X$ is a covariance stationary process causal in $\xi$ and satisfies this system, applying the forward shift operator $t$ times to each equation we obtain \eqref{eq:lremsol}. Of course, the forward shift operator commutes with the summation because the sum converges in $\mathscr{H}^m$. Let $X_0$ have the spectral characteristics $\phi\in H_0^m$. Then the frequency domain equivalent is
\begin{align}
\sum_{s=-\infty}^\infty M_s\boldsymbol{P}(z^s\phi|H_0^m)=\sum_{s=-\infty}^\infty N_s\boldsymbol{P}(z^sI_n|H_0^n).\label{eq:lremfreq}
\end{align}
Classical frequency domain theory is built on the backwards shift operator $\phi\mapsto z^{-1}\phi$. In order to solve for $\phi$ in \eqref{eq:lremfreq}, we will need an additional, closely related, operator.

\begin{defn}
Define $\boldsymbol{V},\boldsymbol{V}^{(-1)}:H_0\rightarrow H_0$ to be the operators
\begin{align*}
\boldsymbol{V}:\phi\mapsto\boldsymbol{P}(z\phi|H_0),& & \boldsymbol{V}^{(-1)}:\phi\mapsto z^{-1}\phi.
\end{align*}
If $\kappa\geq1$ (resp.\ $\kappa\leq-1$), we denote by $\boldsymbol{V}^\kappa$ the operator $\boldsymbol{V}$ (resp.\ $\boldsymbol{V}^{(-1)}$) composed with itself $|\kappa|$ times and we define $\boldsymbol{V}^0=\boldsymbol{I}$, the identity mapping.
\end{defn}
The following lemma (proof omitted) lists the most important properties of $\boldsymbol{V}$ and $\boldsymbol{V}^{(-1)}$.

\begin{lem}\label{lem:V}
The operators $\boldsymbol{V}$ and $\boldsymbol{V}^{(-1)}$ have the following properties:
\begin{enumerate}
\item $\boldsymbol{V}^\ast=\boldsymbol{V}^{(-1)}$.
\item $(\boldsymbol{V}\phi,\boldsymbol{V}\phi)\leq(\phi,\phi)$ for all $\phi\in H_0$.
\item $\left(\boldsymbol{V}^{(-1)}\phi,\boldsymbol{V}^{(-1)}\varphi\right)=(\phi,\varphi)$ for all $\phi,\varphi\in H_0$.
\item $\boldsymbol{V}^{(-1)}$ is a right inverse of $\boldsymbol{V}$.
\item $\ker(\boldsymbol{V})=H_0\ominus z^{-1}H_0$.
\item $\dim\ker(\boldsymbol{V})\leq n$.
\item For all $\kappa\in\mathbb{Z}$ and $\phi\in H_0$, $\boldsymbol{V}^\kappa\phi=\boldsymbol{P}(z^\kappa\phi|H_0)$.
\item For $\kappa>1$, $\ker(\boldsymbol{V}^\kappa)=\ker(\boldsymbol{V})\oplus \boldsymbol{V}^{(-1)}\ker(\boldsymbol{V})\oplus\cdots\oplus \boldsymbol{V}^{1-\kappa}\ker(\boldsymbol{V})$.
\end{enumerate}
\end{lem}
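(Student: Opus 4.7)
My plan is to dispatch the seven items largely by unwinding definitions, with the main substance concentrated in (v)--(vii), where I need to identify $\ker \boldsymbol{V}^\kappa$ explicitly.

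First I would handle (i)--(iv) together. For (i), for $\phi,\varphi \in H_0$, the self-adjointness of the projection onto $H_0$ gives $(\boldsymbol{V}\phi,\varphi) = (\boldsymbol{P}(z\phi|H_0),\varphi) = (z\phi,\varphi)$; then using that $|z|=1$ on $\mathbb{T}$ so $\bar z = z^{-1}$, I would verify $(z\phi,\varphi) = \int z\phi\, dF\, \varphi^* = \int \phi\, dF\,(z^{-1}\varphi)^* = (\phi,\boldsymbol{V}^{(-1)}\varphi)$, using that $z^{-1}\varphi\in H_0$. Item (ii) then follows from $\|\boldsymbol{V}\phi\|_H \le \|z\phi\|_H = \|\phi\|_H$ (projection is a contraction, multiplication by $z$ is an isometry). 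Item (iii) is the same isometry statement. For (iv), $\boldsymbol{V}\boldsymbol{V}^{(-1)}\phi = \boldsymbol{P}(z\cdot z^{-1}\phi|H_0) = \boldsymbol{P}(\phi|H_0) = \phi$ for $\phi\in H_0$.

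For (v), the key identity is $(z\phi,\psi) = (\phi,z^{-1}\psi)$ from (i). Thus $\phi\in\ker\boldsymbol{V}$ iff $z\phi \perp H_0$ iff $(\phi,z^{-1}\psi)=0$ for all $\psi\in H_0$ iff $\phi\perp z^{-1}H_0$, giving $\ker\boldsymbol{V} = H_0\ominus z^{-1}H_0$. For (vi), I would argue that in the quotient $H_0/z^{-1}H_0$ the generators $z^s e_j$ with $s\le -1$ map to zero, so the quotient is the closed linear span of the images of $e_1,\ldots,e_n$, hence at most $n$-dimensional. The isometric isomorphism between $H_0/z^{-1}H_0$ and $H_0\ominus z^{-1}H_0$ then gives the bound on the dimension of $\ker\boldsymbol{V}$.

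The main work is (vii). My plan is first to prove by induction that $\boldsymbol{V}^\kappa\phi = \boldsymbol{P}(z^\kappa\phi|H_0)$ for $\phi\in H_0$ and $\kappa\ge 1$. The inductive step uses that, writing $z\phi = \boldsymbol{P}(z\phi|H_0) + r$ with $r\perp H_0$, the isometry property of multiplication by $z$ yields $zr\perp zH_0 \supset H_0$; so $\boldsymbol{P}(z^{\kappa+1}\phi|H_0) = \boldsymbol{P}(z^\kappa\boldsymbol{P}(z\phi|H_0)|H_0) = \boldsymbol{V}^\kappa\boldsymbol{V}\phi$. Next, the same adjoint argument as in (v) gives $\ker\boldsymbol{V}^\kappa = H_0\ominus z^{-\kappa}H_0$. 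Since $z^{-1}H_0\subset H_0$, the chain $H_0\supset z^{-1}H_0\supset\cdots\supset z^{-\kappa}H_0$ yields the orthogonal decomposition $H_0\ominus z^{-\kappa}H_0 = \bigoplus_{j=0}^{\kappa-1}(z^{-j}H_0\ominus z^{-j-1}H_0)$. Finally, using the isometry of $\boldsymbol{V}^{(-1)}$ from (iii), each summand satisfies $z^{-j}H_0\ominus z^{-j-1}H_0 = z^{-j}(H_0\ominus z^{-1}H_0) = \boldsymbol{V}^{-j}\ker\boldsymbol{V}$, which gives the claimed decomposition. The only point that requires care here, and is probably the conceptual obstacle, is verifying the identity $\boldsymbol{V}^\kappa = \boldsymbol{P}(z^\kappa\,\cdot\,|H_0)$; once that is in hand, everything else is orthogonal-complement bookkeeping in a nested chain of closed subspaces.
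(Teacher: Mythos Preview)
The paper omits the proof of this lemma entirely, so there is no author's argument to compare against. Your proposal is correct and fills the gap cleanly: items (i)--(iv) are routine, (v)--(vi) are handled by the adjoint/quotient argument you give, and your treatment of (vii) via the identity $\boldsymbol{V}^\kappa\phi=\boldsymbol{P}(z^\kappa\phi|H_0)$ together with the telescoping orthogonal decomposition of $H_0\ominus z^{-\kappa}H_0$ is exactly the natural route. The one step that deserves a word of justification---and you already flag it---is the inclusion $zH_0\supset H_0$ used to propagate $r\perp H_0\Rightarrow z^\kappa r\perp H_0$; this is immediate from the paper's identification $H_t=z^tH_0$ and $H_1\supset H_0$.
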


Lemma \ref{lem:V} (i) states that the backwards shift operator $\boldsymbol{V}^{(-1)}$ is the adjoint of $\boldsymbol{V}$. Lemma \ref{lem:V} (ii) implies that the operator norm of $\boldsymbol{V}$ is bounded above by 1. Lemma \ref{lem:V} (iii) implies that $\boldsymbol{V}^{(-1)}$ is an isometry \cite[Theorem X.3.1]{basic}. This implies that the operator norm of $\boldsymbol{V}^{(-1)}$ is equal to 1. Lemma \ref{lem:V} (iv) establishes that $\boldsymbol{V}$ is right-invertible. Lemma \ref{lem:V} (v) clarifies the obstruction to left-invertibility of $\boldsymbol{V}$ as $\ker(\boldsymbol{V})$ may be non-trivial. For example, when $F=\mu$ and $\phi\in\mathbb{C}$, then $\boldsymbol{V}^{(-1)}\boldsymbol{V}(\phi)=\boldsymbol{P}(\,\phi\,|z^{-1},z^{-2},\ldots\,)=0$. Since $H_0\ominus z^{-1}H_0$ is the set of spectral characteristics associated with innovations to $\xi$, $\ker(\boldsymbol{V})=\{0\}$ if and only if $\xi$ is purely deterministic. Lemma \ref{lem:V} (vi) expresses the intuitive fact that the dimension of the innovation space of a covariance stationary process is bounded above by the dimension of the process. Lemma \ref{lem:V} (vii) is a convenient expression for iterates of the $\boldsymbol{V}$ operator. Finally, Lemma \ref{lem:V} (viii) decomposes the kernel of $\boldsymbol{V}^\kappa$ into a direct sum generated by the kernel of $\boldsymbol{V}$. It follows, since $\boldsymbol{V}^{(-1)}$ is an isometry, that $\dim\ker(\boldsymbol{V}^\kappa)=\kappa\dim\ker(\boldsymbol{V})$. The time-domain analogue of the decomposition in Lemma \ref{lem:V} (viii) is the familiar one from \cite{brozeetal85,brozeetal95}, where a process $\nu$ causal in $\xi$ satisfies
\begin{align*}
\boldsymbol{P}(\nu_{t+\kappa}|\mathscr{H}_t^n)=0,\qquad t\in\mathbb{Z},
\end{align*}
if and only if
\begin{align*}
\nu_{t+\kappa}=\sum_{s=1}^\kappa \boldsymbol{P}(\nu_{t+\kappa}|\mathscr{H}_{t+s}^n)-\boldsymbol{P}(\nu_{t+\kappa}|\mathscr{H}_{t+s-1}^n),\qquad t\in\mathbb{Z}.
\end{align*}
That is, if and only if $\nu_{t+\kappa}$ is representable as the sum of the prediction revisions between $t+1$ and $t+\kappa$ for all $t\in\mathbb{Z}$.

The fact that the operators $\boldsymbol{V}^s$ are uniformly bounded in the operator norm by 1 (Lemma \ref{lem:V} (i) and (ii)) ensures that $\sum_{s=-\infty}^\infty M_s\boldsymbol{V}^s$ is a bounded linear operator on $H_0$ whenever $\sum_{s=-\infty}^\infty M_sz^s\in\mathcal{W}$ \cite[Theorem I.3.2]{classes1}. More generally, we have the following definition, adopted from \cite{gf}.

\begin{defn}\label{defn:MV}
For $M\in\mathcal{W}^{m\times n}$ with $ij$-th element $M_{ij}$, define $\boldsymbol{M}_{ij}:H_0\rightarrow H_0$ as
\begin{align*}
\boldsymbol{M}_{ij}&=\sum_{s=-\infty}^\infty M_{sij}\boldsymbol{V}^s,\qquad i=1,\ldots, m,\qquad j=1,\ldots, n,
\end{align*}
where the series converges in the operator norm, and $\boldsymbol{M}:H_0^n\rightarrow H_0^m$ as
\begin{align*}
\boldsymbol{M}\phi&=\left[\begin{array}{ccc}
\boldsymbol{M}_{11}& \cdots & \boldsymbol{M}_{1n}\\
\vdots & & \vdots\\
\boldsymbol{M}_{m1}& \cdots & \boldsymbol{M}_{mn}
\end{array}\right]\left[\begin{array}{c}
\phi_1\\
\vdots\\
\phi_n
\end{array}\right]=\left[\begin{array}{c}
\sum_{j=1}^n \boldsymbol{P}(M_{1j}\phi_j|H_0)\\
\vdots\\
\sum_{j=1}^n \boldsymbol{P}(M_{mj}\phi_j|H_0)
\end{array}\right]=\boldsymbol{P}(M\phi|H_0^m).
\end{align*}
\end{defn}

For $M\in\mathcal{W}^{m\times n}$, $\|\boldsymbol{M}\phi\|_{H^m}=\|\boldsymbol{P}(M\phi|H_0^m)\|_{H^m}\leq\|M\phi\|_{H^m}\leq\|M\|_\infty\|\phi\|_{H^n}$, thus the operator norm of $\boldsymbol{M}$ is bounded by $\|M\|_\infty$. Note that by Lemma \ref{lem:V} (i), the adjoint of $\boldsymbol{M}$ is 
\begin{align*}
\boldsymbol{M}^\ast:\varphi\mapsto\boldsymbol{P}(M^\ast\varphi|H_0^n),
\end{align*}
where $M^\ast\in\mathcal{W}^{n\times m}$ is given by $(M(z))^\ast$ for $z\in\mathbb{T}$.

Definition \ref{defn:MV} allows us to express \eqref{eq:lremfreq} more compactly as,
\begin{align}
\boldsymbol{M}\phi=\boldsymbol{N}I_n.\label{eq:lremfreqcompact}
\end{align}
Recall that the spectral characteristic of $\xi$ is $I_n$. Thus, we have arrived at a linear equation in the Hilbert space $H_0^m$. Equations \eqref{eq:ar1freq}, \eqref{eq:0func}, and \eqref{eq:lremsimplefreq} are clearly special cases of \eqref{eq:lremfreqcompact}. It is also instructive to consider the special case where $(M,N)$ is a VARMA model, so $\boldsymbol{M}\phi=\boldsymbol{P}(M\phi|H_0^m)=M\phi$ and $\boldsymbol{N}I_n=\boldsymbol{P}(N|H_0^m)=N$, and the system reduces to $M\phi=N$, a problem that is very well understood \citep[Sections 1.1-1.2]{hd}.

A more delicate analysis is required for the general case. Luckily Section \ref{sec:examples} hints towards a solution. First, we obtain a Wiener-Hopf factorization,
\begin{align*}
M=M_+M_0M_-.
\end{align*}
By Theorem \ref{thm:whf}, this factorization exists if $\det(M(z))\neq0$ for all $z\in\mathbb{T}$. Then $\boldsymbol{M}_+$, $\boldsymbol{M}_0$, and $\boldsymbol{M}_-$ can be defined as in Definition \ref{defn:MV} and it is easily checked that
\begin{align*}
\boldsymbol{M}=\boldsymbol{M}_+\boldsymbol{M}_0\boldsymbol{M}_-,
\end{align*}
a fact that at first seems trivial until one recalls that $\boldsymbol{V}^i\boldsymbol{V}^j$ is not generally equal to $\boldsymbol{V}^{i+j}$ when $i<0<j$ (Lemma \ref{lem:V} (iv) and (v)). Then \eqref{eq:lremfreqcompact} can be broken up into the system of equations,
\begin{align}
\boldsymbol{M}_+\psi=\boldsymbol{N}I_n, && \boldsymbol{M}_0\varphi=\psi,&& \boldsymbol{M}_-\phi=\varphi.\label{eq:factorization}
\end{align}
Solving each of these systems is completely understood in the Wiener-Hopf factorization literature \citep{gf,cg}. Indeed, the first system of equations involves only the $\boldsymbol{V}$ operator and is solved as in the Cagan model, while the third system of equations involves only the $\boldsymbol{V}^{(-1)}$ operator and is solved as in the autoregressive model. To see this, note that since $M_+^{-1},M_-^{-1}\in\mathcal{W}^{m\times m}$, the operators
\begin{align*}
\boldsymbol{M}_+^{-1}:\phi\mapsto\boldsymbol{P}(M_+^{-1}\phi|H_0^m)& &\boldsymbol{M}_-^{-1}:\phi\mapsto\boldsymbol{P}(M_-^{-1}\phi|H_0^m)
\end{align*}
are bounded and linear on $H_0^m$. By Lemma \ref{lem:V} (vii), since $M_+, M_+^{-1}\in\mathcal{W}_+^{m\times m}$, we have that for all $\phi\in H_0^m$,
\begin{align*}
\boldsymbol{M}_+\boldsymbol{M}_+^{-1}\phi&=\boldsymbol{P}(M_+\boldsymbol{P}(M_+^{-1}\phi|H_0^m)|H_0^m)=\boldsymbol{P}(M_+M_+^{-1}\phi|H_0^m)=\boldsymbol{P}(\phi|H_0^m)=\phi\\
\boldsymbol{M}_+^{-1}\boldsymbol{M}_+\phi&=\boldsymbol{P}(M_+^{-1}\boldsymbol{P}(M_+\phi|H_0^m)|H_0^m)=\boldsymbol{P}(M_+^{-1}M_+\phi|H_0^m)=\boldsymbol{P}(\phi|H_0^m)=\phi.
\end{align*}
On the other hand, since $M_-,M_-^{-1}\in\mathcal{W}_-^{m\times m}$, $\boldsymbol{M}_-:\phi\mapsto M_-\phi$ and $\boldsymbol{M}_-^{-1}:\phi\mapsto M_-^{-1}\phi$ for every $\phi\in H_0^m$ (i.e.\ they are multiplication operators). It follows that
\begin{align*}
\boldsymbol{M}_-\boldsymbol{M}_-^{-1}\phi&=M_-M_-^{-1}\phi=\phi\\
\boldsymbol{M}_-^{-1}\boldsymbol{M}_-\phi&=M_-^{-1}M_-\phi=\phi.
\end{align*}
We have established that
\begin{align*}
\boldsymbol{M}_+ \boldsymbol{M}_+^{-1}=\boldsymbol{M}_+^{-1}\boldsymbol{M}_+=\boldsymbol{M}_- \boldsymbol{M}_-^{-1}=\boldsymbol{M}_-^{-1}\boldsymbol{M}_-=\boldsymbol{I}
\end{align*}
where $\boldsymbol{I}$ is the identity mapping on $H_0^m$. The first and third equations of \eqref{eq:factorization} are therefore uniquely solvable as
\begin{align*}
\psi=\boldsymbol{M}_+^{-1}\boldsymbol{N}I_n,&& \phi=\boldsymbol{M}_-^{-1}\varphi.
\end{align*}
It remains to solve for $\varphi$ in \eqref{eq:factorization}. Since $\boldsymbol{M}_0$ is a diagonal operator matrix with $i$-th diagonal entry equal to $\boldsymbol{V}^{\kappa_i}$, existence and uniqueness of solutions to \eqref{eq:lremfreqcompact} depend on the partial indices of $M$. Notice that if all of the partial indices of $M$ are equal to zero, then $\boldsymbol{M}_0=\boldsymbol{I}$, and \eqref{eq:lremfreqcompact} has the unique solution $\phi=\boldsymbol{M}_-^{-1}\boldsymbol{M}_+^{-1}\boldsymbol{N}I_n$. If all the partial indices of $M$ are non-negative, the operators $\boldsymbol{V}^{\kappa_i}$ above are right invertible, which implies that $\boldsymbol{M}_0$ is right invertible; in that case, a right inverse of $\boldsymbol{M}_0$ is given as
\begin{align}
\boldsymbol{M}_0^{(-1)}=\left[\begin{array}{ccccc}
\boldsymbol{V}^{(-\kappa_1)} & 0 & \cdots & 0\\
0 & \boldsymbol{V}^{(-\kappa_2)} & \cdots & 0\\
\vdots & \vdots & \ddots & \vdots \\
0 & 0 & \cdots & \boldsymbol{V}^{(-\kappa_m)}
\end{array}\right].\label{eq:m0inv}
\end{align}
If $\kappa_1>0$ and $\kappa_m\geq0$, there are generally infinitely many other right inverses of $\boldsymbol{M}_0$ and therefore infinitely many right inverses of $\boldsymbol{M}$; indeed, every right inverse of $\boldsymbol{M}$ is of the form $\boldsymbol{M}_-^{-1}\boldsymbol{M}_0^{(-1)}\boldsymbol{M}_+^{-1}$ for some right inverse $\boldsymbol{M}_0^{(-1)}$ of $\boldsymbol{M}_0$. 

The theoretical foundations for existence and uniqueness are now complete and all that remains is to apply well-known cookie-cutter results from functional analysis along with the simple techniques we employed in Section \ref{sec:examples}.

\begin{lem}\label{lem:existencephi}
If $(M,N)$ is an LREM, $\det(M(z))\neq0$ for all $z\in\mathbb{T}$, and $M$ has a Wiener-Hopf factorization $M_+M_0M_-$, then for every spectral measure $F$, $\boldsymbol{M}$ is one-to-one if the partial indices of $M$ are non-positive and $\boldsymbol{M}$ is onto if the partial indices of $M$ are non-negative. In the latter case, the general form of solutions to \eqref{eq:lremfreqcompact} is
\begin{align}
\phi=\boldsymbol{M}^{(-1)}\boldsymbol{N}I_n+\boldsymbol{M}_-^{-1}\psi,\label{eq:solphi}
\end{align}
where
\begin{align}
\boldsymbol{M}^{(-1)}=\boldsymbol{M}_-^{-1}\boldsymbol{M}_0^{(-1)}\boldsymbol{M}_+^{-1},\label{eq:Minv}
\end{align}
$\boldsymbol{M}_0^{(-1)}$ is a right inverse of $\boldsymbol{M}_0$, 
$\psi\in \ker(\boldsymbol{M}_0)$, and $\dim(\ker(\boldsymbol{M}))=\dim(\ker(\boldsymbol{V}))\sum_{i=1}^m\kappa_i$.
\end{lem}
\begin{proof} 
Proposition $2^\circ$ of Section VIII.4 of \cite{gf} applied to $M^\ast$ together with Lemma \ref{lem:V} imply that
\begin{align*}
\dim(\mathrm{coker}(\boldsymbol{M}))=-\dim(\ker(\boldsymbol{V}))\sum_{\kappa_i<0}\kappa_i,& &\dim(\ker(\boldsymbol{M}))=\dim(\ker(\boldsymbol{V}))\sum_{\kappa_i>0}\kappa_i.
\end{align*}
Thus, $\boldsymbol{M}$ is one-to-one if $\kappa_1\leq 0$ and onto if $\kappa_m\geq0$. In the latter case, \eqref{eq:solphi} is clearly a solution to \eqref{eq:lremfreqcompact}. On the other hand, if $\phi$ is a solution to \eqref{eq:lremfreqcompact}, define
\begin{align*}
\psi=\boldsymbol{M}_-\phi-\boldsymbol{M}_0^{(-1)}\boldsymbol{M}_+^{-1}\boldsymbol{N}I_n.
\end{align*}
Clearly $\psi\in H_0^m$. Finally,
\begin{equation*}
\boldsymbol{M}_0\psi=\boldsymbol{M}_0\boldsymbol{M}_-\phi-\boldsymbol{M}_+^{-1}\boldsymbol{N}I_n=\boldsymbol{M}_+^{-1}(\boldsymbol{M}\phi-\boldsymbol{N}I_n)=0.\qedhere
\end{equation*}
\end{proof}

Basic linear algebra implies that there exists a solution to \eqref{eq:lremfreqcompact} if and only if $\boldsymbol{N}I_n$ is in the image of $\boldsymbol{M}$ and a solution is unique if and only if $\ker(\boldsymbol{M})=\{0\}$. Thus, Lemma \ref{lem:existencephi} provides a sufficient condition for existence and necessary and sufficient conditions for uniqueness of solutions to \eqref{eq:lremfreqcompact} irrespective of the exogenous inputs. Recall that we insist on conditions invariant with respect to $F$ in keeping with our system-theoretic approach to LREMs. Of course, restricting attention to a particular $F$, we can say slightly more. For example, if the partial indices are non-negative and $F$ has the property that $\ker(\boldsymbol{V})=\{0\}$ then by Lemma \ref{lem:V} (viii), $\ker(\boldsymbol{M}_0)=\{0\}$ and there is a unique solution to \eqref{eq:lremfreqcompact}. That is to say, there can be no multiplicity of solutions for a perfectly predictable $\xi$. This point is made in a different context in \cite{linsys}, p.\ 641.

In the special case of a VARMA model, Lemma \ref{lem:existencephi} reduces to the classical result that if $\det(M(z))\neq0$ for all $|z|\geq1$, there exists a unique causal covariance stationary solution for every spectral measure $F$ \citep[Sections 1.1-1.2]{hd}. This is due to the fact that in that case $M_+=M_0=I_m$ and $M_-=M$ is a Wiener-Hopf factorization of $M$. Note that the problem of uniqueness does not arise in the case of VARMA.

Consider the case of existence and non-uniqueness ($\kappa_1>0$ and $\kappa_m\geq0$) more closely. Lemma \ref{lem:V} (v) and (viii) imply that $\ker(\boldsymbol{M}_0)$ is made up of spectral characteristics corresponding to arbitrary innovation processes (see Appendix \ref{sec:setsol} for a detailed analysis). Thus, Lemma \ref{lem:existencephi} states that the set of all solutions to \eqref{eq:lremfreqcompact} is generated by spectral characteristics corresponding to arbitrary innovation processes. The dimension of $\ker(\boldsymbol{M})$ is the dimension of the innovation space of $\xi$ multiplied by the winding number of $\det(M)$. This number is obtained in \cite{funovits,funovits2020} based on the \cite{sims} framework (see also \cite{sorge}).

Lemma \ref{lem:existencephi} begs the question of what happens if zeros are present on the unit circle. This is addressed in the next result.

\begin{lem}\label{lem:nonexistencephi}
If $(M,N)$ is an LREM, $\det(M(w))=0$ for some $w\in\mathbb{T}$, and $\mathrm{rank}[\;M(w)\quad N(w)\;]=m$, there exists a spectral measure $F$ such that \eqref{eq:lremfreqcompact} has no solution in $H_0^m$.
\end{lem}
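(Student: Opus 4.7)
The plan is to construct an explicit spectral measure $F$ on $\mathbb{T}$ that witnesses non-existence, in the same spirit as the scalar examples of Section \ref{sec:examples}, where a root of $M$ on the unit circle was handled by taking $F$ to be a Dirac mass at that root. Specifically, I would set $F = \delta_w I_n$, corresponding to the perfectly deterministic process $\xi_t = \xi_0 w^t$ with $\boldsymbol{E}\xi_0\xi_0^\ast = I_n$. The key observation is that under this choice the Hilbert space $H$ collapses to a finite-dimensional object: since $\int \phi\, dF\, \phi^\ast = \phi(w)\phi(w)^\ast$, two elements of $H$ agree iff they agree at $w$, so $H$ can be identified with $\mathbb{C}^{1\times n}$ via $\phi\mapsto \phi(w)$.

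The next step is to verify that $H_0 = H$ under this choice. The generating set $\{z^s e_j : s\le 0,\, j = 1,\ldots,n\}$ for $H_0$ evaluates at $w$ to $\{w^s e_j\}$, which already contains the standard basis $\{e_j\}$ at $s=0$ and therefore spans all of $\mathbb{C}^n$. Hence $H_0 = H$ and $H_0^m = H^m$, so the projection $\boldsymbol{P}(\cdot|H_0^m)$ acts as the identity on $H^m$. Since $M\in\mathcal{W}^{m\times m}$ implies that $\sum_s M_s z^s$ is absolutely convergent at every point of $\mathbb{T}$, the left-hand side of \eqref{eq:lremfreqcompact} collapses under evaluation at $w$ to $\sum_s M_s w^s\phi(w) = M(w)\phi(w)$, and the equation reduces to the pointwise finite-dimensional linear equation $M(w)\phi(w) = \varphi(w)$ in $\mathbb{C}^{m\times n}$.

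The conclusion is then pure linear algebra. Because $\det(M(w)) = 0$, we have $\mathrm{rank}(M(w)) \le m-1$, so the column space of $M(w)$ is a proper subspace of $\mathbb{C}^m$; the rank hypothesis $\mathrm{rank}[\;M(w)\;\;\varphi(w)\;]=m$ then forces at least one column of $\varphi(w)$ to lie outside this subspace, and consequently no $X\in\mathbb{C}^{m\times n}$ satisfies $M(w)X = \varphi(w)$. Therefore no $\phi\in H_0^m$ can solve \eqref{eq:lremfreqcompact} for this choice of $F$, which is the desired non-existence statement.

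The main obstacle, such as it is, is more conceptual than technical: one must recognize that a Dirac spectral measure localizes the entire infinite-dimensional convolution equation to a single pointwise condition at $w$, bypassing all the Wiener-Hopf machinery that fails precisely because $\det(M(z))$ vanishes on $\mathbb{T}$. Once that reduction is made, the identifications $H_0 = H = \mathbb{C}^{1\times n}$ and the concluding rank argument are both routine; the only point meriting care is the convergence of the Wiener-series expansion at $z=w$, which is immediate from $M\in\mathcal{W}^{m\times m}$.
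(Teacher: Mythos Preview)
Your proposal is correct and follows essentially the same approach as the paper: both choose $F=\delta_w I_n$ to collapse \eqref{eq:lremfreqcompact} to the finite-dimensional system $M(w)\phi(w)=\varphi(w)$ and then invoke the rank hypothesis to conclude inconsistency. The only cosmetic difference is that the paper phrases the final step via a left null vector $x^\ast M(w)=0$ and derives $x^\ast\varphi(w)=0$ for a contradiction, whereas you argue via the column space of $M(w)$; these are equivalent formulations of the Rouch\'e--Capelli criterion.
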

\begin{proof}
Let $0\neq x\in\mathbb{C}^{1\times m}$ satisfy $x M(w)=0$ and choose $F=\delta_w I_n$. If a solution $\phi\in H_0^m$ to \eqref{eq:lremfreqcompact} exists, it must satisfy $\|\phi\|^2_{H^m}=\sum_{j=1}^m\int \phi_jdF\phi_j^\ast=\|\phi(w)\|^2_{\mathbb{C}^{m\times n}}<\infty$. Since $x M(z)\phi(z)=0$ for $F$--a.e.\ $z\in\mathbb{T}$, it follows that $x \boldsymbol{M}\phi=\boldsymbol{P}(x M\phi|H_0)=0$. This implies that $0=x\boldsymbol{N}I_n=\boldsymbol{P}(x N|H_0)$. Since $x N(z)=x N(w)$ for $F$--a.e.\ $z\in\mathbb{T}$ and $x N(w)\in H_0$, $x N(w)=\boldsymbol{P}(x N|H_0)=0$. This implies that $x [\;M(w)\quad N(w)\;]=0$, a contradiction.
\end{proof}

The basic idea behind Lemma \ref{lem:nonexistencephi} is that when $\det(M(w))=0$ for some $w\in\mathbb{T}$, the system has a form of instability akin to that of a resonance frequency in a mechanical system. When such a system is subjected to an input oscillating at frequency $\arg(w)$, its output cannot be covariance stationary. In particular, a mechanical system will oscillate with increasing amplitude until failure \citep[p.\ 183]{arnold}. In the parlance of system theory, the system is said to have an unstable mode \citep[Chapter 5]{sontag}.

The rank condition on $[\;M(w)\quad N(w)\;]$ in Lemma \ref{lem:nonexistencephi} permits inputs at frequency $\arg(w)$ to excite the instability in the system. In the VARMA literature it is typically assumed that $\mathrm{rank}[\;M(z)\quad N(z)\;]=m$ for all $z\in\mathbb{C}$ \citep[Chapter 2]{hd}. In the systems theory literature, similar conditions characterize controllability of the output in terms of the input \citep[Chapter 6]{kailath}. Without a condition of this sort, there may be no input that can excite the system's instability. For example, the LREM $(M,N)=(1-z^{-1},1-z^{-1})$ has a solution $\phi=1$ for any $F$. Note that this condition permits oscillatory inputs to excite instability but not necessarily white noise inputs. For example, the LREM $(M,N)=(-z+3-2z^{-1},3-2z^{-1})$ has $M(1)=0$ and $N(1)=1$ so that the conditions of Lemma \ref{lem:nonexistencephi} are satisfied but the instability of the system cannot be excited by a white noise input because, as is easily checked, $\phi=1$ is a solution to \eqref{eq:lremfreqcompact} when $F=\mu$. It is possible to formulate a different condition on $N$ that will permit white noise inputs to excite instability in the system and, indeed, much more can be said about stability in LREMs. However, a general analysis of stability of LREMs is outside the scope of this paper and is left for future research.

We are now ready to state the following result.

\begin{thm}[Onatski's First Theorem]\label{thm:onatski1}
If $(M,N)$ is an LREM, $\det(M(z))\neq0$ for all $z\in\mathbb{T}$, and $M$ has a Wiener-Hopf factorization $M_+M_0M_-$, then
\begin{enumerate}
\item If the partial indices of $M$ are non-negative, then for every covariance stationary process $\xi$ there exists a solution $X$ to $(M,N)$. The general form of solutions to $(M,N)$ is
\begin{align*}
X_t=\int z^t \boldsymbol{M}^{(-1)}\boldsymbol{N}I_nd\Phi+\int z^t \boldsymbol{M}_-^{-1}\psi d\Phi,\quad t\in\mathbb{Z},
\end{align*}
where $\boldsymbol{M}^{(-1)}$ is a right inverse of $\boldsymbol{M}$, $\psi\in \ker(\boldsymbol{M}_0)$, and the dimension of the solution space is the dimension of the innovation space of $\xi$ times the winding number of $\det(M)$.
\item If the partial indices of $M$ are all equal to zero, there is a unique solution for every covariance stationary process $\xi$ given by
\begin{align*}
X_t=\int z^t \boldsymbol{M}_-^{-1}\boldsymbol{M}_+^{-1}\boldsymbol{N}I_nd\Phi,\quad t\in\mathbb{Z},
\end{align*}
\item If $M$ has a negative partial index and the zero-th Fourier coefficient of the last row of $[M_+^{-1}N]_-$ is non-zero, then there exists no solution to $(M,N)$ in general.
\end{enumerate}
\end{thm}
\begin{proof}
(i) and (ii) follow immediately from Lemma \ref{lem:existencephi} and the spectral representation theorem. (iii) We claim that for $F=\mu I_n$, there exists no solution to $(M,N)$. For any solution must satisfy $\boldsymbol{M}_0\boldsymbol{M}_-\phi=\boldsymbol{M}_+^{-1}\boldsymbol{N}I_n=[M_+N]_-$ as $\{z^se_j: j=1,\ldots, n,s\in\mathbb{Z}\}$ is orthonormal for our choice of $F$. But the last row of this equation is $z^{\kappa_m}e_mM_-\phi=e_m[M_+^{-1}N]_-$, which evidently cannot hold if $\kappa_m<0$, as the zero-th Fourier coefficient of the left hand side is zero, while that of the right hand side is not by assumption.
\end{proof}

The relationship between partial indices and existence and uniqueness of solutions to LREMs was first discovered by \cite{onatski}. Theorem \ref{thm:onatski1} generalizes \cite{onatski} by allowing for arbitrary covariance stationary input and non-constant $N$. The general expression of solutions in (i) is similar to the one obtained in Theorem 4.1 (ii) of \cite{linsys}; \cite{sunspots}, \cite{farmeretal}, \cite{bianchinicolo}, and \cite{regular} provide algorithms for computing these solutions.

Note that the analogue of (iii) in \cite{onatski} is incorrect: if $N$ ($\Gamma$ in Onatski's notation) is constant and non-zero,  and $M$ ($A$ in Onatski's notation) has a negative (positive in Onatski's framework) partial index, it does not follow that a solution fails to exist. A counterexample is $(M,N)=\left(\left[\begin{smallmatrix} 1 & 0 \\ 0 & z^{-1} \end{smallmatrix}\right],\left[\begin{smallmatrix} 1 \\ 0 \end{smallmatrix}\right]\right)$, which has the unique solution $\left[\begin{smallmatrix} 1 \\ 0 \end{smallmatrix}\right]$ for any choice of $F$. It is easy to check, however, that Onatski's statement is correct in the scalar case $m=1$. Our additional condition on $[M_+^{-1}N]_-$, like the coprimeness condition in Lemma \ref{lem:nonexistencephi}, permits us to use white noise as an input to the system for which there cannot be an admissible output (different conditions can also be given). We will see shortly that this condition is generic.

The unit root case can be stated as

\begin{thm}\label{thm:unitroot}
If $(M,N)$ is an LREM, $\det(M(w))=0$ for some $w\in\mathbb{T}$, and\linebreak $\mathrm{rank}[\;M(w)\quad N(w)\;]=m$, then there exists no solution to $(M,N)$ in general.
\end{thm}
\begin{proof}
Follows from Lemma \ref{lem:nonexistencephi} and the spectral representation theorem.
\end{proof}

We can also state the following result for generic systems.

\begin{thm}[Onatski's Second Theorem]\label{thm:onatski2}
For a generic LREM $(M,N)$ with $\det(M(z))\neq0$ for all $z\in\mathbb{T}$, there exists possibly infinitely many solutions, a unique solution, or no solution in general according to whether $\det(M(z))$ winds around the origin a positive, zero, or negative number of times as $z$ traverses $\mathbb{T}$ counter-clockwise.
\end{thm}
\begin{proof}
By Theorem \ref{thm:generic}, $\mathcal{W}_\circ^{m\times m}$ is generic in the space of Wiener-Hopf factorizable elements of $\mathcal{W}^{m\times m}$. For fixed $M\in\mathcal{W}_\circ^{m\times m}$ with Wiener-Hopf factorization $M=M_+M_0M_-$, the generic element of $\mathcal{W}^{m\times n}$ has a zero-th Fourier coefficient of $e_m[M_+^{-1}N]_-$ that is non-zero (i.e.\ $\int e_m[M_+^{-1}N]_-d\mu\neq0$). Indeed, when $\mathcal{W}^{m\times n}$ is endowed with the $\mu$--essential supremum norm, $N\mapsto\int e_m[M_+^{-1}N]_-d\mu$ is a continuous mapping from $\mathcal{W}^{m\times n}$ to $\mathbb{C}^{1\times n}$. The result then follows from Theorem \ref{thm:onatski1} (iii).
\end{proof}

In closing, it is important to note that the assumptions underlying existence and uniqueness in this paper are weaker than in all of the previous literature on frequency domain solutions of LREMs, namely the work of \cite{whiteman}, \cite{onatski}, \cite{tanwalker}, \cite{tan}, and \cite{meyer}. These works require the exogenous process to have a purely non deterministic \cite{wold} representation, an assumption that is demonstrated to be unnecessary. The weaker assumptions of this paper have also facilitated discussion of zeros of $M$, an aspect of the theory absent from the previous literature. The advantage of these stronger assumptions, however, is that they do permit closed form expressions of solutions. See Appendix \ref{sec:previous} for a more detailed discussion.

\section{Ill-Posedness and Regularization}\label{sec:regular}
We have seen in the previous section that LREMs may have infinitely many solutions. In order to estimate and conduct inference, various selection mechanisms have been proposed in the macroeconometrics literature to associate a unique solution to any given set of parameters. We will review these proposals and show that, unfortunately, continuity of the selected solutions is not guaranteed. This leads to the development of a new regularized solution with guaranteed regularity properties.

\subsection{Non-Uniqueness}
The first approach to non-uniqueness, proposed by \cite{taylor}, selects the solution that minimizes the variance of the price variable in the model if one exists. Taylor motivates this solution by arguing that collective rationality of economic agents will naturally lead them to coordinate their activities to achieve this equilibrium. Unfortunately, this provides no guidance for models in which indeterminacy afflicts non-price variables.

The second approach to non-uniqueness commits to a particular algorithm for obtaining a solution and ignores all other solutions. That is, it commits to a particular choice of right inverse to $\boldsymbol{M}$, say $\boldsymbol{M}^{(-1)}$, and selects the solution $\boldsymbol{M}^{(-1)}\boldsymbol{N}I_n$, ignoring all other solutions $\boldsymbol{M}^{(-1)}\boldsymbol{N}I_n+\ker(\boldsymbol{M})$. This is the ``minimum state variable'' approach of \cite{msv}. Although this may appear to solve the problem of selecting a unique solution, there are generally infinitely many algorithms for obtaining solutions as there are infinitely many right inverses of $\boldsymbol{M}$. Thus, the concept of minimum state variable solution is not well-defined without specifying the particular algorithm or right inverse to be used for the solution. Geometrically, there is a minimum state variable solution at every single point of $\boldsymbol{M}^{(-1)}\boldsymbol{N}I_n+\ker(\boldsymbol{M})$ (see Figure \ref{fig:geom}). Without sound economic reasoning for the choice of algorithm or right inverse, it cannot be argued that one minimum state variable solution obtained by one algorithm is a more appropriate choice than another, obtained by a different algorithm.

\begin{figure}\caption{Solutions to Linear Rational Expectations Models}
\centering
\label{fig:geom}
\smallskip
\begin{tikzpicture}
\draw[->] (0,0) -- (0,6);
\draw[->] (0,0) -- (6,0);
\draw (0,5) -- (6,3);
\draw[gray] (0,0) -- (1.5,4.5);
\draw[gray] (1.8,4.4) -- (1.7,4.1) -- (1.4,4.2);
\draw (4,4.3) node {$\boldsymbol{M}^{(-1)}\boldsymbol{N}I_n$};
\draw (1.9,4.9) node {$\boldsymbol{M}^\dag\boldsymbol{N}I_n$};
\draw (5.8,3.7) node {$\boldsymbol{M}^{[-1]}\boldsymbol{N}I_n$};
\draw (6.2,2.7) node {$\boldsymbol{M}^{(-1)}\boldsymbol{N}I_n+\ker(\boldsymbol{M})$};
\draw (6.5,0) node {$H_0^m$};
\node at (3.26,3.91) [circle,inner sep=0.6mm, fill=blue] {};
\node at (1.5,4.5) [circle,inner sep=0.6mm,fill=red] {};
\node at (5,3.33) [circle,inner sep=0.6mm,fill=green] {};
\end{tikzpicture}
\end{figure}

Finally, the modern approach to non-uniqueness, as exemplified by \cite{sunspots}, \cite{farmeretal}, and \cite{bianchinicolo} parametrizes the full set of solutions. Like the minimum state variable approach, it commits to a particular algorithm or right inverse of $\boldsymbol{M}$, say $\boldsymbol{M}^{(-1)}$, and represents each solution as a sum $\boldsymbol{M}^{(-1)}\boldsymbol{N}I_n+\chi$. The first part, $\boldsymbol{M}^{(-1)}\boldsymbol{N}I_n$, is interpreted as generated by fundamental economic forces (the ones that appear explicitly in the model \eqref{eq:lrem}), while the second part, $\chi\in\ker(\boldsymbol{M})$, generated as we have seen by arbitrary innovation processes, is interpreted as non-fundamental sunspots \citep{farmer}. The coordinates on $\ker(\boldsymbol{M})$ are appended to the parameters of $(M,N)$ and both estimated by either frequentist or Bayesian methods. The geometry of the spectral approach allows us to see very clearly a serious conceptual problem with this approach. Various papers in the literature claim to present evidence of the importance of sunspots as drivers of macroeconomic activity by showing that the contribution of sunspots, as measured by the size of the estimated $\chi$, is not insignificant empirically. However, it is clear from Figure \ref{fig:indet} that the size of $\chi$ very much depends on the choice of algorithm or right inverse of $\boldsymbol{M}$: by one representation, $\boldsymbol{M}^{(-1)}\boldsymbol{N}I_n+\chi_1$, sunspots play a large role, by another representation, $\boldsymbol{M}^{[-1]}\boldsymbol{N}I_n+\chi_2$, sunspots play a small role. Without a sound economic reason for the choice of algorithm or right inverse of $\boldsymbol{M}$, it is unclear that the contribution of sunspots is being measured correctly. Thus, the modern approach suffers from a similar difficulty as the minimum state variable approach.

\begin{figure}\caption{Indeterminacy of Sunspots in Solutions to Linear Rational Expectations Models}
\centering
\label{fig:indet}
\smallskip
\begin{tikzpicture}
\draw[->] (0,0) -- (0,6);
\draw[->] (0,0) -- (6,0);
\draw (0,5) -- (6,3);
\draw (1.2,2.5) node {$\boldsymbol{M}^{(-1)}\boldsymbol{N}I_n$};
\draw (3,1.2) node {$\boldsymbol{M}^{[-1]}\boldsymbol{N}I_n$};
\draw (2.7,4.4) node {$\chi_1$};
\draw (4.9,3.7) node {$\chi_2$};
\draw (6.2,2.7) node {$\boldsymbol{M}^{(-1)}\boldsymbol{N}I_n+\ker(\boldsymbol{M})$};
\draw (6.5,0) node {$H_0^m$};
\draw[->, thick, red] (0,0) -- (0.3,4.9) -- (4.42,3.525);
\draw[->, thick, green] (0,0) -- (5,3.33) -- (4.58,3.475);
\node at (4.5,3.5) [circle,inner sep=0.6mm,fill=blue] {};
\end{tikzpicture}
\end{figure}

\subsection{Discontinuity}
We have established in Lemma \ref{lem:existencephi} that the set of all solutions to \eqref{eq:lremfreqcompact} is of the form $\boldsymbol{M}^{(-1)}\boldsymbol{N}I_n+\ker(\boldsymbol{M})$ for some right inverse $\boldsymbol{M}^{(-1)}$ of $\boldsymbol{M}$. In the course of proving Theorem \ref{thm:tykcont} below we will see that small variations of $M$ in the $\mu$--essential supremum norm lead to small variations in the orthogonal projection onto $\ker(\boldsymbol{M})$ in the operator norm. Unfortunately, small changes in $M$ in the $\mu$--essential supremum norm are not guaranteed to lead to small changes in $\boldsymbol{M}^{(-1)}\boldsymbol{N}I_n$ in the $H^m$ norm. Discontinuity can occur when $M$ falls inside the non-generic set $\mathcal{W}^{m\times m}\backslash\mathcal{W}^{m\times m}_\circ$. We will illustrate this discontinuity with the simplest possible example. It is, of course, possible to illustrate discontinuity using a more realistic example similar to \eqref{eq:nk} but doing so comes at the cost of analytic tractability, as solving even the simplest multivariate LREMs can be quite cumbersome. It is important to keep in mind that LREMs are typically parametrized purely on theoretical considerations, without any regard to statistical properties, so that elements of $\mathcal{W}^{m\times m}\backslash\mathcal{W}^{m\times m}_\circ$ are not expressly excluded from the parametrization \citep{onatski,generic}.

Consider the following example,
\begin{align}
F=\mu I_2,\qquad M(z,\theta)=\left[\begin{array}{cc}
z^2 & 0 \\ \theta z & 1
\end{array}\right],\qquad N(z)=\left[\begin{array}{cc}
1 & 0 \\ 0 & 1
\end{array}\right],\label{eq:nongeneric}
\end{align}
with $\theta\in\mathbb{R}$. Thus, $\xi$ is a standardized white noise process.

In order to compute solutions we will need to obtain a Wiener-Hopf factorization,
\begin{align*}
M(z,\theta)=\begin{cases}
\left[\begin{array}{cc}
1 & 0 \\ 0 & 1
\end{array}\right]\left[\begin{array}{cc}
z^2 & 0 \\ 0 & 1
\end{array}\right]\left[\begin{array}{cc}
1 & 0 \\ 0 & 1
\end{array}\right], & \theta=0,\vspace{0.5cm}\\ 
\left[\begin{array}{cc}
1 & z \\ 0 & \theta
\end{array}\right]\left[\begin{array}{cc}
z & 0 \\ 0 & z
\end{array}\right]\left[\begin{array}{cc}
0 & -\theta^{-1} \\ 1 & \theta^{-1}z^{-1}
\end{array}\right], & \theta\neq0.
\end{cases}
\end{align*}
Thus, there exists a (non-unique) solution for every $\theta\in\mathbb{R}$. Define $\boldsymbol{M}(\theta)$ as in Definition \ref{defn:MV}, $\phi\mapsto\boldsymbol{P}(M(\,\cdot\,,\theta)\phi|H_0^2)$. Define $\boldsymbol{M}(\theta)^{(-1)}$ as in \eqref{eq:Minv} with $\boldsymbol{M}_0(\theta)^{(-1)}$ defined as in \eqref{eq:m0inv}. We will restrict attention to minimum state variable solutions (i.e.\ $\psi=0$ in \eqref{eq:solphi}),
\begin{align*}
\phi(\theta)=\boldsymbol{M}(\theta)^{(-1)}\boldsymbol{N}I_n=\begin{cases}
\left[\begin{array}{cc}
z^{-2} & 0 \\ 0 & 1
\end{array}\right],& \theta=0,\vspace{0.5cm}\\
\left[\begin{array}{cc}
z^{-2} & \theta^{-1}z^{-1} \\ -\theta z^{-1} & 0\end{array}\right],& \theta\neq0.
\end{cases}
\end{align*}
This solution is clearly discontinuous at $\theta=0$. In particular,
\begin{align*}
\|\phi(\theta)-\phi(0)\|^2_{H^2}&=\int |\theta^{-1}z^{-1}|^2d\mu+\int|\theta z^{-1}|^2d\mu+\int|1|^2d\mu=\theta^{-2}+\theta^2+1
\end{align*}
tends to infinity as $\theta\rightarrow0$. Thus, discontinuities can arise in the process of solving an LREM.

While this discontinuity may be quite jarring to readers acquainted with the LREM literature, from the Wiener-Hopf factorization literature point of view there is nothing surprising about this discontinuity. Indeed, \eqref{eq:nongeneric} is a minor modification of the example given in Section 1.5 of \cite{gks}. The fact that only multivariate systems with non-unique solutions can exhibit this discontinuity may explain why discontinuity has not received sufficient attention in the LREM literature, as solving such models is difficult to do analytically. The heart of the problem is that, when $m>1$, there are points in $\mathcal{W}^{m\times m}$ at which partial indices are discontinuous, these are precisely the non-generic set $\mathcal{W}^{m\times m}\backslash\mathcal{W}^{m\times m}_\circ$ \cite[Corollary 1.22]{gks}. That is, if $M\in\mathcal{W}^{m\times m}$ has partial indices satisfying $\kappa_1-\kappa_m>1$, then there are small (in the $\mu$--essential supremum norm) changes in $M$ that lead to a jump in $M_0$ and since $M$ has undergone only a small change, $M_\pm$ must also jump. If we then choose a fixed right inverse $\boldsymbol{M}_0^{(-1)}$ in the solution \eqref{eq:solphi} (e.g.\ choosing $\boldsymbol{M}_0^{(-1)}$ as in \eqref{eq:m0inv}), then the discontinuity can affect the solution. This is, unfortunately, what is done in all existing solution algorithms \citep{linsys,regular}. In our example, the partial indices of $M(z,\theta)$ are $(2,0)$ for $\theta=0$ and $(1,1)$ for $\theta\neq0$; $M(z,0)\in \mathcal{W}^{2\times 2}\backslash\mathcal{W}^{2\times 2}_\circ$ and $M(z,\theta)\in \mathcal{W}^{2\times 2}_\circ$ for $\theta\neq0$. The discontinuity in the partial indices is what generates the discontinuity of the Wiener-Hopf factors, which in turn generates a discontinuity in the solution.

It bears emphasizing that this discontinuity is a feature of the mathematical problem, it is not a feature of the particular algorithm used to solve the problem; \cite{regular} shows how it arises in the \cite{sims} framework, \cite{linsys} shows how it arises in a linear systems framework, and Appendix \ref{sec:nongeneric} shows how it arises when solving the system by hand. It is also important to note that the discontinuity of Wiener-Hopf factorization implies that there can exist no numerically stable way to compute it generally (see Example C.2 of the online supplement to \cite{linsys} for an illustration). While the elements of $\mathcal{W}^{m\times m}_\circ$ can be factorized using finite precision arithmetic, the elements of $\mathcal{W}^{m\times m}\backslash\mathcal{W}^{m\times m}_\circ$ cannot be factorized without infinite precision. Thus, it is a non-starter to verify numerically whether a given system is generic or not in the process of estimation and inference. Note that an analogous problem arises in the computation of the Jordan canonical form, which is discontinuous at a non-generic set of matrices in $\mathbb{C}^{m\times m}$ when $m>1$ \cite[pp.\ 127-128]{hj1}; there, the recommendation is to compute the Schur canonical form instead; and in the next section we will propose, similarly, to compute different solutions to LREMs than the ones proposed in the literature.

\subsection{Regularization}
Having established that the LREM problem in macroeconometrics is ill-posed, we now consider how to obtain economically meaningful solutions amenable to mainstream econometric techniques.

Perhaps the most natural solution to the ill-posedness problem is to avoid it altogether and restrict attention to systems with unique solutions (i.e.\ LREMs $(M,N)$, where the partial indices of $M$ are equal to zero). \cite{ident} have shown that unique solutions are not only continuous in the parameters of an LREM but also analytic (see the proof of Theorem 6.2). A less restrictive solution is to allow for non-uniqueness but restrict attention to generic systems (i.e.\ $\mathcal{W}^{m\times m}_\circ$). However, genericity cannot be taken for granted as both \cite{onatski} and \cite{generic} have warned and some models may be parametrized to always fall inside $\mathcal{W}^{m\times m}\backslash\mathcal{W}^{m\times m}_\circ$ (interestingly, \cite{generic} is widely but erroneously considered to be a critique of \cite{onatski}, see \cite{blog}). Moreover, we need differentiability, not just continuity, in order to ensure asymptotic normality of extremum estimators \citep{pp,hd} as it facilitates the construction of confidence intervals as well as hypothesis testing \citep{nm}. Therefore, we opt for a more straightforward solution, regularization.

With the geometry of the spectral approach in view, one is led inexorably to consider the Tykhonov-regularized solution to \eqref{eq:lremfreqcompact}, which minimizes the total variance, $\|X_0\|^2_{\mathscr{H}^m}=\|\phi\|^2_{H^m}$ among all solutions to \eqref{eq:lremsol},
\begin{align*}
\phi_{\boldsymbol{I}}=\arg\min\left\{\|\phi\|^2_{H^m}: \boldsymbol{M}\phi=\boldsymbol{N}I_n\right\}.
\end{align*}
It can be shown that $\phi_{\boldsymbol{I}}=\boldsymbol{M}^\dag\boldsymbol{N}I_n$, where $\boldsymbol{M}^\dag$ is the Moore-Penrose inverse of $\boldsymbol{M}$ \cite[p.\ 41]{groetsch}. This takes a particularly simple form in our context.

\begin{lem}\label{lem:mp}
If $M\in\mathcal{W}^{m\times m}$, $\det(M(z))\neq0$ for all $z\in\mathbb{T}$, and the partial indices of $M$ are non-negative then,
\begin{align*}
\boldsymbol{M}^\dag=\boldsymbol{M}^\ast(\boldsymbol{M}\boldsymbol{M}^\ast)^{-1}.
\end{align*}
\end{lem}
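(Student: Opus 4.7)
The plan is to argue in three steps: (i) use Lemma \ref{lem:existencephi} to obtain surjectivity of $\boldsymbol{M}$ together with a bounded right inverse; (ii) leverage this to prove that $\boldsymbol{M}\boldsymbol{M}^\ast$ is a boundedly invertible operator on $H_0^m$; (iii) verify directly from the variational characterization of $\boldsymbol{M}^\dag$ that the minimum norm solution equals $\boldsymbol{M}^\ast(\boldsymbol{M}\boldsymbol{M}^\ast)^{-1}\varphi$.

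For step (i), since $\det(M(z))\neq0$ on $\mathbb{T}$ and the partial indices are non-negative, Lemma \ref{lem:existencephi} yields a bounded right inverse $R=\boldsymbol{M}^{(-1)}=\boldsymbol{M}_-^{-1}\boldsymbol{M}_0^{(-1)}\boldsymbol{M}_+^{-1}$ with $\boldsymbol{M} R=\boldsymbol{I}$; in particular $\boldsymbol{M}$ is surjective onto $H_0^m$. Taking adjoints gives $R^\ast \boldsymbol{M}^\ast=\boldsymbol{I}$, so $\boldsymbol{M}^\ast$ is bounded below by $\|R\|^{-1}$ and hence injective with closed range.

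For step (ii), I would combine these bounds. For any $y\in H_0^m$,
\begin{align*}
\|\boldsymbol{M}^\ast y\|_{H^m}^2=((\boldsymbol{M}\boldsymbol{M}^\ast y,y))\leq \|\boldsymbol{M}\boldsymbol{M}^\ast y\|_{H^m}\|y\|_{H^m},
\end{align*}
so with $\|\boldsymbol{M}^\ast y\|_{H^m}\geq \|y\|_{H^m}/\|R\|$ we deduce $\|\boldsymbol{M}\boldsymbol{M}^\ast y\|_{H^m}\geq \|y\|_{H^m}/\|R\|^2$. Hence $\boldsymbol{M}\boldsymbol{M}^\ast$ is bounded below, which makes it injective with closed range; being self-adjoint, $\mathrm{range}(\boldsymbol{M}\boldsymbol{M}^\ast)^\perp=\ker(\boldsymbol{M}\boldsymbol{M}^\ast)=\{0\}$, so the range is both dense and closed. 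The bounded inverse theorem then delivers a bounded $(\boldsymbol{M}\boldsymbol{M}^\ast)^{-1}$.

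For step (iii), I would note that $\boldsymbol{M}^\ast(\boldsymbol{M}\boldsymbol{M}^\ast)^{-1}$ is a right inverse of $\boldsymbol{M}$ by direct composition, and its range sits inside $\mathrm{range}(\boldsymbol{M}^\ast)$. Because $\mathrm{range}(\boldsymbol{M}^\ast)$ is closed, standard Hilbert space duality gives $\mathrm{range}(\boldsymbol{M}^\ast)=\ker(\boldsymbol{M})^\perp$. Any solution to $\boldsymbol{M}\phi=\varphi$ decomposes uniquely as $\phi=\phi_0+\chi$ with $\phi_0\in\ker(\boldsymbol{M})^\perp$ and $\chi\in\ker(\boldsymbol{M})$, and $\|\phi\|_{H^m}^2=\|\phi_0\|_{H^m}^2+\|\chi\|_{H^m}^2$ is minimized precisely when $\chi=0$. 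Since $\boldsymbol{M}^\ast(\boldsymbol{M}\boldsymbol{M}^\ast)^{-1}\varphi$ is the unique solution in $\ker(\boldsymbol{M})^\perp$, it coincides with $\boldsymbol{M}^\dag\varphi$, proving the claim.

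The main obstacle will be step (ii): asserting invertibility of $\boldsymbol{M}\boldsymbol{M}^\ast$ without circularity. Everything else is essentially the textbook derivation of the pseudoinverse of a surjective bounded operator between Hilbert spaces, but obtaining the lower bound on $\boldsymbol{M}^\ast$ requires the explicit bounded right inverse furnished by Lemma \ref{lem:existencephi}, which in turn relies on the Wiener-Hopf factorization and the non-negativity of the partial indices; this is the step where the spectral hypotheses of the lemma are genuinely used.
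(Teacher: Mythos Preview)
Your proof is correct and follows essentially the same approach as the paper: both obtain surjectivity of $\boldsymbol{M}$ from the Wiener--Hopf hypotheses and then invoke standard Hilbert-space pseudoinverse theory for a surjective bounded operator. The paper's proof is terser---it cites \cite{gf} for surjectivity and \cite{groetsch} for the identity $\boldsymbol{M}^\dag=\boldsymbol{M}^\ast(\boldsymbol{M}\boldsymbol{M}^\ast)^\dag$, then notes that injectivity of $\boldsymbol{M}^\ast$ upgrades the latter pseudoinverse to an honest inverse---whereas you supply the explicit lower-bound argument for $\boldsymbol{M}\boldsymbol{M}^\ast$ and the variational characterization of $\boldsymbol{M}^\dag$ by hand; but the substance is the same.
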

\begin{proof}
By Lemma \ref{lem:existencephi}, $\boldsymbol{M}$ is onto. It follows that $\boldsymbol{M}^\dag=\boldsymbol{M}^\ast(\boldsymbol{M}\boldsymbol{M}^\ast)^\dag$ \cite[Theorem 2.1.5]{groetsch}. Since $\boldsymbol{M}$ is onto, $\boldsymbol{M}^\ast$ is one-to-one and so $(\boldsymbol{M}\boldsymbol{M}^\ast)^\dag=(\boldsymbol{M}\boldsymbol{M}^\ast)^{-1}$.
\end{proof}

Lemma \ref{lem:mp} implies a very simple technique for computing the Tykhonov-regularized solution. Whenever a solution to \eqref{eq:lremfreqcompact} exists, the Tykhonov-regularized solution is obtained from the unique solution to the auxiliary block triangular frequency domain system,
\begin{align*}
\left[\begin{array}{cc}
\boldsymbol{M}\boldsymbol{M}^\ast & 0 \\ -\boldsymbol{M}^\ast & \boldsymbol{I}
\end{array}\right]\left[\begin{array}{c}
\varphi \\ \phi_{\boldsymbol{I}}
\end{array}\right]=\left[\begin{array}{c}
\boldsymbol{N}I_n\\ 0
\end{array}\right].
\end{align*}
This implies that whenever a solution exists, the regularized solution is obtained uniquely by solving an auxiliary LREM. For example, in the mixed model from Section \ref{sec:examples}, the Tykhonov-regularized solution is the solution $X$ to the auxiliary LREM,
\begin{gather*}
a\overline{c} E_tY_{t+2}+(a\overline{b}+b\overline{c})E_tY_{t+1}+(|a|^2+|b|^2)Y_t+|c|^2E_{t-1}Y_t+(b\overline{a}+c\overline{b})Y_{t-1}+c\overline{a}Y_{t-2}=\xi_t,\\
X_t=\overline{a}Y_{t-1}+\overline{b}Y_t+\overline{c}E_tY_{t+1}.
\end{gather*}
This regularized solution exists and is unique whenever the original system satisfies the conditions for existence.

Note that when the partial indices of $M$ are all equal to zero, $\boldsymbol{M}^\dag=\boldsymbol{M}^\ast(\boldsymbol{M}\boldsymbol{M}^\ast)^{-1}=\boldsymbol{M}^\ast(\boldsymbol{M}^\ast)^{-1}(\boldsymbol{M})^{-1}=(\boldsymbol{M})^{-1}$. In other words, Tykhonov-regularization has no effect if the solution is unique.

\begin{figure}\caption{Regularized Solutions to Linear Rational Expectations Models}
\centering
\label{fig:reg}
\smallskip
\begin{tikzpicture}
\clip (-0.2,-0.2) rectangle (8,6);
\draw[->] (0,0) -- (0,6);
\draw[->] (0,0) -- (6,0);
\draw (0,5) -- (6,3);
\draw[gray] (0,0) -- (1.5,4.5);
\draw[gray] (1.8,4.4) -- (1.7,4.1) -- (1.4,4.2);
\draw[red,dashed] (0,0) circle (4.75);
\draw[red,dashed] (0,0) circle (3.17);
\draw[red,dashed] (0,0) circle (1.58);
\draw[blue,densely dashed] (0,0) circle [x radius=6.25, y radius=2, rotate=28];
\draw[blue,densely dashed] (0,0) circle [x radius=4.17, y radius=1.33, rotate=28];
\draw[blue,densely dashed] (0,0) circle [x radius=2.08, y radius=0.67, rotate=28];
\draw (1.7,4.9) node {$\phi_{\boldsymbol{I}}$};
\draw (5.4,3.7) node {$\phi_{\boldsymbol{L}}$};
\draw (6.5,0) node {$H_0^m$};
\node at (1.5,4.5) [circle,inner sep=0.6mm,fill=red] {};
\node at (5,3.33) [circle,inner sep=0.6mm,fill=blue] {};
\end{tikzpicture}
\end{figure}

More generally, we may consider
\begin{align*}
\phi_{\boldsymbol{L}}\in\arg\min\left\{\|\boldsymbol{L}\phi\|^2_{H^l}: \boldsymbol{M}\phi=\boldsymbol{N}I_n\right\},
\end{align*}
where $\boldsymbol{L}:H^m\rightarrow H^l$ is a bounded linear operator chosen by the researcher. Next, we consider economic motivation for a number of choices of $\boldsymbol{L}$.

If the researcher wishes to shrink the variance of the $i$-th component of $X$, they can set
\begin{align*}
\boldsymbol{L}:\phi\mapsto \phi_i.
\end{align*}
Note that when the $i$-th component is the price variable, we obtain the \cite{taylor} solution.

One can also shrink expected values of $X$. For example, if certain solutions obtained by the methods  reviewed above yield expectations of output that are too variable relative to what one expects empirically, then one can impose this prior by using
\begin{align*}
\boldsymbol{L}:\phi\mapsto \boldsymbol{V}\phi_j,
\end{align*}
where $j$ is the coordinate corresponding to output in $X$.

Linear combinations of lagged, current, and expected values of coordinates of $X$ can also be shrunk similarly. The operator
\begin{align*}
\boldsymbol{L}:\phi\mapsto \left[\begin{array}{c}
(\boldsymbol{V}-2\,\boldsymbol{I}+\boldsymbol{V}^{(-1)})\phi_1\\
\vdots \\
(\boldsymbol{V}-2\,\boldsymbol{I}+\boldsymbol{V}^{(-1)})\phi_m
\end{array}\right]
\end{align*}
shrinks the second difference of all variables, imposing smoothness on solutions, similar to the idea of the \cite{hp} filter.

Note that the operators above belong to the class of operators defined in Definition \ref{defn:MV}. Thus, we can more generally use any $L\in\mathcal{W}^{l\times m}$ to construct the weight
\begin{align*}
\boldsymbol{L}:\phi\mapsto \left[\begin{array}{c}
\sum_{j=1}^m \boldsymbol{L}_{1j}\phi_j\\
\vdots\\
\sum_{j=1}^m \boldsymbol{L}_{lj}\phi_j
\end{array}\right].
\end{align*}

More importantly, regularization can allow the researcher to shrink not just across time but across frequencies. For example, the researcher may wish to shrink the spectrum of the solution towards frequencies of between $2\pi/32$ and $2\pi/4$ corresponding to business cycle fluctuations of period 4-32 quarters in quarterly data, i.e.\ using
\begin{align*}
\boldsymbol{L}:\phi\mapsto L\phi,& &L(z)=\begin{cases}
0,& \pi/16\leq |\arg(z)|\leq\pi/2,\\
I_m,& \text{otherwise}.
\end{cases}
\end{align*}

Finally, we can consider solutions that minimize a finite weighted sum of individual criteria as reviewed above,
\begin{align*}
a_1\|\boldsymbol{L}_1\phi\|_{H^{l_1}}^2+\cdots+a_d\|\boldsymbol{L}_d\phi\|_{H^{l_d}}^2,
\end{align*}
where $\boldsymbol{L}_i:H^m\rightarrow H^{l_i}$ and $a_i>0$ for $i=1,\ldots,d$. This would allow the researcher to impose $d$ different criteria according to the weights $a_1,\ldots,a_d$. This can be achieved by using
\begin{align*}
\boldsymbol{L}:\phi\mapsto \left[\begin{array}{c}
\sqrt{a_1}\boldsymbol{L}_1\\
\vdots\\
\sqrt{a_d}\boldsymbol{L}_d
\end{array}\right]\phi.
\end{align*}

The argument for regularization is the same as employed throughout the inverse and ill-posed problems literatures: if theory is insufficient to pin down a unique continuous solution, other information can be employed. In our case, regularization allows economically meaningful shrinkage criteria to select the best among all possible solutions or, what amounts to the same, it allows the researcher's priors about economic behavior to select the most appropriate solution. Regularization resolves the problem of selecting an economically grounded solution and, as we will soon see, also ameliorates the discontinuity problem.

Existence of regularized solutions is guaranteed whenever the given LREM satisfies the already minimal conditions for existence of solutions. That is because, according to Lemma \ref{lem:existencephi}, the set of all solutions is finite dimensional and so the problem of minimizing $\|\boldsymbol{L}\phi\|^2_{H^l}$ subject to $\boldsymbol{M}\phi=\boldsymbol{N}I_n$ reduces to a problem of minimizing a non-negative quadratic form (see Figure \ref{fig:reg}). Uniqueness of regularized solutions, on the other hand, is the subject of the next result.

\begin{lem}\label{lem:reg}
If $(M,N)$ is an LREM, $\det(M(z))\neq0$ for all $z\in\mathbb{T}$, the partial indices of $M$ are non-negative, and $\boldsymbol{L}:H^m\rightarrow H^l$ is a bounded linear operator, then \begin{align*}
\phi_{\boldsymbol{L}}=(\boldsymbol{I}-(\boldsymbol{L}|_{\ker(\boldsymbol{M})})^\dag \boldsymbol{L})\boldsymbol{M}^\dag \boldsymbol{N}I_n,
\end{align*}
where $\boldsymbol{L}|_{\ker(\boldsymbol{M})}$ is the restriction of $\boldsymbol{L}$ to $\ker(\boldsymbol{M})$, is a regularized solution to \eqref{eq:lremfreqcompact}. $\phi_{\boldsymbol{L}}$ is the only element of $\arg\min\left\{\|\boldsymbol{L}\phi\|^2_{H^l}: \boldsymbol{M}\phi=\boldsymbol{N}I_n\right\}$ in $H_0^m$ if and only if $\ker(\boldsymbol{L})\cap\ker(\boldsymbol{M})=\{0\}$.
\end{lem}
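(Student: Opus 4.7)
The plan is to convert the constrained least-squares problem into an unconstrained minimization over the affine set of solutions and then to apply the standard Moore-Penrose machinery. By Lemma \ref{lem:existencephi}, the non-negativity of the partial indices guarantees that $\boldsymbol{M}\phi=\varphi$ has a solution, and by Lemma \ref{lem:mp} one particular solution is $\phi_0:=\boldsymbol{M}^\dag\varphi$. The feasible set is therefore the affine subspace $\phi_0+\ker(\boldsymbol{M})$, and minimizing $\|\boldsymbol{L}\phi\|^2_{H^l}$ over it is equivalent to minimizing $\|\boldsymbol{L}\phi_0+\boldsymbol{L}\chi\|^2_{H^l}$ over $\chi\in\ker(\boldsymbol{M})$.

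The structural observation that makes the remainder of the argument easy is that $\ker(\boldsymbol{M})$ is finite-dimensional: from Lemma \ref{lem:V}(vi) and Lemma \ref{lem:existencephi}, $\dim\ker(\boldsymbol{M})\le n(\kappa_1+\cdots+\kappa_m)<\infty$. Consequently $\boldsymbol{L}(\ker(\boldsymbol{M}))$ is a closed (indeed finite-dimensional) subspace of $H^l$, so the Hilbert-space projection theorem guarantees the existence of a minimizer, characterized by the first-order condition $\boldsymbol{L}\phi\perp\boldsymbol{L}(\ker(\boldsymbol{M}))$. The minimum-norm element $\chi_*\in\ker(\boldsymbol{M})$ achieving this orthogonality is, by the defining characterization of the Moore-Penrose pseudoinverse of $\boldsymbol{L}|_{\ker(\boldsymbol{M})}\colon\ker(\boldsymbol{M})\to H^l$, equal to $\chi_*=-(\boldsymbol{L}|_{\ker(\boldsymbol{M})})^\dag(\boldsymbol{L}\phi_0)$, so
\[
\phi_{\boldsymbol{L}}=\phi_0+\chi_*=\bigl(\boldsymbol{I}-(\boldsymbol{L}|_{\ker(\boldsymbol{M})})^\dag\boldsymbol{L}\bigr)\boldsymbol{M}^\dag\varphi,
\]
as claimed.

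For uniqueness, if $\phi$ and $\phi'$ are both minimizers then $\chi:=\phi'-\phi\in\ker(\boldsymbol{M})$, and the orthogonality $\boldsymbol{L}\phi\perp\boldsymbol{L}(\ker(\boldsymbol{M}))$ together with the Pythagorean identity gives $\|\boldsymbol{L}\phi'\|^2_{H^l}=\|\boldsymbol{L}\phi\|^2_{H^l}+\|\boldsymbol{L}\chi\|^2_{H^l}$; equality of the two minimum values forces $\boldsymbol{L}\chi=0$, whence $\chi\in\ker(\boldsymbol{L})\cap\ker(\boldsymbol{M})$. Conversely, any $\chi$ in this intersection yields another minimizer $\phi_{\boldsymbol{L}}+\chi$ with the same value of $\|\boldsymbol{L}\cdot\|_{H^l}$. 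Hence, for each $F$, the regularized solution is unique at that $F$ iff $\ker(\boldsymbol{L})\cap\ker(\boldsymbol{M})=\{0\}$ at that $F$, and the ``for every $F$'' quantifier transfers from one side to the other.

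I do not anticipate a serious obstacle: once one notes that $\ker(\boldsymbol{M})$ is finite-dimensional, all the Moore-Penrose calculus reduces essentially to ordinary linear algebra and the potentially delicate issues of closed range and attainment vanish. The only subtlety worth flagging is that the pseudoinverse appearing in the formula is that of the \emph{restricted} operator $\boldsymbol{L}|_{\ker(\boldsymbol{M})}$ rather than of $\boldsymbol{L}$ itself; this is essential, because one wants the correction $\chi_*$ to lie in $\ker(\boldsymbol{M})$ so as to preserve the constraint $\boldsymbol{M}\phi=\varphi$, and it is what allows $\phi_{\boldsymbol{L}}$ to collapse to $\boldsymbol{M}^\dag\varphi$ when $\ker(\boldsymbol{M})=\{0\}$, consistent with Lemma \ref{lem:mp}.
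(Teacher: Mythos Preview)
Your proof is correct. Both your argument and the paper's rest on the finite-dimensionality of $\ker(\boldsymbol{M})$, but they deploy it differently. The paper uses the finite rank of $\boldsymbol{L}|_{\ker(\boldsymbol{M})}$ to verify, via a lemma of Locker, that the combined map $\phi\mapsto(\boldsymbol{M}\phi,\boldsymbol{L}\phi)$ has closed range, and then invokes Theorem~2.2 of Callon and Groetsch to obtain both the formula and the uniqueness characterization at once. You instead parametrize the feasible set as $\boldsymbol{M}^\dag\varphi+\ker(\boldsymbol{M})$, note that $\boldsymbol{L}(\ker(\boldsymbol{M}))$ is closed because it is finite-dimensional, and argue directly via the projection theorem and the defining property of the Moore-Penrose inverse of the finite-rank restriction, with uniqueness following from Pythagoras. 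Your route is more elementary and fully self-contained; the paper's route has the side benefit that the closed-range property it establishes is reused later (in Lemma~\ref{lem:altH}) to set up the auxiliary Hilbert space $\mathsf{H}^m$ needed for Theorems~\ref{thm:tykcont} and~\ref{thm:tykdiff}.
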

\begin{proof}
See Appendix \ref{sec:proofs}.
\end{proof}

In the course of proving Lemma \ref{lem:reg} we find that the set of regularized solutions is $\phi_{\boldsymbol{L}}+\ker(\boldsymbol{M})\cap\ker(\boldsymbol{L})$. Thus, regularization produces a unique solution if and only if $\ker(\boldsymbol{L})\cap\ker(\boldsymbol{M})=\{0\}$. Geometrically, this condition requires the operator $\boldsymbol{L}$ to put weight on all directions of indeterminacy of the LREM; if $\ker(\boldsymbol{L})\cap\ker(\boldsymbol{M})\neq\{0\}$, it will be possible to perturb $\phi_{\boldsymbol{L}}$ in any direction in $\ker(\boldsymbol{L})\cap\ker(\boldsymbol{M})$ to arrive at another regularized solution and uniqueness will fail. A host of other representations of $\phi_{\boldsymbol{L}}$ are obtained in Appendix \ref{sec:proofs}.

Some special cases of Lemma \ref{lem:reg} are particularly instructive. When $\ker(\boldsymbol{L})=\{0\}$, the regularized solution is unique regardless of $\boldsymbol{M}$. That is, when $\boldsymbol{L}$ puts weight on all directions in the solution space, the regularized solution is unique. An example of this is $\boldsymbol{L}=\boldsymbol{I}$, which produces the Tykhonov-regularized solution $\phi_{\boldsymbol{I}}=\boldsymbol{M}^\dag\boldsymbol{N}I_n$. When $\ker(\boldsymbol{M})=\{0\}$ (i.e.\ the solution to the LREM is unique), the regularized solution is the unique solution regardless of $\boldsymbol{L}$. This is due to the fact that $(\boldsymbol{L}|_{\ker(\boldsymbol{M})})^\dag$ is the zero operator (because $\boldsymbol{L}|_{\ker(\boldsymbol{M})}$ is the zero operator) and $\boldsymbol{M}^\dag=(\boldsymbol{M})^{-1}$ so that $\phi_{\boldsymbol{L}}=(\boldsymbol{M})^{-1}\boldsymbol{N}I_n$, the unique solution.

Recalling that $\phi_{\boldsymbol{L}}$ is a solution to the LREM, note how the mapping from parameters to the set of all solutions, $(M,N)\mapsto\phi_{\boldsymbol{L}}+\ker(\boldsymbol{M})$, compares to the mapping from parameters to the set of regularized solutions $(M,N)\mapsto\phi_{\boldsymbol{L}}+\ker(\boldsymbol{M})\cap\ker(\boldsymbol{L})$. The difference is simply a matter of dimension reduction. It is important to note that neither mapping is generally one-to-one and solution sets associated with different parameters may intersect.

\begin{thm}\label{thm:tykhonov}
If $(M,N)$ is an LREM, $\det(M(z))\neq0$ for all $z\in\mathbb{T}$, the partial indices of $M$ are non-negative, and $\boldsymbol{L}:H^m\rightarrow H^l$ is a bounded linear operator, then for every covariance stationary process $\xi$ there exists a solution $X$ minimizing $\|\boldsymbol{L}\phi\|_{H^l}$ given by
\begin{align*}
X_t=\int z^t (\boldsymbol{I}-(\boldsymbol{L}|_{\ker(\boldsymbol{M})})^\dag \boldsymbol{L})\boldsymbol{M}^\dag\boldsymbol{N}I_n d\Phi,\quad t\in\mathbb{Z}.
\end{align*}
The solution is unique if and only if $\ker(\boldsymbol{M})\cap\ker(\boldsymbol{L})=\{0\}$.
\end{thm}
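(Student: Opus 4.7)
The plan is to deduce Theorem \ref{thm:tykhonov} directly from Lemma \ref{lem:reg} by passing through the spectral representation theorem; essentially all of the structural work has already been done in the frequency domain. The hypotheses of Lemma \ref{lem:reg} are exactly the hypotheses stated here, so that lemma immediately yields the minimizing spectral characteristic $\phi_{\boldsymbol{L}} = (\boldsymbol{I}-(\boldsymbol{L}|_{\ker(\boldsymbol{M})})^\dag \boldsymbol{L})\boldsymbol{M}^\dag\varphi \in H_0^m$, which satisfies $\boldsymbol{M}\phi_{\boldsymbol{L}} = \varphi$ and minimizes $\|\boldsymbol{L}\phi\|_{H^l}$ among all such $\phi$.

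Next I would construct $X$ by setting $X_0 = \int \phi_{\boldsymbol{L}} d\Phi$ and $X_t = \boldsymbol{U}^t X_0 = \int z^t \phi_{\boldsymbol{L}} d\Phi$ for $t \in \mathbb{Z}$, recognized as a covariance stationary process causal in $\xi$ via the unitary isomorphism $\phi \mapsto \int \phi d\Phi$ from $H_0^m$ onto $\mathscr{H}_0^m$ recalled in Section \ref{sec:notation}. That $X$ satisfies \eqref{eq:lremsol} follows by integrating \eqref{eq:lremfreqcompact} against $d\Phi$ after multiplication by $z^t$, using the correspondence between $\boldsymbol{P}(\cdot|H_t^m)$ and $\boldsymbol{P}(\cdot|\mathscr{H}_t^m)$; convergence in $\mathscr{H}^m$ of the left hand side of \eqref{eq:lremsol} follows from convergence of $\sum_s M_s \boldsymbol{V}^s$ in the operator norm on $H_0^m$ together with the isometry property of the stochastic integral. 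Because $\phi \mapsto \int \phi d\Phi$ is an isometry, the minimization of $\|\boldsymbol{L}\phi\|_{H^l}$ over frequency-domain solutions is equivalent to minimization of the corresponding time-domain functional over competing solutions of \eqref{eq:lremsol}.

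For uniqueness, any other minimizing solution $Y$ has spectral characteristic $\phi' \in H_0^m$ that also realizes the same minimum subject to $\boldsymbol{M}\phi' = \varphi$; by Lemma \ref{lem:reg}, $\phi' = \phi_{\boldsymbol{L}}$ for every $F$ if and only if $\ker(\boldsymbol{M}) \cap \ker(\boldsymbol{L}) = \{0\}$, and the $H_0^m$--$\mathscr{H}_0^m$ isometry converts this equality in $H_0^m$ into equality of $X_t$ and $Y_t$ in $\mathscr{L}_2^m$ at every $t \in \mathbb{Z}$. No serious obstacle is anticipated: the result is essentially a translation of Lemma \ref{lem:reg} into the time domain. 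The one detail requiring care is bookkeeping to ensure that the minimization of the frequency-domain functional $\|\boldsymbol{L}\phi\|_{H^l}$ genuinely corresponds, via the isometry, to the minimization the user of the theorem has in mind when applying $\boldsymbol{L}$ as in the examples discussed in the paragraphs preceding the theorem.
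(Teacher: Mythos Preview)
Your proposal is correct and follows exactly the paper's approach: the paper's proof is the one-liner ``Follows from Lemma \ref{lem:reg} and the spectral representation theorem,'' and you have simply spelled out the details of that translation. Your added care about the isometry and convergence is accurate bookkeeping, not a departure from the paper's argument.
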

\begin{proof}
Follows from Lemma \ref{lem:reg} and the spectral representation theorem.
\end{proof}

The expression for regularized solutions in Theorem \ref{thm:tykhonov} is primarily of theoretical interest. It will allow us to study continuity and differentiability with respect to underlying parameters. For estimation and inference, on the other hand, \cite{regular} provides a numerical algorithm for computing regularized solutions in the \cite{sims} framework, which leads to equivalent algebraic rather than geometric criteria for existence and uniqueness.

We have shown that, with an appropriate choice of $\boldsymbol{L}$, the regularized solution can overcome the non-uniqueness problem. Our next result shows that regularized solutions also overcome the discontinuity problem. The continuity guarantee that we require for mainstream econometric methodology is not with respect to the $H^m$ norm but with respect to the $\mu$--essential supremum norm; see e.g.\ the continuity results for bounded spectral densities and the Gaussian likelihood functions in \citep{hannan,dp,anderson}. If we parametrize $M$ and $N$ as $M(z,\theta)$ and $N(z,\theta)$, then it is clear that we need $M(z,\theta)$ and $N(z,\theta)$ to be jointly continuous in $z$ and $\theta$. However, \cite{ga} note that this is not sufficient to ensure continuity of the Wiener-Hopf factors in the $\mu$--essential supremum norm. Thus, we use Green and Anderson's idea of imposing control over $\frac{d}{dz}M(z,\theta)$ and $\frac{d}{dz}N(z,\theta)$.

\begin{thm}\label{thm:tykcont}
Let $M:\mathbb{T}\times\Theta\rightarrow\mathbb{C}^{m\times m}$ and $N:\mathbb{T}\times\Theta\rightarrow\mathbb{C}^{m\times n}$. Under the conditions
\begin{enumerate}
\item $F=\mu I_n$.
\item $\Theta\subset\mathbb{R}^d$ is an open set.
\item $M(\,\cdot\,,\theta)$ and $N(\,\cdot\,,\theta)$ are analytic in a neighborhood of $\mathbb{T}$ for every $\theta\in\Theta$.
\item $M(z,\theta)$, $N(z,\theta)$, $\frac{d}{dz}M(z,\theta)$, and $\frac{d}{dz}N(z,\theta)$ are jointly continuous at every $(z,\theta)\in\mathbb{T}\times\Theta$.
\item $\ker(\boldsymbol{L})\cap\ker(\boldsymbol{M}(\theta))=\{0\}$ for all $\theta\in\Theta$.
\item $\det(M(z,\theta))\neq0$ for all $z\in\mathbb{T}$, and the partial indices of $M(z,\theta)$ are all non-negative for all $\theta\in\Theta$.
\end{enumerate}
Then $\phi_{\boldsymbol{L}}(\theta)=(\boldsymbol{I}-(\boldsymbol{L}|_{\ker(\boldsymbol{M}(\theta))})^\dag \boldsymbol{L})\boldsymbol{M}(\theta)^\dag\boldsymbol{N}(\theta)I_n
$, is continuous in the $\mu$--essential supremum norm.
\end{thm}
\begin{proof}
See Appendix \ref{sec:proofs}.
\end{proof}

The assumptions of Theorem \ref{thm:tykcont} are quite strong relative to the discussion so far. However, the relevant case for most macroeconometric applications is the case where 
$F=\mu I_n$, while $M(z,\theta)$ and $N(z,\theta)$ are Laurent matrix polynomials of uniformly bounded degree. In this case, the continuity of the coefficients of $M(z,\theta)$ and $N(z,\theta)$ is sufficient to ensure conditions (iii) and (iv) of Theorem \ref{thm:tykcont}. The New-Keynesian model \eqref{eq:nk} satisfies these conditions, as do all of the LREMs in \cite{canova}, \cite{dd}, or \cite{hs}. 

For the purpose of establishing asymptotic normality, we typically need not just continuity in the essential supremum norm but also differentiability in the essential supremum norm (i.e. the finite differential in $\theta$ converges to the infinitesimal differential in the essential supremum norm over $z\in\mathbb{T}$). This stronger form of differentiability allows us to differentiate under integrals, which appear in the asymptotics of maximum likelihood and generalized method of moments estimators. The following result provides exactly what we need.

\begin{thm}\label{thm:tykdiff}
Let $p$ be a positive integer. Under assumptions (i) -- (vi) of Theorem \ref{thm:tykcont} and,
\begin{enumerate}
\item[(vii)] $M(z,\theta)$, $N(z,\theta)$, $\frac{d}{dz}M(z,\theta)$, and $\frac{d}{dz}N(z,\theta)$ are jointly continuously differentiable of all orders up to $p$ for all $(z,\theta)\in\mathbb{T}\times\Theta$.
\end{enumerate}
Then $\phi_{\boldsymbol{L}}(\theta)=(\boldsymbol{I}-(\boldsymbol{L}|_{\ker(\boldsymbol{M}(\theta))})^\dag \boldsymbol{L})\boldsymbol{M}(\theta)^\dag\boldsymbol{N}(\theta)I_n$, is continuously differentiable of order $p$ with respect to $\theta$ in the $\mu$--essential supremum norm.
\end{thm}
\begin{proof}
See Appendix \ref{sec:proofs}.
\end{proof}

Condition (vii) of Theorem \ref{thm:tykdiff} is a direct strengthening of condition (iv) of Theorem \ref{thm:tykcont}. When $M(z,\theta)$ and $N(z,\theta)$ are Laurent matrix polynomials of uniformly bounded degree, as is usually the case in macroeconomic models, the $p$-th order continuous differentiability of the coefficients of $M(z,\theta)$ and $N(z,\theta)$ is sufficient to ensure this condition is satisfied.

To see Theorem \ref{thm:tykdiff} in action, consider the regularized solution to the example from the previous subsection. This can be obtained analytically in  just a handful of steps.
\begin{align*}
\phi_{\boldsymbol{I}}(\theta)&=\boldsymbol{M}(\theta)^\ast(\boldsymbol{M}(\theta)\boldsymbol{M}(\theta)^\ast)^{-1}I_2\\
&=\left[\begin{array}{cc}
\boldsymbol{V}^{(-2)} & \theta \boldsymbol{V}^{(-1)} \\
\boldsymbol{0} & \boldsymbol{I}
\end{array}\right]\left(
\left[\begin{array}{cc}
\boldsymbol{V}^2 & \boldsymbol{0} \\
\theta \boldsymbol{V} & \boldsymbol{I}
\end{array}\right]\left[\begin{array}{cc}
\boldsymbol{V}^{(-2)} & \theta \boldsymbol{V}^{(-1)} \\
\boldsymbol{0} & \boldsymbol{I}
\end{array}\right]
\right)^{-1}\left[\begin{array}{cc}
1 & 0 \\ 0 & 1
\end{array}\right]\\
&=\left[\begin{array}{cc}
\boldsymbol{V}^{(-2)} & \theta \boldsymbol{V}^{(-1)} \\
\boldsymbol{0} & \boldsymbol{I}
\end{array}\right]\left[\begin{array}{cc}
\boldsymbol{I} & \theta \boldsymbol{V} \\
\theta \boldsymbol{V}^{(-1)} & (\theta^2+1)\boldsymbol{I}
\end{array}\right]^{-1}\left[\begin{array}{cc}
1 & 0 \\ 0 & 1
\end{array}\right].
\end{align*}
To invert the operator in the expression above, we note that it is of the form defined in Definition \ref{defn:MV}, with underlying $\mathcal{W}^{2\times 2}$ element, $\left[\begin{smallmatrix} 1 & \theta z \\ \theta z^{-1} & (\theta^2+1)\end{smallmatrix}\right]$. This matrix has the Wiener-Hopf factorization $\left[\begin{smallmatrix} 1 & \frac{\theta}{\theta^2+1}z \\ 0 & 1\end{smallmatrix}\right]\left[\begin{smallmatrix} 1 & 0\vspace{0.1cm} \\ 0 & 1\end{smallmatrix}\right]\left[\begin{smallmatrix} \frac{1}{\theta^2+1} & 0 \\ \theta z^{-1} & \theta^2+1\end{smallmatrix}\right]$. Thus,
\begin{align*}
\phi_{\boldsymbol{I}}(\theta)&=\left[\begin{array}{cc}
\boldsymbol{V}^{(-2)} & \theta \boldsymbol{V}^{(-1)} \\
\boldsymbol{0} & \boldsymbol{I}
\end{array}\right]\left(\left[\begin{array}{cc}
\boldsymbol{I} & \frac{\theta}{\theta^2+1} \boldsymbol{V} \\
\boldsymbol{0} & \boldsymbol{I}
\end{array}\right]\left[\begin{array}{cc}
\frac{1}{\theta^2+1}\boldsymbol{I} & \boldsymbol{0} \\
\theta \boldsymbol{V}^{(-1)} & (\theta^2+1)\boldsymbol{I}
\end{array}\right]\right)^{-1}\left[\begin{array}{cc}
1 & 0 \\ 0 & 1
\end{array}\right]\\
&=\left[\begin{array}{cc}
\boldsymbol{V}^{(-2)} & \theta \boldsymbol{V}^{(-1)} \\
\boldsymbol{0} & \boldsymbol{I}
\end{array}\right]\left(\left[\begin{array}{cc}
(\theta^2+1)\boldsymbol{I} & \boldsymbol{0} \\
-\theta \boldsymbol{V}^{(-1)} & \frac{1}{\theta^2+1}\boldsymbol{I}
\end{array}\right]\left[\begin{array}{cc}
\boldsymbol{I} & -\frac{\theta}{\theta^2+1} \boldsymbol{V} \\
\boldsymbol{0} & \boldsymbol{I}
\end{array}\right]\right)\left[\begin{array}{cc}
1 & 0 \\ 0 & 1
\end{array}\right]\\
&=\left[\begin{array}{cc}
\boldsymbol{V}^{(-2)} & -\frac{\theta}{1+\theta^2}\boldsymbol{V}^{(-2)}\boldsymbol{V}+\frac{\theta}{1+\theta^2}\boldsymbol{V}^{(-1)}\\
-\theta\boldsymbol{V}^{(-1)} & \frac{1}{1+\theta^2}\boldsymbol{I}+\frac{\theta^2}{1+\theta^2}\boldsymbol{V}^{(-1)}\boldsymbol{V}
\end{array}\right]\left[\begin{array}{cc}
1 & 0 \\ 0 & 1
\end{array}\right]\\
&=\left[\begin{array}{cc}
z^{-2} & \frac{\theta}{1+\theta^2}z^{-1} \\ -\theta z^{-1} & \frac{1}{1+\theta^2}
\end{array}\right],
\end{align*}
where the last equality follows from the fact that $\ker(\boldsymbol{V})=\mathbb{C}^{1\times 2}$ when $F=\mu I_2$. As guaranteed by Theorems \ref{thm:tykcont} and \ref{thm:tykdiff}, this solution is not just continuous as a function of $\theta$ but also smooth. Note that the implicit right inverse of $\boldsymbol{M}_0$ in $\boldsymbol{M}^\dag$ is
\begin{align*}
\boldsymbol{M}_-(\theta)\boldsymbol{M}(\theta)^\dag\boldsymbol{M}_+(\theta)=\begin{cases}
\left[\begin{array}{cc}
\boldsymbol{V}^{(-2)} & 0 \\ 0 & 1
\end{array}\right],& \theta=0,\vspace{0.5cm}\\
\left[\begin{array}{cc}
\boldsymbol{V}^{(-1)} & \frac{1}{1+\theta^2}\left(\boldsymbol{V}^{(-1)}\boldsymbol{V}-\boldsymbol{I}\right) \\ \boldsymbol{0} & \boldsymbol{V}^{(-1)}\end{array}\right],& \theta\neq0.
\end{cases}
\end{align*}
Thus, it is only by using a special right inverse of $\boldsymbol{M}_0(\theta)$ that we can absorb the effect of discontinuity in the partial indices at $\theta=0$ on the solution.

To summarize, the LREM problem in macroeconometrics is ill-posed. However, regularization produces solutions that are unique, continuous, and even smooth under very general regularity conditions.

\section{Application}\label{sec:application}
As an application of the spectral approach to LREMs, we apply mainstream methodology to estimate and draw inference on the non-generic model of Section \ref{sec:regular}. We will see that the Gaussian likelihood function displays very irregular behavior, invalidating underlying assumptions of mainstream frequentist and Bayesian analysis. In turn, regularized solutions can avoid some of these anomalies.

Consider the non-generic system of Section \ref{sec:regular} in the time domain,
\begin{align*}
\begin{split}
\boldsymbol{P}(X_{1t+2}|\mathscr{H}_t)&=\xi_{1t}\\
\theta_0 \boldsymbol{P}(X_{1t+1}|\mathscr{H}_t)+X_{2t}&=\xi_{2t}
\end{split}& &t\in\mathbb{Z},
\end{align*}
with $\xi$ an i.i.d.\ Gaussian sequence of mean zero and covariance matrix $I_2$. The objective is to estimate and draw inference on $\theta_0\in\mathbb{R}$ from an observed dataset $x_T=(X_1^\ast,\ldots,X_T^\ast)^\ast$. We will follow mainstream methodology in disregarding uniqueness, identifiability, and invertibility in parametrizing the model (see \cite{ident} for a careful parametrization that addresses all of these considerations). The solution we computed in the previous section is
\begin{align*}
X_t=
\begin{cases}
\vspace{0.5cm}\left[\begin{array}{c}
\xi_{1t-2}\\
\xi_{2t}
\end{array}\right],& \theta_0=0,\\
\left[\begin{array}{c} \xi_{1t-2}+\theta_0^{-1}\xi_{2t-1}\\
-\theta_0\xi_{1t-1}\end{array}\right],& \theta_0\neq0,
\end{cases}& &t\in\mathbb{Z}.
\end{align*}
Let $\Sigma_T(\theta)$ be the covariance of $x_T$ according to the model parametrized by $\theta$. Then the likelihood function is
\begin{align*}
p(x_T|\theta)=(2\pi)^{-T}\det(\Sigma_T(\theta))^{-\frac{1}{2}}\exp\left\{-\frac{1}{2}x_T^\ast\Sigma_T(\theta)^{-1}x_T\right\}.
\end{align*}
It is more convenient to work with
\begin{align*}
\ell_T(\theta)=\frac{1}{T}\log\det\Sigma_T(\theta)+\frac{1}{T}x_T^\ast\Sigma_T(\theta)^{-1}x_T.
\end{align*}
We will refer to $\ell_T(\theta)$ as the log-likelihood function. Observe that
\begin{align*}
\Sigma_T(\theta)=\begin{cases}
I_{2T},& \theta=0\\
\left[\begin{array}{cccccccccc}
G_0 & G_1 & 0 & 0 & \cdots & 0\\
G_1^\ast & G_0 & G_1 & 0 & \cdots & 0\\
0 & G_1^\ast & G_0 & G_1 & \ddots & \vdots \\
\vdots & \ddots & \ddots & \ddots  & \ddots & 0\\
0 & \cdots & 0 & G_1^\ast & G_0 & G_1\\
0 & \cdots & 0 & 0 & G_1^\ast & G_0
\end{array}\right],& \theta\neq0,
\end{cases}
\end{align*}
where
\begin{align*}
G_0=\left[\begin{array}{cc}
1+\theta^{-2} & 0 \\ 0 & \theta^2
\end{array}\right],& &G_1=\left[\begin{array}{cc}
0 & 0 \\ -\theta & 0
\end{array}\right].
\end{align*}
Then $\Sigma_T(0)^{-1}=I_{2T}$ and it is easily checked that for $\theta\neq0$,
\begin{align*}
\Sigma_1(\theta)^{-1}&=\left[\begin{array}{cc} \frac{\theta^2}{1+\theta^2} & 0 \\ 0 & \theta^{-2} \end{array}\right]\\
\Sigma_T(\theta)^{-1}&=\left[\begin{array}{cccccccccc}
A & B & 0 & 0 & \cdots & 0\\
B^\ast & C & B & 0 & \cdots & 0\\
0 & B^\ast & C & B & \ddots & \vdots \\
\vdots & \ddots & \ddots & \ddots  & \ddots & 0\\
0 & \cdots & 0 & B^\ast & C & B\\
0 & \cdots & 0 & 0 & B^\ast & D
\end{array}\right],& &T>1,
\end{align*}
where
\begin{align*}
A=\left[\begin{array}{cc}
\frac{\theta^2}{1+\theta^2} & 0 \\ 0 & \frac{\theta^2+1}{\theta^2}
\end{array}\right],& &B=\left[\begin{array}{cc}
0 & 0 \\ \theta & 0
\end{array}\right],& &C=\left[\begin{array}{cc}
\theta^2 & 0 \\ 0 & \frac{\theta^2+1}{\theta^2}
\end{array}\right],& & D=\left[\begin{array}{cc}
\theta^2 & 0 \\ 0 & \theta^{-2}
\end{array}\right].
\end{align*}
It is also easily checked that $\det(\Sigma_T(\theta))=1+\theta^2$ for $\theta\neq0$ for all $T\geq1$. Substituting into the log-likelihood function we find that
\begin{align*}
\ell_T(0)=\frac{1}{T}\sum_{t=1}^TX_t^\ast X_t
\end{align*}
and, for $\theta\neq0$,
\begin{align*}
\ell_T(\theta)=\begin{cases}
\log(1+\theta^2)+\frac{\theta^2}{1+\theta^2}X_{11}^2+\theta^{-2}X_{21}^2,& T=1\\
\frac{1}{T}\log(1+\theta^2)+\frac{1}{T}X_1^\ast AX_1+\frac{1}{T}\sum_{t=2}^{T-1}X_t^\ast CX_t+\frac{2}{T}\sum_{t=1}^{T-1}X_t^\ast BX_{t+1}+\frac{1}{T}X_T^\ast DX_T,& T>1.
\end{cases}
\end{align*}
It is clear that
\begin{align*}
\lim_{0\neq|\theta|\rightarrow0}\ell_T(\theta)=
\lim_{|\theta|\rightarrow\infty}\ell_T(\theta)=\infty,& &T\geq1,
\end{align*}
almost surely because $x_T$ has a continuous distribution regardless of the value of $\theta_0$. Since $\ell_T(\theta)$ is continuous on $\mathbb{R}\backslash\{0\}$ and bounded from below, $\ell_T(\theta)$ must attain local minima somewhere to the left and to the right of $\theta=0$ with probability 1. Thus, almost surely, $\ell_T(\theta)$ is $W$ shaped, diverging as $0\neq|\theta|\rightarrow0$ and as $|\theta|\rightarrow\infty$, although it has a finite value at $\theta=0$. See Figure \ref{fig:ng1} for an illustration with simulated data. It follows that the likelihood function $p(x_T|\theta)$ almost surely has at least three local maxima, one of which is $\theta=0$. Moreover, almost surely $\lim_{0\neq|\theta|\rightarrow0}p(x_T|\theta)=0$ and $p(x_T|\theta)|_{\theta=0}>0$ so the likelihood function almost surely has a simple discontinuity at $\theta=0$.

\begin{figure}[t!]
\caption{Log-likelihood of Non-regularized Solutions to the Non-generic Model.}
\label{fig:ng1}
\smallskip
\center
\includegraphics[width=8.5cm]{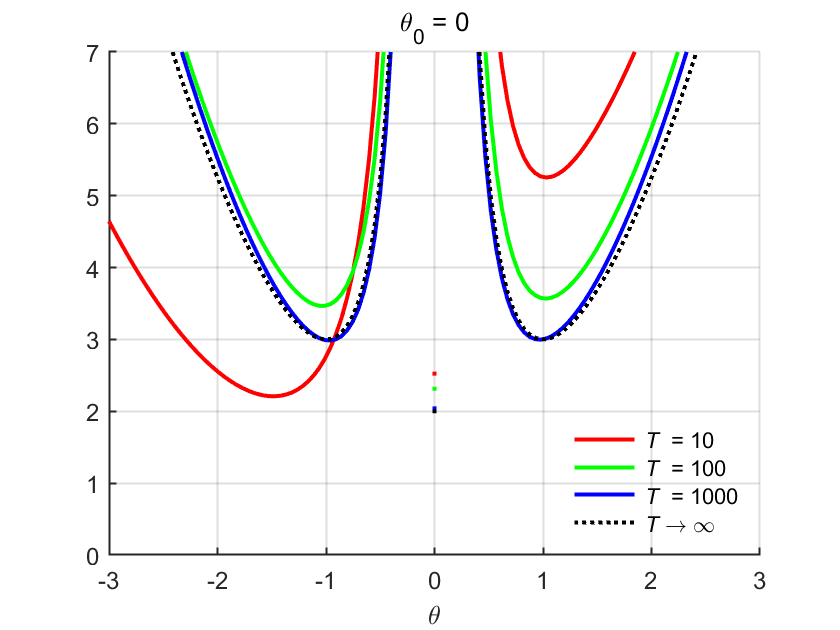}
\includegraphics[width=8.5cm]{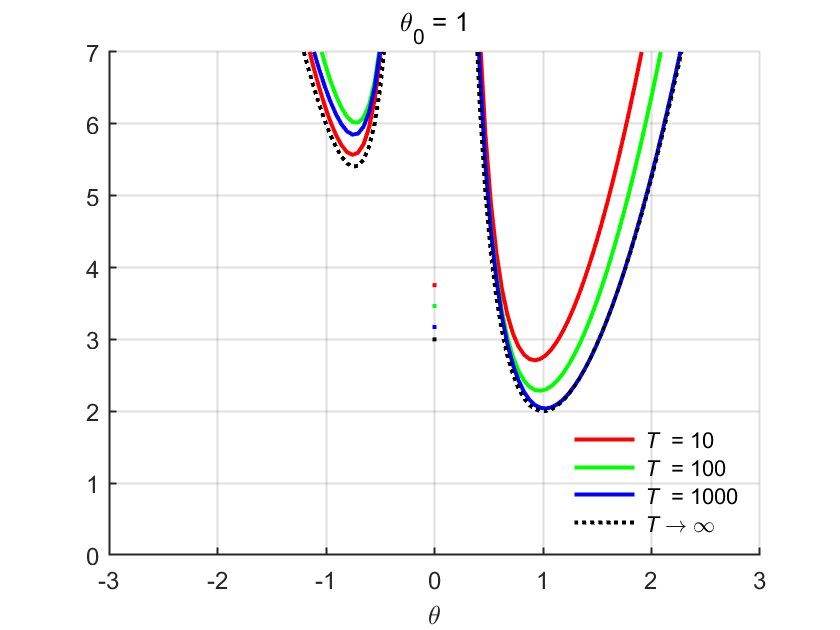}
\end{figure}

Taking the limit $T\rightarrow\infty$ is straightforward as most terms vanish. By the strong law of large numbers, almost surely,
\begin{align*}
\lim_{T\rightarrow\infty}\ell_T(\theta)=\begin{cases}
\mathrm{tr}(\mathbb{E}(X_0X_0^\ast)),& \theta=0\\
\mathrm{tr}(C\mathbb{E}(X_0X_0^\ast )+2B\mathbb{E}(X_1 X_0^\ast)),& \theta\neq0.
\end{cases}
\end{align*}
The moments above can be read from $G_0$ and $G_1$ evaluated at $\theta=\theta_0$ so the right hand side is
\begin{align*}
\ell(\theta,\theta_0)=\begin{cases}
2,& \theta=0,\,\theta_0=0\\
\theta_0^2+1+\theta_0^{-2},& \theta=0,\,\theta_0\neq0\\
\theta^2+1+\theta^{-2},& \theta\neq0,\,\theta_0=0\\
\theta^2(1+\theta_0^{-2})+(1+\theta^{-2})\theta_0^2-2\theta\theta_0,& \theta\neq0,\,\theta_0\neq0.
\end{cases}
\end{align*}
See Figure \ref{fig:ng1} for plots of $\ell(\theta,0)$ and $\ell(\theta,1)$. It is easy to check that $\ell(\theta,0)$ has local minima at $0$, $+1$, and $-1$ and if $\theta_0\neq0$, $\ell(\theta,\theta_0)$ has local minima at $0$, $\theta_0$, and a third point of the opposite sign to $\theta_0$ (the expression is complicated so we omit it). Note that for $\theta\in\mathbb{R}\backslash\{0\}$, $\ell_T(\theta)-\ell(\theta,\theta_0)$ is expressible as a sum of eight terms, each of which is a product of a continuous function in $\theta$ and a quantity that converges almost surely to zero. Thus, $\lim_{T\rightarrow\infty}\sup_{\theta\in\Theta}|\ell_T(\theta)-\ell(\theta,\theta_0)|=0$ for any compact subset $\Theta\subset\mathbb{R}\backslash\{0\}$.

Consider the case $\theta_0=0$. It is clear from the discussion above that the maximum likelihood estimator is consistent, although it is not asymptotically normal, and any prior that has an atom at the point $\theta=0$ will yield a posterior distribution that concentrates at the true value. However, this analytical approach cannot be carried over to a general methodology because analytical descriptions of solutions are generally infeasible for all but the simplest LREMs; the mapping from $(M,N)$ to any solution is highly non-linear. The practitioner who seeks to estimate this model on a computer may suspect a discontinuity at $\theta=0$ because they will be able to see the solution behaving erratically near that point but being short of an analytical description of the solution: (i) they will not have any theoretical guarantees as to what happens when $\theta\rightarrow0$ because none exist and (ii) they will not be able to compute the solution at $\theta=0$ reliably due to the numerical instability of computing Wiener-Hopf factorizations of non-generic elements of $\mathcal{W}^{m\times m}$.

Now consider the likely outcome of applying mainstream econometric methodology to the data when $\theta_0=0$.

Let us begin with the most basic of empirical analyses: plotting the likelihood function. Each evaluation of the likelihood function is obtained by solving the model numerically then using the Kalman filter \citep{canova,dd,hs}. Without knowing a priori the importance of the likelihood function at $\theta=0$, the researcher will not know to include that point specifically in the list plot of the likelihood function. Even if they did include it, the plot is not guaranteed to reveal the value of the likelihood function at this point because of the numerical instability of the computation at this point. Thus, the plot of the likelihood function will appear to have modes near $\pm1$, while the population parameter, $\theta_0=0$, may (and most likely will) appear to be the least likely point of the parameter space because $\lim_{0\neq|\theta|\rightarrow0}p(x_T|\theta)=0$.

Mainstream frequentist analysis utilizes variations of the Newton-Raphson algorithm to minimize $\ell_T(\theta)$, then obtains confidence intervals from derivatives of $\ell_T(\theta)$ \citep{canova,dd}. This exercise is almost certain to fail in our setting because, again, without knowing a priori that the point $\theta=0$ is crucial to the analysis, the algorithm will be initialized in $\mathbb{R}\backslash\{0\}$ and it will converge almost surely to (approximately) either $+1$ or $-1$ depending on where it is initialized. The standard error of the estimate, computed using the second derivative of $\ell_T(\theta)$ at the estimated $\theta$, will be of order $O(T^{-1/2})$ almost surely so as the sample size gets larger it will appear to the researcher that the estimate is precise when it is in fact completely wrong.

Mainstream Bayesian analysis utilizes variations on the random walk Metropolis-Hastings algorithm to sample from the posterior distribution function,
\begin{align*}
p(\theta|x_T)\propto p(x_T|\theta)p(\theta),
\end{align*}
where $p(\theta)$ is a prior distribution function, before reporting estimates of the mean or median of the posterior distribution as well as credibility regions \citep{canova,dd,hs}. Here, again, estimation is certain to fail because, being unaware of the significance of the point $\theta=0$, the researcher is likely to use a continuous prior that places probability zero at the set $\{0\}$. The result is that the posterior distribution will concentrate around $\pm1$. The only way to allow the prior to be informative about $\theta$ is to allow for an atom in the prior at $\theta=0$ but none of the aforementioned textbooks present algorithms that can accommodate posterior distributions with atoms.

In contrast to the above, the regularized solution,
\begin{align*}
\tilde{X}_{1t}&=\xi_{1t-2}+\frac{\theta_0}{1+\theta_0^2}\xi_{2t-1}\\
\tilde{X}_{2t}&=-\theta_0\xi_{1t-1}+\frac{1}{1+\theta_0^2}\xi_{2t}
\end{align*}
displays no such anomalies. If $\tilde{x}_T=(\tilde{X}_1^\ast,\ldots,\tilde{X}_T^\ast)^\ast$ and $\tilde{\Sigma}_T(\theta)$ is the covariance of $\tilde x_T$ according to the model parametrized by $\theta$. Then the $ij$-th block of $\tilde{\Sigma}_T(\theta)$ is of the form $\int z^{i-j}\phi_{\boldsymbol{I}}(\theta)\phi_{\boldsymbol{I}}(\theta)^\ast d\mu$. Since $\phi_{\boldsymbol{I}}(\theta)$ is continuously differentiable of any order with respect to $\theta$ in the $\mu$--essential supremum norm, we can exchange the integration with differentiation with respect to $\theta$ so $\tilde{\Sigma}_T(\theta)$ is continuously differentiable of any order with respect to $\theta$. This implies that the same is true of the likelihood function at every point $\theta\in\mathbb{R}$ where $\tilde{\Sigma}_T(\theta)^{-1}$ exists. It is easy to check that $\tilde{\Sigma}_T(\theta)$ and $\tilde{\Sigma}_T(\theta)^{-1}$ have the same block structure as $\Sigma_T(\theta)$ and $\Sigma_T(\theta)^{-1}$ respectively. The only difference is that now
\begin{align*}
G_0&=\left[
\begin{array}{cc}
 \frac{\theta ^2}{\left(\theta ^2+1\right)^2}+1 & 0 \\
 0 & \theta ^2+\frac{1}{\left(\theta ^2+1\right)^2} \\
\end{array}
\right],& G_1&=\left[
\begin{array}{cc}
 0 & 0 \\
 \theta  \left(\frac{1}{\left(\theta ^2+1\right)^2}-1\right) & 0 \\
\end{array}
\right],\\
A&=\left[
\begin{array}{cc}
 \frac{(\theta^2+1)^2}{\theta^4+3\theta^2+1} & 0 \\
 0 & \frac{\theta ^4+3 \theta ^2+1}{\left(\theta ^2+1\right)^2} \\
\end{array}
\right],& B&=\left[
\begin{array}{cc}
 0 & 0 \\
 \frac{\theta ^3 \left(\theta ^2+2\right)}{\left(\theta ^2+1\right)^2} & 0 \\
\end{array}
\right],\\
C&=\left[
\begin{array}{cc}
 \frac{\theta ^6+2 \theta ^4+\theta ^2+1}{\left(\theta ^2+1\right)^2} & 0 \\
 0 & \frac{\theta ^4+3 \theta ^2+1}{\left(\theta ^2+1\right)^2} \\
\end{array}
\right],& D&=\left[
\begin{array}{cc}
 \frac{\theta ^6+2 \theta ^4+\theta ^2+1}{\left(\theta ^2+1\right)^2} & 0 \\
 0 &  \frac{(\theta^2+1)^2}{\theta^4+3\theta^2+1}
\end{array}
\right].
\end{align*}
Figure \ref{fig:ng2} provides a plot of the log-likelihood function of the regularized solution with simulated data along with the almost sure limit in $T$. It is clear now that frequentist and Bayesian analyses can be carried out using the regularized solution. Note that the first, second, and third derivatives of the limiting log-likelihood function vanish at $\theta=0$ when $\theta_0=0$. Thus a slightly more delicate frequentist analysis is necessary (see e.g.\ \cite{rotnitzky}). This is, again, made possible by the existence of derivatives of all orders, thanks to regularization.

\begin{figure}[t!]
\caption{Log-likelihood of Regularized Solutions to the Non-generic Model.}
\label{fig:ng2}
\smallskip
\center
\includegraphics[width=8.5cm]{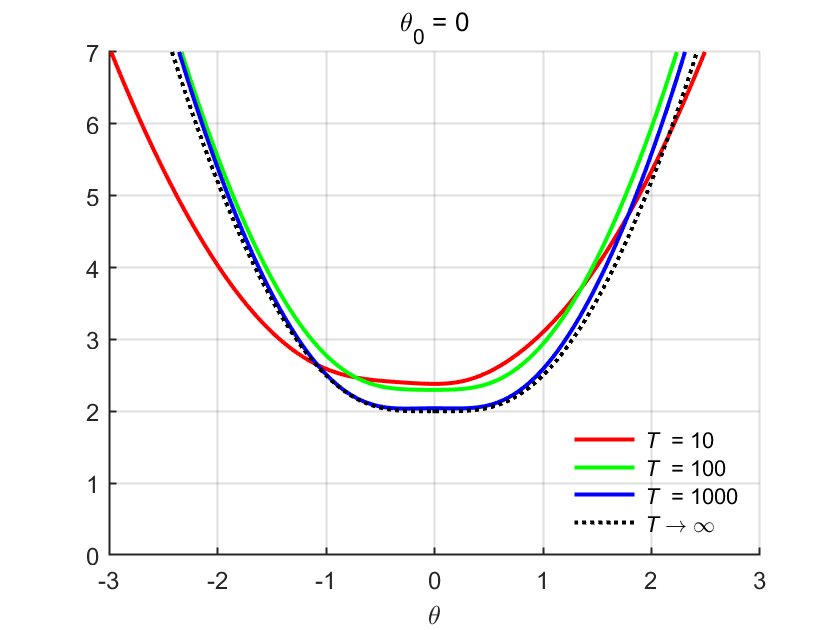}
\includegraphics[width=8.5cm]{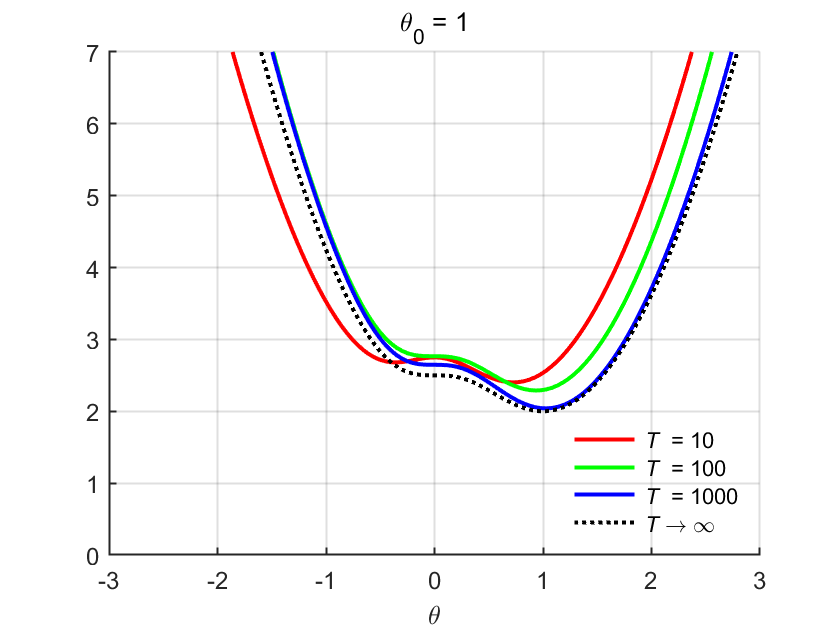}
\end{figure}

\section{Conclusion}\label{sec:conclusion}
This paper has extended the LREM literature in the direction of spectral analysis. It has done so by relaxing common assumptions and developing a new regularized solution. The spectral approach has allowed us to study examples of limiting Gaussian likelihood functions of simple LREMs, which demonstrate the advantages of the new regularized solution as well as highlighting weaknesses in mainstream methodology. For the remainder, we consider some implications for future work.

The regularized solution proposed in this paper is the natural one to consider for the frequency domain. However, its motivation has been entirely econometric in nature and this begs the question of whether it can be derived from decision-theoretic foundations as proposed by \cite{taylor}. Regularization has already made inroads into decision theory \citep{gabaix}. This line of inquiry may yield other forms of regularization which may have more interesting dynamic or statistical properties.

The parameter space in LREMs can be disconnected and it can matter a great deal where one initializes their optimization routine to find the maximum likelihood estimator or their exploration routine for sampling from the posterior distribution. Therefore, it would be useful to develop simple preliminary estimators of LREMs analogous to the results for VARMA (e.g.\ Sections 8.4 and 11.5 of \cite{bd}) that can provide good initial conditions for frequentist and Bayesian algorithms.

Wiener-Hopf factorization theory has been demonstrated here and in previous work (\cite{onatski}, \cite{linsys}, \cite{ident}) to be the appropriate mathematical framework for analysing LREMs. This begs the question of what is the appropriate framework for non-linear rational expectations models. The hope is that the mathematical insights from the theory of linear models will allow for important advances in non-linear modeling and inference.

Finally, researchers often rely on high-level assumptions as tentative placeholders when a result seems plausible but a proof from first principles is not apparent. Continuity of solutions to LREMs with respect to parameters has for a long time been one such high-level assumption in the LREM literature. The fact that it is generally false, should give us pause to reflect on the prevalence of this technique. At the same time, the author hopes to have conveyed a sense of optimism that theoretical progress from first principles is possible.

\section*{Appendix}

\appendix

\section{Parametrizing the Set of Solutions}\label{sec:setsol}
By Lemma \ref{lem:existencephi}, the set of solutions to \eqref{eq:lremfreqcompact} when $\kappa_m\geq0$ is the affine space $\boldsymbol{M}^{(-1)}\boldsymbol{N}I_n+\ker(\boldsymbol{M})$, where
\begin{align*}
\ker(\boldsymbol{M})=\boldsymbol{M}_-^{-1}\ker(\boldsymbol{M}_0).
\end{align*}
It suffices to parametrize
\begin{align*}
\ker(\boldsymbol{M}_0)=\left\{\left[\begin{smallmatrix}
\psi_1\\
\vdots\\
\psi_m
\end{smallmatrix}\right]: \psi_i\in \ker(\boldsymbol{V}^{\kappa_i}),\; i=1,\ldots, m\right\}.
\end{align*}
The problem then reduces to the parametrization of $\ker(\boldsymbol{V}^\kappa)$ for $\kappa\geq0$ and, by Lemma \ref{lem:V} (viii), the problem reduces even further to the parametrization of $\ker(\boldsymbol{V})$. Now if $\ker(\boldsymbol{V})=\{0\}$, then $\ker(\boldsymbol{V}^\kappa)=\{0\}$ for all $\kappa\geq0$ and so $\ker(\boldsymbol{M}_0)=\{0\}$. If $\ker(\boldsymbol{V})\neq\{0\}$, we may find an orthonormal basis for it,
\begin{align*}
\Upsilon_1,\ldots, \Upsilon_r\in\ker(\boldsymbol{V}).
\end{align*}
Note that $r\leq n$ by Lemma \ref{lem:V} (vi). Lemma \ref{lem:V} (viii) then implies that
\begin{align*}
\Upsilon_1,\ldots, \Upsilon_r, \boldsymbol{V}^{(-1)}\Upsilon_1,\ldots, \boldsymbol{V}^{(-1)}\Upsilon_r,\ldots, \boldsymbol{V}^{(1-\kappa)}\Upsilon_1,\ldots, \boldsymbol{V}^{(1-\kappa)}\Upsilon_r
\end{align*}
is an orthonormal basis for $\ker(\boldsymbol{V}^\kappa)$ for $\kappa\geq0$. Thus,
\begin{align*}
\ker(\boldsymbol{M}_0)=\left\{K\Upsilon: K_{ij}(z)=\sum_{s=0}^{\kappa_i-1}K_{sij}z^{-s}, K_{sij}\in\mathbb{C},\; i=1,\ldots,m,\; j=1,\ldots,r,\; s=0,\ldots,\kappa_i-1\right\},
\end{align*}
where
\begin{align*}
\Upsilon=\left[\begin{array}{c}
\Upsilon_1\\
\vdots\\
\Upsilon_r
\end{array}\right].
\end{align*}
Therefore, $\ker(\boldsymbol{M}_0)$ is the space obtained by pre-multiplying $\Upsilon$ by the space of $m\times n$ complex matrix polynomials in $z^{-1}$ with $i$-th row degrees bounded by $\kappa_i-1$. Notice that the rows of $K$ associated with partial indices at zero are identically zero. The dimension of $\ker(\boldsymbol{M})$ can be obtained by counting the coefficients $K_{sij}$ above, $\dim\ker(\boldsymbol{M})=r(\kappa_1+\cdots+\kappa_m)$.

\section{Relation to Previous Literature}\label{sec:previous}
This section provides a review of the approaches of \cite{whiteman}, \cite{onatski}, \cite{tanwalker}, \cite{tan}, and \cite{meyer} (henceforth, the previous literature). We will see that the previous literature makes significantly stronger assumptions about $\xi$ that also make it cumbersome to address the topic of zeros of $\det(M)$. On the other hand, the stronger assumptions do allow for closed form expressions of solutions.

The previous literature takes as its starting point the existence of a Wold decomposition for $\xi$ with no purely deterministic part. Thus, it imposes that $F$ be absolutely continuous with respect to $\mu$ and that the spectral density matrix has an analytic spectral factorization with spectral factors of fixed rank $\mu$--a.e.\ \citep[Theorem II.4.1]{rozanov}. The conditions are
\begin{gather}
dF=\Gamma\Gamma^\ast d\mu,\nonumber\\
\sum_{s=0}^\infty\Gamma_sz^{-s}\text{ converges for }|z|>1,\text{ where }\Gamma_s=\int z^s\Gamma d\mu\label{eq:wold}\\
\mathrm{rank}(\Gamma(z))=r, \quad\mu-\text{a.e. }z\in\mathbb{T}.\nonumber
\end{gather}
Theorem II.8.1 of \cite{rozanov} provides equivalent analytical conditions. This paper has demonstrated that a complete spectral theory of LREMs is possible without imposing any such restrictions.

Clearly, the spectral factorization in \eqref{eq:wold} is not unique. However, there always exists a spectral factor $\Gamma$ unique up to right multiplication by a unitary matrix such that there is an $\Upsilon\in H_0^r$ satisfying 
\begin{align*}
\Upsilon(z)\Gamma(z)=I_r,\qquad\mu-\text{a.e. }z\in\mathbb{T},
\end{align*}
\cite[Section II.4]{rozanov}. This choice of spectral factor yields a Wold representation. To see this, let
\begin{align*}
\zeta_t=\int z^t\Upsilon d\Phi,\qquad t \in\mathbb{Z}.
\end{align*}
Then 
\begin{align*}
\boldsymbol{E}\zeta_t\zeta_s^\ast=\int z^{t-s}\Upsilon dF \Upsilon^\ast=\int z^{t-s}\Upsilon\Gamma\Gamma^\ast \Upsilon^\ast d\mu=\int z^{t-s}I_rd\mu=\begin{cases}
I_r,& t=s,\\
0,& t\neq s.
\end{cases}
\end{align*}
Thus, $\zeta$ is an $r$-dimensional standardized white noise process causal in $\xi$. It follows easily that $\ker(\boldsymbol{V})=\{x^\ast\Upsilon: x\in\mathbb{C}^r\}$ so that $\dim\ker(\boldsymbol{V})=r$. Since $\Upsilon$ is a left inverse of $\Gamma$,
\begin{align*}
\|\Gamma \Upsilon-I_n\|^2_{H^n}=\mathrm{tr}\left(\int(\Gamma \Upsilon-I_n)\Gamma\Gamma^\ast(\Gamma \Upsilon-I_n)^\ast d\mu\right)=0
\end{align*}
and so
\begin{align*}
\xi_t=\int z^t d\Phi=\int z^t\Gamma\Upsilon d\Phi=\int \Gamma z^t\Upsilon d\Phi=\sum_{s=0}^\infty\Gamma_s\zeta_{t-s},\quad t\in\mathbb{Z},
\end{align*}
is a Wold representation with
\begin{align*}
\Gamma_s=\int z^s\Gamma d\mu,\qquad s\geq0.
\end{align*}

The previous literature imposes a priori the representation
\begin{align*}
X_t=\sum_{s=0}^\infty \Xi_s\zeta_{t-s},\qquad t\in\mathbb{Z}
\end{align*}
before attempting to solve for the coefficients $\Xi_s$ by complex analytic methods \citep{whiteman,tanwalker,tan,meyer} or by Wiener-Hopf factorization \citep{onatski}. This method cannot yield the correct solution if $\xi$ has a non-trivial purely deterministic part. In fact, the representation above need not be assumed a priori and can be derived as a consequence of Theorem \ref{thm:onatski1} under conditions \eqref{eq:wold}.
\begin{align*}
X_t&=\int z^t \left(\boldsymbol{M}^{(-1)}\boldsymbol{N}I_n+\boldsymbol{M}_-^{-1}\psi\right)d\Phi\\
&=\int z^t \left(\boldsymbol{M}^{(-1)}\boldsymbol{N}I_n+\boldsymbol{M}_-^{-1}\psi\right)\Gamma\Upsilon d\Phi\\
&=\int \Xi z^t \Upsilon d\Phi,\qquad t\in\mathbb{Z},
\end{align*}
where
\begin{align*}
\Xi=\left(\boldsymbol{M}^{(-1)}\boldsymbol{N}I_n+\boldsymbol{M}_-^{-1}\psi\right)\Gamma.
\end{align*}
Thus we have obtained the spectral characteristic of $X$ relative to the random measure associated with $\zeta$. It follows that $X$ is indeed representable as a moving average in $\zeta$ with coefficients,
\begin{align*}
\Xi_s=\int z^s\Xi d\mu,\qquad s\geq0.
\end{align*}

It is important to note, however, that the stronger assumptions of the previous literature lead to a more explicit expression for spectral characteristics of solutions. In particular,
\begin{align*}
\boldsymbol{N}I_n&=\boldsymbol{P}(N|H_0^m)\\
&=\boldsymbol{P}(N\Gamma\Upsilon|H_0^m)\\
&=[N\Gamma]_-\Upsilon.
\end{align*}
This follows from the fact that $\{z^t\Upsilon:t>0\}$ is orthogonal to $H_0^m$. It then follows that
\begin{align*}
\boldsymbol{M}^{(-1)}\boldsymbol{N}I_n+\boldsymbol{M}_-^{-1}\psi&=\boldsymbol{M}_-^{-1}\boldsymbol{M}_0^{(-1)}\boldsymbol{M}_+^{-1}\left([N\Gamma]_-\Upsilon\right)+\boldsymbol{M}_-^{-1}\psi\\
&=M_-^{-1}M_0^{-1}\boldsymbol{P}(M_+^{-1}[N\Gamma]_-\Upsilon|H_0^m)+M_-^{-1}\psi\\
&=M_-^{-1}M_0^{-1}[M_+^{-1}[N\Gamma]_-]_-\Upsilon+M_-^{-1}\psi\\
&=M_-^{-1}M_0^{-1}[M_+^{-1}N\Gamma]_-\Upsilon+M_-^{-1}\psi,
\end{align*}
because $[M_+^{-1}[N\Gamma]_-]_-=[M_+^{-1}N\Gamma]_--[M_+^{-1}[N\Gamma]_+]_-=[M_+^{-1}N\Gamma]_--0$. Finally, using the results of Appendix \ref{sec:setsol}, $\psi=K\Upsilon$, where $K$ is a matrix polynomial in $z^{-1}$. Thus,
\begin{align*}
\phi=M_-^{-1}M_0^{-1}[M_+^{-1}N\Gamma]_-\Upsilon+M_-^{-1}K\Upsilon
\end{align*}
and so
\begin{align*}
\Xi=M_-^{-1}M_0^{-1}[M_+^{-1}N\Gamma]_-+M_-^{-1}K.
\end{align*}

An interesting special case is when $M$, $N$, and $\Gamma$ are rational. If $M$ is rational, any Wiener-Hopf factors $M_\pm$ are rational as well \cite[Theorem I.2.1]{cg}. If $\Gamma$ is rational, the associated $\Upsilon$ can also be chosen to be rational \cite[Theorem 1]{baggioferrante}. Thus, $\phi$ and $\Xi$ are also rational.

Finally, we note that the previous literature has avoided any mention of zeros of $\det(M)$. Indeed, it is substantially more difficult to deal with zeros of $\det(M)$ without the general theory of this paper because one no longer has access to degenerate spectral measures (e.g.\ the Dirac measure) that can straightforwardly excite the instability of the system.

\section{Solving the Non-Generic System}\label{sec:nongeneric}
Consider solving the system \eqref{eq:nongeneric} by hand when $\theta\neq0$. In the time domain, this is given by
\begin{align}
\boldsymbol{P}(X_{1t+2}|\mathscr{H}_t)&=\xi_{1t}\label{eq:nongenerictime1}\\
\theta \boldsymbol{P}(X_{1t+1}|\mathscr{H}_t)+X_{2t}&=\xi_{2t},\label{eq:nongenerictime2}
\end{align}
where $\xi$ is a standard white noise process. Macroeconomic textbooks (e.g.\ \cite{sargent79}) implicitly assume the admissibility of the following elementary operations for solving LREMs.
\begin{enumerate}
\item Multiply both sides of equation $i$ by a non-zero constant.
\item Apply the mapping, $h\mapsto\boldsymbol{P}(\boldsymbol{U}^sh,|\mathscr{H}_t)$, to both sides of equation $j\neq i$ and add the resultant to equation $i$.
\item Permute equations $i$ and $j$.
\end{enumerate}
These elementary operations are a direct generalization of the familiar row-reduction elementary operations in linear algebra ($s=0$) as well as the elementary operations for VARMA manipulation ($s\leq0$). See p.\ 39 of \cite{hd}.

We seek a process $X$ causal in $\xi$ that solves \eqref{eq:nongenerictime1} and \eqref{eq:nongenerictime2}. The most immediate solution does not require any of the elementary operations above. It can be obtained by noting that, since $\boldsymbol{P}(\xi_{1t}|\mathscr{H}_t)=\xi_{1t}$, we can set
\begin{align*}
X_{1t}&=\xi_{1t-2}
\end{align*}
then substitute into equation \eqref{eq:nongenerictime2} and, noting that $\boldsymbol{P}(\xi_{1t-1}|\mathscr{H}_t)=\xi_{1t-1}$, we obtain
\begin{align*}
\theta \xi_{1t-1}+X_{2t}&=\xi_{2t}.
\end{align*}
We therefore obtain the solution
\begin{align*}
X_{1t}&=\xi_{1t-2}\\
X_{2t}&=-\theta\xi_{1t-1}+\xi_{2t},
\end{align*}
which is clearly causal in $\xi$. However, this is not the only way of solving the problem. If we multiply both sides of equation \eqref{eq:nongenerictime1} by $-\theta$ (operation (i)), we obtain
\begin{align*}
-\theta\boldsymbol{P}(X_{1t+2}|\mathscr{H}_t)=-\theta\xi_{1t}.
\end{align*}
Now apply $h\mapsto\boldsymbol{P}(\boldsymbol{U}h|\mathscr{H}_t)$ to both sides of equation \eqref{eq:nongenerictime2} and add the resultant to equation \eqref{eq:nongenerictime1} (operation (ii)) to obtain
\begin{align*}
\boldsymbol{P}(X_{2t+1}|\mathscr{H}_t)&=-\theta\xi_{1t}
\end{align*}
Thus, we can set
\begin{align*}
X_{2t}=-\theta\xi_{1t-1}
\end{align*}
Plugging this into \eqref{eq:nongenerictime2}, we obtain
\begin{align*}
\theta \boldsymbol{P}(X_{1t+1}|\mathscr{H}_t)-\theta\xi_{1t-1}&=\xi_{2t},
\end{align*}
which allows us to set (using operation (i)),
\begin{align*}
X_{1t}&=\xi_{1t-2}+\theta^{-1}\xi_{2t-1}.
\end{align*}
Finally, permuting the two equations we have obtained (operation (iii)), we obtain the alternative solution
\begin{align*}
X_{1t}&=\xi_{1t-2}+\theta^{-1}\xi_{2t-1}\\
X_{2t}&=-\theta\xi_{1t-1}.
\end{align*}
This is precisely the solution obtained in Section \ref{sec:regular}.

\section{Proofs}\label{sec:proofs}
Theorems \ref{thm:tykcont} and \ref{thm:tykdiff} require some additional technical results that we need to develop first.

Define for $\phi\in H^m$ and $\omega\in\mathbb{R}$, $\boldsymbol{S}_\omega \phi(z)=\phi(ze^{\mathrm{i}\omega})$ and, for $\omega\neq0$, $\boldsymbol{\Delta}_\omega \phi=\frac{1}{\omega}(\boldsymbol{S}_\omega\phi-\phi)$. Set $\boldsymbol{d}\phi$ to be the $H^m$ limit of $\boldsymbol{\Delta}_\omega \phi$ as $\omega\rightarrow\infty$ whenever it exists. Clearly $\boldsymbol{S}_\omega$ is a unitary operator on $H_0^m$, $\boldsymbol{\Delta}_\omega$ is a bounded operator on $H_0^m$, and $\boldsymbol{d}$ is not generally bounded on $H_0^m$.

We will need the following inequality, inspired by similar inequalities of \cite{anderson} and \cite{ga}, to prove our main results.

\begin{lem}\label{lem:ga}
If $F=\mu I_n$ and $\phi,\boldsymbol{d}\phi\in H^m$, then
\begin{align*}
\|\phi\|_\infty\leq \|\phi\|_{H^m}+\|\boldsymbol{d}\phi\|_{H^m}.
\end{align*}
\end{lem}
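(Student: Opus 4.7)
The plan is to analyse $\phi$ via its Fourier series on $\mathbb{T}$ and bound the supremum by controlling each frequency component separately. Because $F=\mu I_m$, the space $H^m$ reduces to the ordinary matrix-valued $L^2$ space on the circle, so any $\phi\in H^m$ admits a Fourier expansion $\phi(z)=\sum_{s\in\mathbb{Z}}\phi_sz^s$ with $\phi_s=\int z^{-s}\phi\,d\mu\in\mathbb{C}^{m\times n}$ obeying $\|\phi\|_{H^m}^2=\sum_s\|\phi_s\|_{\mathbb{C}^{m\times n}}^2$ by Parseval. A short calculation with $\boldsymbol{S}_\omega\phi(z)=\phi(ze^{\mathrm{i}\omega})$ shows that $\boldsymbol{\Delta}_\omega$ is diagonal in this basis with symbol $(e^{\mathrm{i}s\omega}-1)/\omega$, bounded pointwise by $|s|$ and tending to $\mathrm{i}s$ as $\omega\to0$. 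By dominated convergence the hypothesis $\boldsymbol{d}\phi\in H^m$ therefore identifies $\boldsymbol{d}\phi$ as the element of $H^m$ with Fourier coefficients $\mathrm{i}s\phi_s$ and $\|\boldsymbol{d}\phi\|_{H^m}^2=\sum_ss^2\|\phi_s\|_{\mathbb{C}^{m\times n}}^2$. A Cauchy-Schwarz estimate against $\sum_{s\neq0}s^{-2}<\infty$ then shows that the Fourier series converges absolutely and uniformly on $\mathbb{T}$, so $\phi$ has a continuous representative.

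The estimate itself would proceed by splitting the uniform bound into constant and non-constant Fourier modes,
\begin{align*}
\|\phi(z)\|_{\mathbb{C}^{m\times n}}\leq\|\phi_0\|_{\mathbb{C}^{m\times n}}+\sum_{s\neq0}\|\phi_s\|_{\mathbb{C}^{m\times n}},\qquad z\in\mathbb{T}.
\end{align*}
For the constant part I would appeal to Jensen's inequality,
\begin{align*}
\|\phi_0\|_{\mathbb{C}^{m\times n}}=\Bigl\|\int\phi\,d\mu\Bigr\|_{\mathbb{C}^{m\times n}}\leq\int\|\phi\|_{\mathbb{C}^{m\times n}}\,d\mu\leq\|\phi\|_{H^m},
\end{align*}
so the mean Fourier coefficient is already controlled by the $H^m$-norm of $\phi$. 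For the non-constant part I would decouple each term as $\|\phi_s\|_{\mathbb{C}^{m\times n}}=(|s|\,\|\phi_s\|_{\mathbb{C}^{m\times n}})\cdot|s|^{-1}$ and apply Cauchy-Schwarz in $\ell^2(\mathbb{Z}\setminus\{0\})$, so that $\sum_{s\neq0}\|\phi_s\|_{\mathbb{C}^{m\times n}}$ is bounded by $\|\boldsymbol{d}\phi\|_{H^m}$ up to a constant. Taking the essential supremum in $z$ then assembles the two pieces into the stated inequality.

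The main obstacle is the sharp constant in the second estimate: a plain Cauchy-Schwarz on $\ell^2(\mathbb{Z}\setminus\{0\})$ produces the prefactor $\bigl(\sum_{s\neq0}s^{-2}\bigr)^{1/2}=\pi/\sqrt{3}$, which exceeds~$1$ and so does not immediately deliver the clean form $\|\phi\|_{H^m}+\|\boldsymbol{d}\phi\|_{H^m}$. To recover this cleaner constant I expect to have to replace the term-by-term Fourier estimate by a direct integral representation of $\phi(z)-\phi_0$ in terms of $\boldsymbol{d}\phi$, most naturally by integrating the identity $\boldsymbol{S}_\omega\phi-\phi=\int_0^\omega\boldsymbol{S}_u\boldsymbol{d}\phi\,du$ against an averaging kernel on the circle and using the matrix-valued Cauchy-Schwarz inequality on the resulting kernel bound; this sharp-constant step, rather than the spectral set-up or the Jensen control of $\phi_0$, is where I expect the delicate work to lie.
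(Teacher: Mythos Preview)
Your Fourier set-up is sound and correctly identifies $\boldsymbol{d}\phi$ with the multiplier $\mathrm{i}s$, and you are right that the naive Cauchy--Schwarz in $\ell^2(\mathbb{Z}\setminus\{0\})$ only yields the constant $\pi/\sqrt{3}>1$, which does not prove the lemma as stated. Your instinct that an integral representation of $\phi$ in terms of $\boldsymbol{d}\phi$ is what is needed to close the gap is also correct. But the paper's execution is both simpler and structurally different from the kernel-averaging route you sketch, and the difference is worth noting.

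The paper never expands in Fourier modes and never compares $\phi(w)$ to its mean $\phi_0$. It first establishes the fundamental-theorem-of-calculus identity
\[
\int 1_{\tau(v,w)}\,\boldsymbol{d}\phi\,d\mu=\phi(w)-\phi(v)
\]
for $\mu$-almost every pair $v,w\in\mathbb{T}$, by rewriting $\int 1_{\tau(v,w)}\boldsymbol{\Delta}_\omega\phi\,d\mu$ via a change of variables as a difference of averages of $\phi$ over short arcs near $w$ and $v$, and then invoking Lebesgue differentiation on one side and the $H^m$-convergence $\boldsymbol{\Delta}_\omega\phi\to\boldsymbol{d}\phi$ on the other. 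The decisive trick is then not to average over $v$ but to pick a \emph{single} good base point: since $\|\phi\|_{H^m}^2=\int\|\phi(\cdot)\|_{\mathbb{C}^{m\times n}}^2\,d\mu$ is the mean of $\|\phi(\cdot)\|^2$, there must exist a Lebesgue point $v$ with $\|\phi(v)\|_{\mathbb{C}^{m\times n}}\leq\|\phi\|_{H^m}$. With that $v$ fixed, the triangle and Jensen inequalities give
\[
\|\phi(w)\|_{\mathbb{C}^{m\times n}}\leq\|\phi(v)\|_{\mathbb{C}^{m\times n}}+\int_{\mathbb{T}}\|\boldsymbol{d}\phi\|_{\mathbb{C}^{m\times n}}\,d\mu\leq\|\phi\|_{H^m}+\|\boldsymbol{d}\phi\|_{H^m},
\]
the last step being Cauchy--Schwarz against the constant function on the probability space $(\mathbb{T},\mu)$.

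So the ``delicate work'' you anticipate is bypassed entirely: there is no kernel to bound sharply, because instead of integrating the identity $\boldsymbol{S}_\omega\phi-\phi=\int_0^\omega\boldsymbol{S}_u\boldsymbol{d}\phi\,du$ against an averaging kernel in $v$, one simply evaluates it at the single $v$ that already satisfies the first half of the inequality. Your averaging approach would ultimately succeed as well, but it trades a one-line existence argument for a kernel computation. The pointwise-comparison idea is the missing ingredient in your outline.
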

\begin{proof}
Let $\tau(v,w)$ be the counter-clockwise segment of $\mathbb{T}$ from $v\in\mathbb{T}$ to $w\in\mathbb{T}$ and let $1_{\tau(v,w)}$ be the indicator function for that segment. By a change of variables
\begin{align*}
\int 1_{\tau(v,w)}\boldsymbol{\Delta}_\omega \phi d\mu=\frac{1}{\omega}\int_{\tau(w,we^{\mathrm{i}\omega})}\phi d\mu-\frac{1}{\omega}\int_{\tau(v,ve^{\mathrm{i}\omega})}\phi d\mu.
\end{align*}
Since $\phi\in H^m$, $\phi$ is $\mu$--integrable so the right hand side converges to $\phi(w)-\phi(v)$ for $\mu$ -- a.e.\ $w$ and $v$, the Lebesgue points of $\phi$ \cite[Theorem 7.10]{xrudin}. On the other hand, since $\boldsymbol{\Delta}_\omega \phi$ converges in $H^m$, the continuity of the inner product implies that the left hand side converges to $\int 1_{\tau(v,w)}\boldsymbol{d}\phi d\mu$. Therefore,
\begin{align*}
\int 1_{\tau(v,w)}\boldsymbol{d}\phi d\mu=\phi(w)-\phi(v),\qquad \mu - \text{a.e. } w,v\in\mathbb{T}.
\end{align*}
It follows by the triangle and Jensen's inequality that
\begin{align*}
\|\phi(w)\|_{\mathbb{C}^{m\times n}}\leq\|\phi(v)\|_{\mathbb{C}^{m\times n}}+\int 1_{\tau(v,w)}\|\boldsymbol{d}\phi\|_{\mathbb{C}^{m\times n}} d\mu,\qquad \mu - \text{a.e. } w,v\in\mathbb{T}.
\end{align*}
Now among all Lebesgue points of $\phi$, choose a $v$ such that $\|\phi(v)\|_{\mathbb{C}^{m\times n}}^2\leq\|\phi\|_{H^m}^2$. Thus,
\begin{align*}
\|\phi(w)\|_{\mathbb{C}^{m\times n}}\leq\|\phi\|_{H^m}+\int \|\boldsymbol{d}\phi\|_{\mathbb{C}^{m\times n}} d\mu,\qquad \mu - \text{a.e. } w\in\mathbb{T}.
\end{align*}
Since $\int \|\boldsymbol{d}\phi\|_{\mathbb{C}^{m\times n}} d\mu\leq \left(\int \|\boldsymbol{d}\phi\|_{\mathbb{C}^{m\times n}}^2 d\mu\right)^{1/2}=\|\boldsymbol{d}\phi\|_{H^m}$ we have
\begin{align*}
\|\phi(w)\|_{\mathbb{C}^{m\times n}}\leq\|\phi\|_{H^m}+\|\boldsymbol{d}\phi\|_{H^m},\qquad \mu - \text{a.e. } w\in\mathbb{T}.
\end{align*}
Now simply take the $\mu$--essential supremum on the left hand side \cite[p.\ 66]{xrudin}.
\end{proof}

We will also need a notion of differentiation of operators that interacts well with $\boldsymbol{d}$. For an arbitrary $\boldsymbol{A}:H^m\rightarrow H^l$, define $\boldsymbol{dA}$ to be the operator limit of $\frac{1}{\omega}(\boldsymbol{S}_\omega\boldsymbol{A}\boldsymbol{S}_{-\omega}-\boldsymbol{A})$ as $\omega\rightarrow0$ if it exists. It is easily shown that for $\phi\in H^m$ and $ \boldsymbol{B}:H^n\rightarrow H^m$,
\begin{align*}
\boldsymbol{d}(\boldsymbol{A}\phi)=\boldsymbol{dA}\phi+\boldsymbol{A}\boldsymbol{d}\phi, & &\boldsymbol{d}(\boldsymbol{AB})=\boldsymbol{dAB}+\boldsymbol{AdB},
\end{align*}
whenever $\boldsymbol{d}\phi$, $\boldsymbol{dA}$, and $\boldsymbol{dB}$ exist.

If $M:\mathbb{C}\rightarrow\mathbb{C}^{m\times n}$ is analytic in a neighborhood of $\mathbb{T}$, $\boldsymbol{dM}$ takes a particularly simple form.

\begin{lem}\label{lem:dM}
Let $M:\mathbb{C}\rightarrow\mathbb{C}^{m\times n}$ be analytic in neighborhood of $\mathbb{T}$ and let $M'$ be its derivative, then $\boldsymbol{dM}:\phi\mapsto \boldsymbol{P}\left(\mathrm{i}zM'\phi\big|H_0^m\right)$.
\end{lem}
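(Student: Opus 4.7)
The plan is to write $\boldsymbol{dM}$ as the $\omega\to 0$ operator-norm limit of $\omega^{-1}(\boldsymbol{S}_\omega\boldsymbol{M}\boldsymbol{S}_{-\omega}-\boldsymbol{M})$, reduce the composition to a single projection of a multiplication operator, and then pass to the limit using analyticity of $M$. Throughout I work in the setting of Theorems \ref{thm:tykcont} and \ref{thm:tykdiff} where $F=\mu I_n$, so that $\boldsymbol{S}_\omega$ is a genuine unitary on $H^m$ that carries $H_0^m$ onto itself (this is the point where rotation invariance of $\mu$ is used).

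The first substantive step is to establish the commutation relation $\boldsymbol{S}_\omega\,\boldsymbol{P}(\,\cdot\,|H_0^m)=\boldsymbol{P}(\boldsymbol{S}_\omega\,\cdot\,|H_0^m)$. This follows because $\boldsymbol{S}_\omega$ and its inverse $\boldsymbol{S}_{-\omega}$ both preserve the closed subspace $H_0^m$, so conjugation by $\boldsymbol{S}_\omega$ fixes the orthogonal projection onto $H_0^m$. Given this, I can compute
\begin{align*}
\boldsymbol{S}_\omega\boldsymbol{M}\boldsymbol{S}_{-\omega}\phi
=\boldsymbol{S}_\omega\boldsymbol{P}\bigl(M(z)\phi(ze^{-\mathrm{i}\omega})\bigm|H_0^m\bigr)
=\boldsymbol{P}\bigl(M(ze^{\mathrm{i}\omega})\phi(z)\bigm|H_0^m\bigr),
\end{align*}
since $\boldsymbol{S}_\omega$ applied to the product $M(z)\phi(ze^{-\mathrm{i}\omega})$ gives $M(ze^{\mathrm{i}\omega})\phi(z)$. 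Subtracting $\boldsymbol{M}\phi=\boldsymbol{P}(M\phi|H_0^m)$ and dividing by $\omega$ yields
\begin{align*}
\frac{\boldsymbol{S}_\omega\boldsymbol{M}\boldsymbol{S}_{-\omega}-\boldsymbol{M}}{\omega}\phi
=\boldsymbol{P}\!\left(\frac{M(ze^{\mathrm{i}\omega})-M(z)}{\omega}\,\phi\,\bigg|\,H_0^m\right).
\end{align*}

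The final step is the limit. Because $M$ is analytic in a neighbourhood of $\mathbb{T}$, the complex chain rule gives $\omega^{-1}(M(ze^{\mathrm{i}\omega})-M(z))\to \mathrm{i}zM'(z)$ uniformly in $z\in\mathbb{T}$, hence in $\mu$--essential supremum norm. Since left multiplication by a function $N\in L^\infty(\mu)$ defines a bounded operator on $H^m$ of norm at most $\|N\|_\infty$, and the projection $\boldsymbol{P}(\,\cdot\,|H_0^m)$ is contractive, the operator $\phi\mapsto\boldsymbol{P}((M(ze^{\mathrm{i}\omega})-M(z))\omega^{-1}\phi|H_0^m)$ converges in operator norm to $\phi\mapsto\boldsymbol{P}(\mathrm{i}zM'\phi|H_0^m)$. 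This is precisely $\boldsymbol{dM}$ as claimed.

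The only genuinely delicate step is the commutation of $\boldsymbol{S}_\omega$ with the projection; every other piece is a mechanical consequence of analyticity of $M$ and the contractivity of $\boldsymbol{P}(\,\cdot\,|H_0^m)$. The remaining identities (product rule for $\boldsymbol{d}$ on operators, etc.) are not needed for this lemma itself.
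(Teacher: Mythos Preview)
Your proposal is correct and follows essentially the same route as the paper: both arguments use that $\boldsymbol{S}_\omega$ is unitary and preserves $H_0^m$ (the paper invokes Lemma 2.2.9 of \cite{lp} for the commutation with the projection, while you argue it directly), reduce $\boldsymbol{S}_\omega\boldsymbol{M}\boldsymbol{S}_{-\omega}\phi$ to $\boldsymbol{P}((\boldsymbol{S}_\omega M)\phi\,|\,H_0^m)$, and then pass to the limit via the uniform convergence $\boldsymbol{\Delta}_\omega M\to\mathrm{i}zM'$ on $\mathbb{T}$. Your explicit remark that $F=\mu I_n$ is needed for $\boldsymbol{S}_\omega$ to be unitary and to preserve $H_0^m$ is a useful clarification that the paper leaves implicit.
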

\begin{proof}
Since $\mathrm{i}zM'$ is also analytic in a neighborhood of $\mathbb{T}$ \citep[Corollary 10.16]{xrudin}, its restriction to $\mathbb{T}$ is in $\mathcal{W}^{m\times n}$ and $\mathrm{i}\boldsymbol{zM'}$ is well-defined on $H_0^n$. Next, since $\boldsymbol{S}_\omega$ is a unitary operator on $H_0^m$,
\begin{align*}
\frac{1}{\omega}\left(\boldsymbol{S}_\omega \boldsymbol{M}\boldsymbol{S}_{-\omega}-\boldsymbol{M}\right)\phi&=\frac{1}{\omega}\left(
\boldsymbol{P}\left(\boldsymbol{S}_\omega (M\boldsymbol{S}_{-\omega}\phi)|\boldsymbol{S}_\omega H_0^m\right)-\boldsymbol{P}\left(M\phi|H_0^m\right)\right)\\
&=\frac{1}{\omega}\left(
\boldsymbol{P}\left((\boldsymbol{S}_\omega M)\phi|H_0^m\right)-\boldsymbol{P}\left(M\phi|H_0^m\right)\right)\\
&=
\boldsymbol{P}\left((\boldsymbol{\Delta}_\omega M)\phi|H_0^m\right),
\end{align*}
where we have used the fact that $\boldsymbol{S}_\omega H_0^m=H_0^m$ and the fact that $\boldsymbol{S}_\omega(M\boldsymbol{S}_{-\omega}\phi)=(\boldsymbol{S}_\omega M)\phi$ for all $\phi\in H^n$. Since $\boldsymbol{\Delta}_\omega M-\mathrm{i}zM'$ restricted to $\mathbb{T}$ is in $\mathcal{W}^{m\times n}$, the discussion following Definition \ref{defn:MV} implies that $\frac{1}{\omega}\left(\boldsymbol{S}_\omega \boldsymbol{M}\boldsymbol{S}_{-\omega}-\boldsymbol{M}\right)-\mathrm{i}\boldsymbol{zM'}$ is bounded in the operator norm by
\begin{align*}
\left\|
\boldsymbol{\Delta}_\omega M-\mathrm{i}zM'
\right\|_\infty&=\sup_{z\in\mathbb{T}} \left\|\frac{1}{\omega}\int_0^\omega\mathrm{i}ze^{\mathrm{i}\lambda}M'(ze^{\mathrm{i}\lambda})-\mathrm{i}zM'(z)d\lambda\right\|_{\mathbb{C}^{m\times n}}\\
&\leq\sup_{z\in\mathbb{T}}\sup_{0\leq\lambda\leq\omega}\left\|\mathrm{i}ze^{\mathrm{i}\lambda}M'(ze^{\mathrm{i}\lambda})-\mathrm{i}zM'(z)\right\|_{\mathbb{C}^{m\times n}}.
\end{align*}
This converges to zero as $\omega\rightarrow0$ by the uniform continuity of $\mathrm{i}zM'$ on $\mathbb{T}$.
\end{proof}

The final lemma consists of technical results, more general versions of which are due to \cite{locker} and \cite{callongroetsch}. Our proofs are specialized and modified so that they follow from first principles.

\begin{lem}\label{lem:altH}
Let $M\in\mathcal{W}^{m\times m}$, let $\det(M(z))\neq0$ for all $z\in\mathbb{T}$, and suppose the partial indices of $M$ are non-negative. Let $\boldsymbol{L}:H_0^m\rightarrow H^l$ be a bounded linear operator, let $\ker(\boldsymbol{M})\cap\ker(\boldsymbol{L})=\{0\}$, and let $\boldsymbol{W}=\boldsymbol{M}^\ast\boldsymbol{M}+\boldsymbol{L}^\ast\boldsymbol{L}$. Then the following holds:
\begin{enumerate}
\item For $\phi,\psi\in H_0^m$, the inner product
\begin{align*}
[[\phi,\psi]]=((\boldsymbol{M}\phi,\boldsymbol{M}\psi))+((\boldsymbol{L}\phi,\boldsymbol{L}\psi))
\end{align*}
defines a Hilbert space $\mathsf{H}_0^m$ and we write $\|\phi\|_{\mathsf{H}_0^m}^2=[[\phi,\phi]]$.
\item $H_0^m$ and $\mathsf{H}_0^m$ have equivalent norms.
\item If $\boldsymbol{A}:H_0^m\rightarrow H_0^m$, then its $\mathsf{H}_0^m$ adjoint is given by
\begin{align*}
\boldsymbol{A}^\times=\boldsymbol{W}^{-1}\boldsymbol{A}^\ast\boldsymbol{W},
\end{align*}
\item The $\mathsf{H}_0^m$ Moore-Penrose inverse of $\boldsymbol{M}$ is,
\begin{align*}
\boldsymbol{M}^-=(\boldsymbol{I}-(\boldsymbol{L}|_{\ker(\boldsymbol{M})})^\dag\boldsymbol{L})\boldsymbol{M}^\dag
\end{align*}
\end{enumerate}
\end{lem}
\begin{proof}
(i) Clearly, $[[\,\cdot\,,\,\cdot\,]]$ is an inner product and it remains to show that $\mathsf{H}_0^m$ is complete. Let $\{\phi_i: i=1,2,\ldots\,\}\subset H_0^m$ be an $\mathsf{H}_0^m$ Cauchy sequence. Then both $\boldsymbol{M}\phi_i$ and $\boldsymbol{L}\phi_i$ are Cauchy in $H^m$ and $H^l$ respectively. They must therefore have limits $\varphi_{\boldsymbol{M}}\in H^m$ and $\varphi_{\boldsymbol{L}}\in H^l$ respectively. Now write 
\begin{align*}
\phi_i=\phi_{i,\ker(\boldsymbol{M})}+\phi_{i,\ker(\boldsymbol{M})^\perp},\qquad i=1,2,\ldots,
\end{align*}
where $\phi_{i,\ker(\boldsymbol{M})}\in \ker(\boldsymbol{M})$ and $\phi_{i,\ker(\boldsymbol{M})^\perp}\in \ker(\boldsymbol{M})^\perp$, the orthogonal complement to $\ker(\boldsymbol{M})$ in $H_0^m$. Then
\begin{align*}
\boldsymbol{M}\phi_i=\boldsymbol{M}\phi_{i,\ker(\boldsymbol{M})^\perp}\rightarrow \varphi_{\boldsymbol{M}}\text{ in }H_0^m.
\end{align*}
We have already established in Lemma \ref{lem:mp} that $\boldsymbol{M}^\dag$ is a bounded linear operator. Since $\boldsymbol{M}^\dag\boldsymbol{M}$ is the orthogonal projection onto $\ker(\boldsymbol{M})^\perp$ in $H_0^m$ \citep[p.\ 47]{groetsch},
\begin{align*}
\phi_{i,\ker(\boldsymbol{M})^\perp}=\boldsymbol{M}^\dag\boldsymbol{M}\phi_{i,\ker(\boldsymbol{M})^\perp}\rightarrow \boldsymbol{M}^\dag\varphi_{\boldsymbol{M}}\text{ in }H_0^m.
\end{align*}
It follows, since $\boldsymbol{L}$ is a bounded linear operator, that
\begin{align*}
\boldsymbol{L}\phi_{i,\ker(\boldsymbol{M})^\perp}\rightarrow\boldsymbol{L}\boldsymbol{M}^\dag\varphi_{\boldsymbol{M}}\text{ in }H^l.
\end{align*}
Therefore,
\begin{align*}
\boldsymbol{L}|_{\ker(\boldsymbol{M})}\phi_{i,\ker(\boldsymbol{M})}=\boldsymbol{L}\phi_{i,\ker(\boldsymbol{M})}\rightarrow\varphi_{\boldsymbol{L}}-\boldsymbol{L}\boldsymbol{M}^\dag\varphi_{\boldsymbol{M}}\text{ in }H^l.
\end{align*}
It has already been established in Lemma \ref{lem:existencephi} that $\dim(\ker(\boldsymbol{M}))<\infty$, thus $\boldsymbol{L}|_{\ker(\boldsymbol{M})}$ is of finite rank and its image is closed. It follows that $(\boldsymbol{L}|_{\ker(\boldsymbol{M})})^\dag$ exists and is a bounded linear operator \cite[Corollary 2.1.3]{groetsch}. Moreover,
$\ker(\boldsymbol{M})\cap\ker(\boldsymbol{L})=\{0\}$ implies that $\boldsymbol{L}|_{\ker(\boldsymbol{M})}$ is injective, therefore the image of $(\boldsymbol{L}|_{\ker(\boldsymbol{M})})^\ast$ is $\ker(\boldsymbol{M})$. By Theorem 2.1.2 of \cite{groetsch}, the image of $(\boldsymbol{L}|_{\ker(\boldsymbol{M})})^\ast$ is the image of $(\boldsymbol{L}|_{\ker(\boldsymbol{M})})^\dag$. Thus, $(\boldsymbol{L}|_{\ker(\boldsymbol{M})})^\dag (\boldsymbol{L}|_{\ker(\boldsymbol{M})})$ is the orthogonal projection onto $\ker(\boldsymbol{M})$ in $H_0^m$ and so,
\begin{align*}
\phi_{i,\ker(\boldsymbol{M})}&=(\boldsymbol{L}|_{\ker(\boldsymbol{M})})^\dag (\boldsymbol{L}|_{\ker(\boldsymbol{M})})\phi_{i,\ker(\boldsymbol{M})}\rightarrow (\boldsymbol{L}|_{\ker(\boldsymbol{M})})^\dag\left(\varphi_{\boldsymbol{L}}-\boldsymbol{L}\boldsymbol{M}^\dag\varphi_{\boldsymbol{M}}\right)\text{ in }H_0^m.
\end{align*}
Therefore $\phi_i$ converges in $H_0^m$ to a point, call it $\phi_0$. Finally,
\begin{align*}
\|\phi_i-\phi_0\|_{\mathsf{H}_0^m}^2=\|\boldsymbol{M}(\phi_i-\phi_0)\|_{H^m}^2+\|\boldsymbol{L}(\phi_i-\phi_0)\|_{H^l}^2
\end{align*}
and the boundedness of $\boldsymbol{M}$ and $\boldsymbol{L}$ imply that $\phi_i$ converges to $\phi_0$ in $\mathsf{H}_0^m$ as well.

(ii) Since $\boldsymbol{M}$ and $\boldsymbol{L}$ are bounded linear operators, there exists an upper bound $c>0$ on their operator norms and
\begin{align*}
\|\phi\|_{\mathsf{H}_0^m}=\left(\|\boldsymbol{M}\phi\|_{H^m}^2+\|\boldsymbol{L}\phi\|_{H^l}^2\right)^{1/2}\leq \sqrt{2}c\|\phi\|_{H^m},\qquad \phi\in \mathsf{H}_0^m.
\end{align*}
The equivalence of $\|\,\cdot\,\|_{H^m}$ and $\|\,\cdot\,\|_{\mathsf{H}_0^m}$ on $H_0^m$ then follows from Corollary XII.4.2 of \cite{basic}.

(iii) For $\phi,\psi\in H_0^m$,
\begin{align*}
[[\phi,\psi]]=((\boldsymbol{W}\phi,\psi)).
\end{align*}
This implies that
\begin{align*}
((\boldsymbol{W}\boldsymbol{A}^\times\phi,\psi))=[[\boldsymbol{A}^\times\phi,\psi]]=[[\phi,\boldsymbol{A}\psi]]=((\boldsymbol{W}\phi,\boldsymbol{A}\psi)).
\end{align*}
Since the last term is equal to $((\boldsymbol{A}^\ast\boldsymbol{W}\phi,\psi))$ and $\phi$ and $\psi$ are arbitrary,
\begin{align*}
\boldsymbol{W}\boldsymbol{A}^\times=\boldsymbol{A}^\ast\boldsymbol{W}.
\end{align*}
If $\boldsymbol{W}\phi=0$, then $0=((\boldsymbol{W}\phi,\phi))=\|\phi\|_{\mathsf{H}_0^m}^2$. This implies that $\boldsymbol{W}$ is injective. Next, if $\phi$ is orthogonal to the image of $\boldsymbol{W}$, then $((\phi,\boldsymbol{W}\varphi))=0$ for all $\varphi\in H_0^m$. Since $\boldsymbol{W}$ is self-adjoint as an operator on $H_0^m$, we have that $((\boldsymbol{W}\phi,\varphi))=0$ for all $\varphi\in H_0^m$, in particular $0=((\boldsymbol{W}\phi,\phi))=\|\phi\|_{\mathsf{H}_0^m}^2$. Thus, $\boldsymbol{W}$ is surjective. It follows that $\boldsymbol{W}$ is invertible \citep[p.\ 283]{basic} and the expression for $\boldsymbol{A}^\times$ follows.

(iv) By the arguments used in Lemma \ref{lem:mp}, we have the following representation
\begin{align*}
\boldsymbol{M}^-&=\boldsymbol{M}^\times(\boldsymbol{M}\boldsymbol{M}^\times)^{-1}\\
&=\boldsymbol{W}^{-1}\boldsymbol{M}^\ast\left(\boldsymbol{M}\boldsymbol{W}^{-1}\boldsymbol{M}^\ast\right)^{-1}\\
&=\boldsymbol{W}^{-1}\boldsymbol{M}^\ast\left(\boldsymbol{M}\boldsymbol{W}^{-1}\boldsymbol{M}^\ast\right)^{-1}\boldsymbol{M}\boldsymbol{M}^\dag\\
&=\boldsymbol{W}^{-1/2}\left(\boldsymbol{W}^{-1/2}\boldsymbol{M}^\ast\left(\boldsymbol{M}\boldsymbol{W}^{-1}\boldsymbol{M}^\ast\right)^{-1}\boldsymbol{M}\boldsymbol{W}^{-1/2}\right)\boldsymbol{W}^{1/2}\boldsymbol{M}^\dag.
\end{align*}
The second equality follows from (iii), the third follows from Lemma \ref{lem:mp}, and the fourth follows from Theorem V.6.1 of \cite{classes1}. Next, notice that the operator 
\begin{align*}
\boldsymbol{W}^{-1/2}\boldsymbol{M}^\ast\left(\boldsymbol{M}\boldsymbol{W}^{-1}\boldsymbol{M}^\ast\right)^{-1}\boldsymbol{M}\boldsymbol{W}^{-1/2}
\end{align*}
is a self-adjoint projection acting on $H_0^m$. By Theorem II.13.1 of \cite{basic}, it is the $H_0^m$ orthogonal projection onto its image and it is easily seen that this is the image of $\boldsymbol{W}^{-1/2}\boldsymbol{M}^\ast$. Let
\begin{align*}
\boldsymbol{\Pi}=\boldsymbol{I}-\boldsymbol{W}^{-1/2}\boldsymbol{M}^\ast\left(\boldsymbol{M}\boldsymbol{W}^{-1}\boldsymbol{M}^\ast\right)^{-1}\boldsymbol{M}\boldsymbol{W}^{-1/2}.
\end{align*}
Then $\boldsymbol{\Pi}$ is the $H_0^m$ orthogonal projection onto $\boldsymbol{W}^{1/2}\ker(\boldsymbol{M})$, the orthogonal complement to the image of $\boldsymbol{W}^{-1/2}\boldsymbol{M}^\ast$ in $H_0^m$. Since $\ker(\boldsymbol{M})$ is the image of $(\boldsymbol{L}|_{\ker(\boldsymbol{M})})^\dag$, $\boldsymbol{\Pi}$ is the $H_0^m$ orthogonal projection onto the image of $\boldsymbol{W}^{1/2}(\boldsymbol{L}|_{\ker(\boldsymbol{M})})^\dag$. We then have that
\begin{align*}
\boldsymbol{\Pi}&=\left(\boldsymbol{W}^{1/2}(\boldsymbol{L}|_{\ker(\boldsymbol{M})})^\dag\right)\left(\boldsymbol{W}^{1/2}(\boldsymbol{L}|_{\ker(\boldsymbol{M})})^\dag\right)^\dag\\
&=\left(\boldsymbol{W}^{1/2}(\boldsymbol{L}|_{\ker(\boldsymbol{M})})^\dag\right)\left\{\left(\boldsymbol{W}^{1/2}(\boldsymbol{L}|_{\ker(\boldsymbol{M})})^\dag\right)^\ast\left(\boldsymbol{W}^{1/2}(\boldsymbol{L}|_{\ker(\boldsymbol{M})})^\dag\right)\right\}^\dag \left(\boldsymbol{W}^{1/2}(\boldsymbol{L}|_{\ker(\boldsymbol{M})})^\dag\right)^\ast\\
&=\boldsymbol{W}^{1/2}(\boldsymbol{L}|_{\ker(\boldsymbol{M})})^\dag\left\{(\boldsymbol{L}|_{\ker(\boldsymbol{M})}^\ast)^\dag\boldsymbol{W}(\boldsymbol{L}|_{\ker(\boldsymbol{M})})^\dag\right\}^\dag (\boldsymbol{L}|_{\ker(\boldsymbol{M})}^\ast)^\dag\boldsymbol{W}^{1/2}\\
&=\boldsymbol{W}^{1/2}(\boldsymbol{L}|_{\ker(\boldsymbol{M})})^\dag\left\{(\boldsymbol{L}|_{\ker(\boldsymbol{M})}^\ast)^\dag\boldsymbol{L}^\ast\boldsymbol{L}(\boldsymbol{L}|_{\ker(\boldsymbol{M})})^\dag\right\}^\dag (\boldsymbol{L}|_{\ker(\boldsymbol{M})}^\ast)^\dag\boldsymbol{W}^{1/2}\\
&=\boldsymbol{W}^{1/2}(\boldsymbol{L}|_{\ker(\boldsymbol{M})})^\dag\left\{(\boldsymbol{L}|_{\ker(\boldsymbol{M})}^\ast)^\dag(\boldsymbol{L}|_{\ker(\boldsymbol{M})}^\ast)(\boldsymbol{L}|_{\ker(\boldsymbol{M})})(\boldsymbol{L}|_{\ker(\boldsymbol{M})})^\dag\right\}^\dag (\boldsymbol{L}|_{\ker(\boldsymbol{M})}^\ast)^\dag\boldsymbol{W}^{1/2}\\
&=\boldsymbol{W}^{1/2}(\boldsymbol{L}|_{\ker(\boldsymbol{M})})^\dag\left\{\left((\boldsymbol{L}|_{\ker(\boldsymbol{M})})(\boldsymbol{L}|_{\ker(\boldsymbol{M})})^\dag\right)^\ast\left((\boldsymbol{L}|_{\ker(\boldsymbol{M})})(\boldsymbol{L}|_{\ker(\boldsymbol{M})})^\dag\right)\right\}^\dag (\boldsymbol{L}|_{\ker(\boldsymbol{M})}^\ast)^\dag\boldsymbol{W}^{1/2}\\
&=\boldsymbol{W}^{1/2}(\boldsymbol{L}|_{\ker(\boldsymbol{M})})^\dag\left\{\left((\boldsymbol{L}|_{\ker(\boldsymbol{M})})(\boldsymbol{L}|_{\ker(\boldsymbol{M})})^\dag\right)\left((\boldsymbol{L}|_{\ker(\boldsymbol{M})})(\boldsymbol{L}|_{\ker(\boldsymbol{M})})^\dag\right)\right\}^\dag (\boldsymbol{L}|_{\ker(\boldsymbol{M})}^\ast)^\dag\boldsymbol{W}^{1/2}\\
&=\boldsymbol{W}^{1/2}(\boldsymbol{L}|_{\ker(\boldsymbol{M})})^\dag\left\{(\boldsymbol{L}|_{\ker(\boldsymbol{M})})(\boldsymbol{L}|_{\ker(\boldsymbol{M})})^\dag\right\}^\dag (\boldsymbol{L}|_{\ker(\boldsymbol{M})}^\ast)^\dag\boldsymbol{W}^{1/2}\\
&=\boldsymbol{W}^{1/2}(\boldsymbol{L}|_{\ker(\boldsymbol{M})})^\dag(\boldsymbol{L}|_{\ker(\boldsymbol{M})})(\boldsymbol{L}|_{\ker(\boldsymbol{M})})^\dag (\boldsymbol{L}|_{\ker(\boldsymbol{M})}^\ast)^\dag\boldsymbol{W}^{1/2}\\
&=\boldsymbol{W}^{1/2}(\boldsymbol{L}|_{\ker(\boldsymbol{M})})^\dag (\boldsymbol{L}|_{\ker(\boldsymbol{M})}^\ast)^\dag\boldsymbol{W}^{1/2}\\
&=\boldsymbol{W}^{1/2}(\boldsymbol{L}|_{\ker(\boldsymbol{M})})^\dag (\boldsymbol{L}|_{\ker(\boldsymbol{M})}^\ast)^\dag\boldsymbol{W}\boldsymbol{W}^{-1/2}\\
&=\boldsymbol{W}^{1/2}(\boldsymbol{L}|_{\ker(\boldsymbol{M})})^\dag (\boldsymbol{L}|_{\ker(\boldsymbol{M})}^\ast)^\dag\boldsymbol{L}|_{\ker(\boldsymbol{M})}^\ast\boldsymbol{L}\boldsymbol{W}^{-1/2}\\
&=\boldsymbol{W}^{1/2}(\boldsymbol{L}|_{\ker(\boldsymbol{M})})^\dag \left(\boldsymbol{L}|_{\ker(\boldsymbol{M})}(\boldsymbol{L}|_{\ker(\boldsymbol{M})})^\dag\right)^\ast\boldsymbol{L}\boldsymbol{W}^{-1/2}\\
&=\boldsymbol{W}^{1/2}(\boldsymbol{L}|_{\ker(\boldsymbol{M})})^\dag \left(\boldsymbol{L}|_{\ker(\boldsymbol{M})}(\boldsymbol{L}|_{\ker(\boldsymbol{M})})^\dag\right)\boldsymbol{L}\boldsymbol{W}^{-1/2}\\
&=\boldsymbol{W}^{1/2}(\boldsymbol{L}|_{\ker(\boldsymbol{M})})^\dag\boldsymbol{L}\boldsymbol{W}^{-1/2},
\end{align*}
where we have used basic properties of the Moore-Penrose inverse \citep[Sections 2.1-2.2]{groetsch} as well as the fact that $(\boldsymbol{L}|_{\ker(\boldsymbol{M})})^\dag$ and $(\boldsymbol{L}|_{\ker(\boldsymbol{M})})^\ast$ map into $\ker(\boldsymbol{M})$. It follows that
\begin{align*}
\boldsymbol{M}^-&=\boldsymbol{W}^{-1/2}\left(\boldsymbol{I}-\boldsymbol{\Pi}\right)\boldsymbol{W}^{1/2}\boldsymbol{M}^\dag\\
&=(\boldsymbol{I}-(\boldsymbol{L}|_{\ker(\boldsymbol{M})})^\dag\boldsymbol{L})\boldsymbol{M}^\dag.\qedhere
\end{align*}
\end{proof}

Lemma \ref{lem:altH} (i) introduces a Hilbert space $\mathsf{H}_0^m$ that plays an important role in our regularization theory; $\mathsf{H}_0^m$ and $H_0^m$ have the same elements but different inner products. Lemma \ref{lem:altH} (ii) implies that convergence of points in (operators on) $H_0^m$ is equivalent to convergence of points in (operators on) $\mathsf{H}_0^m$. In particular, $\boldsymbol{d}$, whether acting on points or operators, takes the same value in both spaces. On the other hand, adjoints and orthogonal projections are not the same in both spaces due to the different inner products. This is what is proven in Lemma \ref{lem:altH} (iii) and (iv).

\begin{proof}[Proof of Lemma \ref{lem:reg}]
The proof that $\phi_{\boldsymbol{L}}$ solves the regularization problem is due to \cite{callongroetsch}. We provide an alternative direct derivation. By Lemmas \ref{lem:existencephi} and \ref{lem:mp},
\begin{align*}
\min\left\{\|\boldsymbol{L}\phi\|^2_{H^l}: \boldsymbol{M}\phi=\boldsymbol{N}I_n\right\}&=\min\left\{\|\boldsymbol{L}\phi\|^2_{H^l}: \phi\in\boldsymbol{M}^\dag\boldsymbol{N}I_n+\ker(\boldsymbol{M})\right\}\\
&=\min\left\{\|\boldsymbol{L}(\boldsymbol{M}^\dag\boldsymbol{N}I_n+\chi)\|^2_{H^l}: \chi\in\ker(\boldsymbol{M})\right\}\\
&=\min\left\{\|\boldsymbol{L}\boldsymbol{M}^\dag\boldsymbol{N}I_n+\boldsymbol{L}|_{\ker(\boldsymbol{M})}\chi\|^2_{H^l}: \chi\in\ker(\boldsymbol{M})\right\}.
\end{align*}
By Lemma \ref{lem:existencephi}, $\ker(\boldsymbol{M})$ is of finite dimension, therefore the image of $\boldsymbol{L}|_{\ker(\boldsymbol{M})}$ is finite dimensional and closed. Thus $(\boldsymbol{L}|_{\ker(\boldsymbol{M})})^\dag$ exists and is a bounded linear operator \cite[Corollary 2.1.3]{groetsch}. The minimum above is therefore attained for $\chi=-(\boldsymbol{L}|_{\ker(\boldsymbol{M})})^\dag \boldsymbol{L}\boldsymbol{M}^\dag \boldsymbol{N}I_n+\ker(\boldsymbol{L}|_{\ker(\boldsymbol{M})})$ \cite[p.\ 41]{groetsch}. The uniqueness result then follows from the fact that $\ker(\boldsymbol{L}|_{\ker(\boldsymbol{M})})=\ker(\boldsymbol{L})\cap\ker(\boldsymbol{M})$.
\end{proof}

\begin{proof}[Proof of Theorem \ref{thm:tykcont}]
Fix $\theta_0\in\Theta$. Condition (i) together with Lemma \ref{lem:ga} imply that
\begin{align*}
\|\phi_{\boldsymbol{L}}(\theta)-\phi_{\boldsymbol{L}}(\theta_0)\|_\infty\leq \|\phi_{\boldsymbol{L}}(\theta)-\phi_{\boldsymbol{L}}(\theta_0)\|_{H^m}+\|\boldsymbol{d}\phi_{\boldsymbol{L}}(\theta)-\boldsymbol{d}\phi_{\boldsymbol{L}}(\theta_0)\|_{H^m},
\end{align*}
provided $\phi_{\boldsymbol{L}}(\theta)$, $\phi_{\boldsymbol{L}}(\theta_0)$, $\boldsymbol{d}\phi_{\boldsymbol{L}}(\theta)$, and $\boldsymbol{d}\phi_{\boldsymbol{L}}(\theta_0)$ are in $H_0^m$. We will prove that this is indeed the case and the right hand side converges to zero as $\theta\rightarrow\theta_0$. In this proof, all operator norms and orthogonal projections are understood to be with respect to $H_0^m$ and not with respect to $\mathsf{H}_0^m$. Adjoints and Moore-Penrose inverses with respect to $\mathsf{H}_0^m$ will be denoted by the separate notation used in Lemma \ref{lem:altH}.

\bigskip

\noindent\emph{STEP 1.} $\phi_{\boldsymbol{L}}(\theta)$ is well-defined for every $\theta\in\Theta$ and $\lim_{\theta\rightarrow\theta_0}\|\phi_{\boldsymbol{L}}(\theta)-\phi_{\boldsymbol{L}}(\theta_0)\|_{H^m}=0$.

\bigskip

By Lemma \ref{lem:reg}, conditions (iii), (v), and (vi) imply that $\phi_{\boldsymbol{L}}(\theta)$ is uniquely defined for any $\theta\in\Theta$, including $\theta_0$.

By the discussion following Definition \ref{defn:w} and condition (iii), the operators $\boldsymbol{M}(\theta)$ and $\boldsymbol{M}(\theta_0)$ are well-defined. By the discussion following Definition \ref{defn:MV}, the operator norm of $\boldsymbol{M}(\theta)-\boldsymbol{M}(\theta_0)$ is bounded above by $\|M(\,\cdot\,,\theta)-M(\,\cdot\,,\theta_0)\|_\infty$. By condition (iv), $M(z,\theta)$ is jointly continuous and so $\|M(\,\cdot\,,\theta)-M(\,\cdot\,,\theta_0)\|_\infty$ is continuous at $\theta_0$ \cite[Theorem 9.14]{sundaram}. It follows that $\lim_{\theta\rightarrow\theta_0}\|M(\,\cdot\,,\theta)-M(\,\cdot\,,\theta_0)\|_\infty=0$ and so $\boldsymbol{M}(\theta)$ converges to $\boldsymbol{M}(\theta_0)$ in the operator norm as $\theta\rightarrow\theta_0$. The same argument applied to $\boldsymbol{N}(\theta)$ proves that $\boldsymbol{N}(\theta)$ converges to $\boldsymbol{N}(\theta_0)$ in the operator norm as $\theta\rightarrow\theta_0$.

Lemma \ref{lem:existencephi} and condition (vi) imply that $\boldsymbol{M}(\theta)$ is onto for any $\theta\in\Theta$ so the orthogonal projection operator onto the image of $\boldsymbol{M}(\theta)$ is the identity mapping on $H_0^m$ for any $\theta\in\Theta$. Thus, $\boldsymbol{M}(\theta)\boldsymbol{M}(\theta)^\dag=\boldsymbol{M}(\theta_0)\boldsymbol{M}(\theta_0)^\dag$ and Theorem 1.6 of \cite{koliha} implies that $\boldsymbol{M}(\theta)^\dag$ converges in the operator norm to $\boldsymbol{M}(\theta_0)^\dag$ as $\theta\rightarrow\theta_0$.

Next, for $\theta\in\Theta$, let $\boldsymbol{Q}(\theta)$ be the orthogonal projection onto $\ker(\boldsymbol{M}(\theta))$. By verifying the four conditions that determine the Moore-Penrose inverse \cite[p.\ 48]{groetsch}, it is easily seen that $(\boldsymbol{L}|_{\ker(\boldsymbol{M}(\theta)})^\dag=(\boldsymbol{L}\boldsymbol{Q}(\theta))^\dag$. By Lemma \ref{lem:existencephi} and condition (vi), $\boldsymbol{L}\boldsymbol{Q}(\theta)$ is an operator of finite rank. Condition (v) now implies that the rank of $\boldsymbol{L}\boldsymbol{Q}(\theta)$ is equal to $\dim(\ker(\boldsymbol{M}(\theta)))$. Since $\boldsymbol{Q}(\theta)=\boldsymbol{I}-\boldsymbol{M}(\theta)^\dag\boldsymbol{M}(\theta)$ converges in the operator norm to $\boldsymbol{Q}(\theta_0)=\boldsymbol{I}-\boldsymbol{M}(\theta_0)^\dag\boldsymbol{M}(\theta_0)$ as $\theta\rightarrow\theta_0$, $\boldsymbol{L}\boldsymbol{Q}(\theta)$ converges to $\boldsymbol{L}\boldsymbol{Q}(\theta_0)$ in operator norm and the smallest non-zero singular value of $\boldsymbol{L}\boldsymbol{Q}(\theta)$ also converges to that of $\boldsymbol{L}\boldsymbol{Q}(\theta_0)$ \cite[Corollary VI.1.6]{classes1}. It follows that the operator norm of $(\boldsymbol{L}\boldsymbol{Q}(\theta))^\dag$ remains bounded as $\theta\rightarrow\theta_0$. By Theorem 1.6 of \cite{koliha} again, $(\boldsymbol{L}\boldsymbol{Q}(\theta))^\dag$ converges in operator norm to $(\boldsymbol{L}\boldsymbol{Q}(\theta_0))^\dag$.

It follows from the above and Lemma \ref{lem:altH} (iv) that $\boldsymbol{M}(\theta)^-=(\boldsymbol{I}-(\boldsymbol{L}|_{\ker(\boldsymbol{M}(\theta)})^\dag\boldsymbol{L})\boldsymbol{M}(\theta)^\dag$ converges in operator norm to $\boldsymbol{M}(\theta_0)^-=(\boldsymbol{I}-(\boldsymbol{L}|_{\ker(\boldsymbol{M}(\theta_0)})^\dag\boldsymbol{L})\boldsymbol{M}(\theta_0)^\dag$ as $\theta\rightarrow\theta_0$. Therefore, $\phi_{\boldsymbol{L}}(\theta)=\boldsymbol{M}(\theta)^-\boldsymbol{N}(\theta)I_n$ converges to $\phi_{\boldsymbol{L}}(\theta_0)=\boldsymbol{M}(\theta_0)^-\boldsymbol{N}(\theta_0)I_n$ in $H_0^m$ as $\theta\rightarrow\theta_0$.

\bigskip

\noindent\emph{STEP 2.} $\boldsymbol{d}\phi_{\boldsymbol{L}}(\theta)$ exists and $\lim_{\theta\rightarrow\theta_0}\|\boldsymbol{d}\phi_{\boldsymbol{L}}(\theta)-\boldsymbol{d}\phi_{\boldsymbol{L}}(\theta_0)\|_{H^m}=0$.

\bigskip

For $\theta\in\Theta$, $\boldsymbol{dM}(\theta)$ exists by Lemma \ref{lem:dM} and condition (iii). It follows that $\boldsymbol{S}_\omega\boldsymbol{M}(\theta)\boldsymbol{S}_{-\omega}$ converges to $\boldsymbol{M}(\theta)$ in the operator norm as $\omega\rightarrow0$. For $\omega\in\mathbb{R}$, $\boldsymbol{S}_\omega\boldsymbol{M}(\theta)\boldsymbol{S}_{-\omega}:\phi\mapsto P(M(ze^{\mathrm{i}\omega},\theta)\phi|H_0^m)$ and $M(ze^{\mathrm{i}\omega},\theta)\in\mathcal{W}^{m\times m}$ has the same partial indices as $M(z,\theta)$. Therefore, condition (vi) implies that $\boldsymbol{S}_\omega\boldsymbol{M}(\theta)\boldsymbol{S}_{-\omega}$ is onto and by the same argument as in step 1, $(\boldsymbol{S}_\omega\boldsymbol{M}(\theta)\boldsymbol{S}_{-\omega})^-$ converges to $\boldsymbol{M}(\theta)^-$ in the operator norm as $\omega\rightarrow0$. Theorem 2.1 of \cite{koliha} applied to $\boldsymbol{S}_\omega\boldsymbol{M}(\theta)\boldsymbol{S}_{-\omega}$, as $\mathsf{H}_0^m\rightarrow\mathsf{H}_0^m$ mappings, gives
\begin{align*}
\boldsymbol{d}(\boldsymbol{M}(\theta)^-)=-\boldsymbol{M}(\theta)^-\boldsymbol{dM}(\theta)\boldsymbol{M}(\theta)^-+(\boldsymbol{I}-\boldsymbol{M}(\theta)^-\boldsymbol{M}(\theta))\boldsymbol{dM}(\theta)^\times(\boldsymbol{M}(\theta)^-)^\times\boldsymbol{M}(\theta)^-,
\end{align*}
By Lemma \ref{lem:dM} and condition (iii) again,
\begin{align*}
\boldsymbol{d}(\boldsymbol{N}(\theta)I_n)=\boldsymbol{dN}(\theta)I_n.
\end{align*}
This implies that for every $\theta\in\Theta$,
\begin{align*}
\boldsymbol{d}\phi_{\boldsymbol{L}}(\theta)&=\boldsymbol{d}(\boldsymbol{M}(\theta)^-)\boldsymbol{N}(\theta)I_n+\boldsymbol{M}(\theta)^-\boldsymbol{d}\boldsymbol{N}(\theta)I_n\\
&=\left(-\boldsymbol{M}(\theta)^-\boldsymbol{dM}(\theta)\boldsymbol{M}(\theta)^-+(\boldsymbol{I}-\boldsymbol{M}(\theta)^-\boldsymbol{M}(\theta))\boldsymbol{dM}(\theta)^\times(\boldsymbol{M}(\theta)^-)^\times\boldsymbol{M}(\theta)^-\right)\boldsymbol{N}(\theta)I_n\\
&\qquad+\boldsymbol{M}(\theta)^-\boldsymbol{d}\boldsymbol{N}(\theta)I_n\\
&=-\boldsymbol{M}(\theta)^-\boldsymbol{dM}(\theta)\boldsymbol{M}(\theta)^-\boldsymbol{N}(\theta)I_n\\
&\qquad+(\boldsymbol{I}-\boldsymbol{M}(\theta)^-\boldsymbol{M}(\theta))\boldsymbol{W}(\theta)^{-1}\boldsymbol{dM}(\theta)^\ast(\boldsymbol{M}(\theta)^-)^\ast\boldsymbol{W}(\theta)\boldsymbol{M}(\theta)^-\boldsymbol{N}(\theta)I_n\\
&\qquad+\boldsymbol{M}(\theta)^-\boldsymbol{d}\boldsymbol{N}(\theta)I_n,
\end{align*}
where $\boldsymbol{W}(\theta)=\boldsymbol{M}(\theta)^\ast\boldsymbol{M}(\theta)+\boldsymbol{L}^\ast\boldsymbol{L}$. Clearly, $\boldsymbol{d}\phi_{\boldsymbol{L}}(\theta)$ converges to $\boldsymbol{d}\phi_{\boldsymbol{L}}(\theta_0)$ in $H_0^m$ as $\theta\rightarrow\theta_0$ if all of the operators that appear in the expression above are continuous in the operator norm at $\theta_0$. The continuity of $\boldsymbol{M}(\theta)$, $\boldsymbol{N}(\theta)$, and $\boldsymbol{M}(\theta)^-$ at $\theta_0$ has already been established in step 1. It remains to establish the operator norm continuity of $\boldsymbol{W}(\theta)^{-1}$, $\boldsymbol{dM}(\theta)$, and $\boldsymbol{dN}(\theta)$ at $\theta_0$. The continuity of inversion at invertible operators ensures that $\boldsymbol{W}(\theta)^{-1}$ is continuous in the operator norm at $\theta_0$ \citep[Corollary II.8.2]{basic}. The operator norm of $\boldsymbol{dM}(\theta)-\boldsymbol{dM}(\theta_0)$ is bounded above by $\sup_{z\in\mathbb{T}}\left\|\mathrm{i}z\frac{d}{dz}M(z,\theta)-\mathrm{i}z\frac{d}{dz}M(z,\theta_0)\right\|_{\mathbb{C}^{m\times m}}=\sup_{z\in\mathbb{T}}\left\|\frac{d}{dz}M(z,\theta)-\frac{d}{dz}M(z,\theta_0)\right\|_{\mathbb{C}^{m\times m}}$, which converges to zero as $\theta\rightarrow\theta_0$ by the joint continuity of $\frac{d}{dz}M(z,\theta)$ (condition (iv)). A similar argument yields the continuity of $\boldsymbol{dN}(\theta)$.
\end{proof}

\begin{proof}[Proof of Theorem \ref{thm:tykdiff}]
For $\phi:\Theta\rightarrow H^m$, $x\in\mathbb{R}^d$, and $\epsilon\neq0$ define
\begin{align*}
\boldsymbol{\nabla}_{\epsilon,x}\phi(\theta)=\frac{\phi(\theta+\epsilon x)-\phi(\theta)}{\epsilon}.
\end{align*}
The claim of the theorem is that for every $\theta\in\Theta$, there exists a symmetric multilinear mapping $D_\theta^p\phi_{\boldsymbol{L}}(\theta):\prod_{i=1}^p\mathbb{R}^d\rightarrow H_0^m$ such that for every $\{x_1,\ldots,x_p\}\subset\mathbb{R}^d$,
\begin{align*}
\lim_{(\epsilon_1,\ldots,\epsilon_p)\rightarrow0}\left\|\boldsymbol{\nabla}_{\epsilon_p,x_p}\cdots\boldsymbol{\nabla}_{\epsilon_1,x_1}\phi_{\boldsymbol{L}}(\theta)-D_\theta^p\phi_{\boldsymbol{L}}(\theta)(x_1,\ldots,x_p)\right\|_\infty=0
\end{align*}
and $D_\theta^p\phi_{\boldsymbol{L}}(\theta)(x_1,\ldots,x_p)$ is continuous in $\mu$--essential supremum norm with respect to $\theta$. As in the proof of Theorem \ref{thm:tykcont}, we will prove the existence of $\boldsymbol{\nabla}_{\epsilon_p,x_p}\cdots\boldsymbol{\nabla}_{\epsilon_1,x_1}\phi_{\boldsymbol{L}}(\theta)$ and $\boldsymbol{d}\left(\boldsymbol{\nabla}_{\epsilon_p,x_p}\cdots\boldsymbol{\nabla}_{\epsilon_1,x_1}\phi_{\boldsymbol{L}}(\theta)\right)$ and their convergence as elements of $H^m$ to $D_\theta^p\phi_{\boldsymbol{L}}(\theta)(x_1,\ldots,x_p)$ and $\boldsymbol{d}(D_\theta^p\phi_{\boldsymbol{L}}(\theta)(x_1,\ldots,x_p))$ respectively as $(\epsilon_1,\ldots,\epsilon_p)\rightarrow0$. The continuity in $H^m$ of\linebreak $D_\theta^p\phi_{\boldsymbol{L}}(\theta)(x_1,\ldots,x_p)$ and $\boldsymbol{d}(D_\theta^p\phi_{\boldsymbol{L}}(\theta)(x_1,\ldots,x_p))$ with respect to $\theta$ then proves the continuity of $D_\theta^p\phi_{\boldsymbol{L}}(\theta)(x_1,\ldots,x_p)$ in $\mu$--essential supremum norm with respect to $\theta$ by Lemma \ref{lem:ga}.

\bigskip

\noindent\emph{STEP 1.} The result holds for $p=1$.

\bigskip

For $x\in\mathbb{R}^d$ and $\phi\in H_0^m$, define
\begin{align*}
\boldsymbol{D}_\theta \boldsymbol{M}(\theta)x:\phi\mapsto \boldsymbol{P}((D_\theta M(z,\theta)x)\phi|H_0^m),
\end{align*}
where $D_\theta M(z,\theta)$ is the Jacobian of $M(z,\theta)$ with respect to $\theta$. Then, by arguments that are by now familiar,
\begin{align*}
\left\|\boldsymbol{\nabla}_{\epsilon,x}\boldsymbol{M}(\theta)\phi-(\boldsymbol{D}_\theta \boldsymbol{M}(\theta)x)\phi\right\|_{H^m}&=\left\|\boldsymbol{P}((\boldsymbol{\nabla}_{\epsilon,x}M(z,\theta)-D_\theta M(z,\theta)x)\phi|H_0^m)\right\|_{H^m}\\
&\leq\left\|(\boldsymbol{\nabla}_{\epsilon,x}M(z,\theta)-D_\theta M(z,\theta)x)\phi\right\|_{H^m}\\
&\leq \left\|\boldsymbol{\nabla}_{\epsilon,x}M(z,\theta)-D_\theta M(z,\theta)x\right\|_\infty\|\phi\|_{H^m}\\
&\leq \left\|\frac{1}{\epsilon}\int_0^\epsilon\left( D_\theta M(z,\theta+\rho x)x-D_\theta M(z,\theta)x\right)d\rho\right\|_\infty\|\phi\|_{H^m}\\
&\leq \sup_{0\leq\rho\leq\epsilon}\left\|D_\theta M(z,\theta+\rho x)x-D_\theta M(z,\theta)x\right\|_\infty\|\phi\|_{H^m},
\end{align*}
which converges to zero as $\epsilon\rightarrow0$ by the uniform continuity of $D_\theta M(z,\theta+\rho x)x$ with respect to $(z,\rho)\in\mathbb{T}\times[0,\bar\epsilon]$, where $\bar\epsilon$ is chosen so that the segment between $\theta$ and $\theta+\bar\epsilon x$ is inside $\Theta$. Thus, the mapping $\phi\mapsto (\boldsymbol{D}_\theta \boldsymbol{M}(\theta)x)\phi$ is linear and bounded on $H_0^m$ \cite[Corollary XII.4.4]{basic}. By the same arguments, the mapping $\boldsymbol{D}_\theta \boldsymbol{N}(\theta)x:\phi\mapsto \boldsymbol{P}((D_\theta N(z,\theta)x)\phi|H_0^m)$ is a bounded linear operator from $H_0^n$ to $H_0^m$ as well as the operator limit of $\boldsymbol{\nabla}_{\epsilon,x}\boldsymbol{N}(\theta)$.

Since $\boldsymbol{D}_\theta \boldsymbol{M}(\theta)x$ exists and, by the same arguments as used above, $\boldsymbol{M}(\theta+\epsilon x)^-$ converges to $\boldsymbol{M}(\theta)^-$ as $\epsilon\rightarrow0$, Theorem 2.1 of \cite{koliha}, applied to $\boldsymbol{M}(\theta+\epsilon x)$ as $\mathsf{H}_0^m\rightarrow\mathsf{H}_0^m$ mappings, implies that $\boldsymbol{\nabla}_{\epsilon,x}
(\boldsymbol{M}(\theta)^-)$ converges in operator norm to
\begin{align*}
\boldsymbol{D}_\theta \boldsymbol{M}(\theta)^- x&=-\boldsymbol{M}(\theta)^- (\boldsymbol{D}_\theta \boldsymbol{M}(\theta)x) \boldsymbol{M}(\theta)^-+(\boldsymbol{I}-\boldsymbol{M}(\theta)^- \boldsymbol{M}(\theta))(\boldsymbol{D}_\theta \boldsymbol{M}(\theta)x)^\times(\boldsymbol{M}(\theta)^-)^\times\boldsymbol{M}(\theta)^-\\
&=-\boldsymbol{M}(\theta)^- (\boldsymbol{D}_\theta \boldsymbol{M}(\theta)x) \boldsymbol{M}(\theta)^-\\
&\qquad +(\boldsymbol{I}-\boldsymbol{M}(\theta)^- \boldsymbol{M}(\theta))\boldsymbol{W}(\theta)^{-1}(\boldsymbol{D}_\theta \boldsymbol{M}(\theta)x)^\ast(\boldsymbol{M}(\theta)^-)^\ast\boldsymbol{W}(\theta)\boldsymbol{M}(\theta)^-,
\end{align*}
as $\epsilon\rightarrow0$, where $\boldsymbol{W}(\theta)=\boldsymbol{M}(\theta)^\ast\boldsymbol{M}(\theta)+\boldsymbol{L}^\ast\boldsymbol{L}$. It follows that the mapping $\phi\mapsto (\boldsymbol{D}_\theta \boldsymbol{M}(\theta)^- x)\phi$ is a bounded linear operator on $H_0^m$. Thus, $\boldsymbol{\nabla}_{\epsilon,x}\phi_{\boldsymbol{L}}(\theta)$ converges in $H^m$ to 
\begin{align*}
\boldsymbol{D}_\theta\phi_{\boldsymbol{L}}(\theta)x=(\boldsymbol{D}_\theta \boldsymbol{M}(\theta)^- x)\boldsymbol{N}(\theta)I_n+\boldsymbol{M}(\theta)^-(\boldsymbol{D}_\theta \boldsymbol{N}(\theta)x)I_n.
\end{align*}

Next, following the same techniques as used above, the joint continuity of $\frac{d}{dz}(D_\theta M(z,\theta)x)=D_\theta\left(\frac{d}{dz}M(z,\theta)\right)x$ implies that the operator $\boldsymbol{d}(\boldsymbol{D}_\theta \boldsymbol{M}(\theta)x)$ exists as a cross operator derivative, is equal to the cross derivative $\boldsymbol{D}_\theta(\boldsymbol{dM}(\theta)x)$, is given by $\phi\mapsto \boldsymbol{P}\left(\mathrm{i}z\frac{d}{dz}(D_\theta M(z,\theta)x)\phi\bigg|H_0^m\right)$, and is a bounded linear operator on $H_0^m$. The same is true of $\boldsymbol{d}(\boldsymbol{D}_\theta \boldsymbol{N}(\theta)x)$. This implies that $\boldsymbol{d}(\boldsymbol{\nabla}_{\epsilon,x}\phi_{\boldsymbol{L}}(\theta))=\boldsymbol{\nabla}_{\epsilon,x}\boldsymbol{d}\phi_{\boldsymbol{L}}(\theta)$ converges in $H^m$ to $\boldsymbol{D}_\theta(\boldsymbol{d}\phi_{\boldsymbol{L}}(\theta))x=\boldsymbol{d}(\boldsymbol{D}_\theta\phi_{\boldsymbol{L}}(\theta)x)$, given by
\begin{align*}
\boldsymbol{d}(\boldsymbol{D}_\theta\phi_{\boldsymbol{L}}(\theta)x)&=\boldsymbol{d}(\boldsymbol{D}_\theta \boldsymbol{M}(\theta)^- x)\boldsymbol{N}(\theta)I_n+(\boldsymbol{D}_\theta \boldsymbol{M}(\theta)^- x)\boldsymbol{d}\boldsymbol{N}(\theta)I_n\\
&\qquad+\boldsymbol{d}(\boldsymbol{M}(\theta)^-)(\boldsymbol{D}_\theta \boldsymbol{N}(\theta)x)I_n+\boldsymbol{M}(\theta)^-\boldsymbol{d}(\boldsymbol{D}_\theta \boldsymbol{N}(\theta)x)I_n\\
&=\big\{-\boldsymbol{d}(\boldsymbol{M}(\theta)^-) (\boldsymbol{D}_\theta \boldsymbol{M}(\theta)x) \boldsymbol{M}(\theta)^--\boldsymbol{M}(\theta)^- \boldsymbol{d}(\boldsymbol{D}_\theta \boldsymbol{M}(\theta)x) \boldsymbol{M}(\theta)^-\\
&\qquad-\boldsymbol{M}(\theta)^- (\boldsymbol{D}_\theta \boldsymbol{M}(\theta)x) \boldsymbol{d}(\boldsymbol{M}(\theta)^-)+\\
&\qquad+(\boldsymbol{I}-\boldsymbol{d}(\boldsymbol{M}(\theta)^-) \boldsymbol{M}(\theta))(\boldsymbol{D}_\theta \boldsymbol{M}(\theta)x)^\times(\boldsymbol{M}(\theta)^-)^\times \boldsymbol{M}(\theta)^-\\
&\qquad+(\boldsymbol{I}-\boldsymbol{M}(\theta)^- d\boldsymbol{M}(\theta))(\boldsymbol{D}_\theta \boldsymbol{M}(\theta)x)^\times(\boldsymbol{M}(\theta)^-)^\times \boldsymbol{M}(\theta)^-\\
&\qquad+(\boldsymbol{I}-\boldsymbol{M}(\theta)^- \boldsymbol{M}(\theta))\boldsymbol{d}(\boldsymbol{D}_\theta \boldsymbol{M}(\theta)x)^\times(\boldsymbol{M}(\theta)^-)^\times \boldsymbol{M}(\theta)^-\\
&\qquad+(\boldsymbol{I}-\boldsymbol{M}(\theta)^- \boldsymbol{M}(\theta))(\boldsymbol{D}_\theta \boldsymbol{M}(\theta)x)^\times \boldsymbol{d}(\boldsymbol{M}(\theta)^-)^\times \boldsymbol{M}(\theta)^-\\
&\qquad+(\boldsymbol{I}-\boldsymbol{M}(\theta)^- \boldsymbol{M}(\theta))(\boldsymbol{D}_\theta \boldsymbol{M}(\theta)x)^\times(\boldsymbol{M}(\theta)^-)^\times \boldsymbol{d}(\boldsymbol{M}(\theta)^-)\big\}\boldsymbol{N}(\theta)I_n+\\
&\qquad +(\boldsymbol{D}_\theta \boldsymbol{M}(\theta)^- x)\boldsymbol{d}\boldsymbol{N}(\theta)I_n+\boldsymbol{d}(\boldsymbol{M}(\theta)^-)(\boldsymbol{D}_\theta \boldsymbol{N}(\theta)x)I_n\\
&\qquad+\boldsymbol{M}(\theta)^-\boldsymbol{d}(\boldsymbol{D}_\theta \boldsymbol{N}(\theta)x)I_n.
\end{align*}
Given the expressions for $\boldsymbol{D}_\theta \boldsymbol{M}(\theta)^- x$, $\boldsymbol{d}(\boldsymbol{D}_\theta\boldsymbol{M}(\theta)x)$, and $\boldsymbol{d}(\boldsymbol{D}_\theta \boldsymbol{N}(\theta)x)$, as well as results obtained in the proof of Theorem \ref{thm:tykcont}, it is now clear that all of the operators that appear in the expressions for $\boldsymbol{D}_\theta\phi_{\boldsymbol{L}}(\theta)x$ and $\boldsymbol{d}(\boldsymbol{D}_\theta\phi_{\boldsymbol{L}}(\theta)x)$ are continuous in operator norm with respect to $\theta$. It follows that $\boldsymbol{D}_\theta\phi_{\boldsymbol{L}}(\theta)x$ and $\boldsymbol{d}(\boldsymbol{D}_\theta\phi_{\boldsymbol{L}}(\theta)x)$ are continuous in $H^m$ norm with respect to $\theta$. By Lemma \ref{lem:ga} then, $\boldsymbol{D}_\theta\phi_{\boldsymbol{L}}(\theta)x$ is continuous in $\mu$--essential supremum norm with respect to $\theta$.

\bigskip

\noindent\emph{STEP 2.} The result holds for all $p>1$.

\bigskip

For $\{x_1,\ldots,x_p\}\subset\mathbb{R}^d$, $\phi\in H_0^m$, and $i=1,\ldots,p$, define
\begin{align*}
\boldsymbol{D}_\theta^i\boldsymbol{M}(\theta)(x_1,\ldots,x_i):\phi\mapsto \boldsymbol{P}(D_\theta^iM(z,\theta)(x_1,\ldots,x_i)\phi|H_0^m),\end{align*}
where $D_\theta^i M(z,\theta)(x_1,\ldots,x_i)=\lim_{\epsilon_i\rightarrow0,\ldots,\epsilon_1\rightarrow0}\boldsymbol{\nabla}_{\epsilon_i,x_i}\cdots\boldsymbol{\nabla}_{\epsilon_1,x_1}M(z,\theta)$ is the $i$-th order derivative of $M(z,\theta)$ with respect to $\theta$ in the directions $x_1,\ldots,x_i$. Just as in step 1, it is easily established that for $\{x_1,\ldots,x_p\}\subset\mathbb{R}^d$ and $i=1,\ldots,p$, $\boldsymbol{D}_\theta^i \boldsymbol{M}(\theta)(x_1,\ldots, x_i)$ is the operator limit of $\boldsymbol{\nabla}_{\epsilon_i,x_i}\cdots\boldsymbol{\nabla}_{\epsilon_1,x_1}\boldsymbol{M}(\theta)$ as $(\epsilon_1,\ldots,\epsilon_i)\rightarrow0$, therefore it is linear and bounded on $H_0^m$. The same result holds for $\boldsymbol{D}_\theta^i \boldsymbol{N}(\theta)(x_1,\ldots, x_i)$.

The expression of $\boldsymbol{D}_\theta \boldsymbol{M}(\theta)^- x$ in step 1 consists of further differentiable operators. Therefore, for $\{x_1,\ldots,x_p\}\subset\mathbb{R}^d$ and $i=1,\ldots,p$, the operator limit of $\boldsymbol{\nabla}_{\epsilon_i,x_i}\cdots\boldsymbol{\nabla}_{\epsilon_1,x_1}\boldsymbol{M}(\theta)^-$ exists, call it $\boldsymbol{D}_\theta^i \boldsymbol{M}(\theta)^-(x_1,\ldots, x_i)$, and is a finite sum of composites of the mappings
\begin{gather*}
\boldsymbol{M}(\theta), \boldsymbol{M}(\theta)^-, \boldsymbol{D}_\theta \boldsymbol{M}(\theta)(x_1),\ldots,\boldsymbol{D}_\theta \boldsymbol{M}(\theta)(x_p), \boldsymbol{D}_\theta^2 \boldsymbol{M}(\theta)(x_1,x_2),\ldots,\boldsymbol{D}_\theta^2 \boldsymbol{M}(\theta)(x_{i-1},x_i),\\
\boldsymbol{D}_\theta^3 \boldsymbol{M}(\theta)(x_1,x_2,x_3),\ldots,\boldsymbol{D}_\theta^3 \boldsymbol{M}(\theta)(x_{i-2},x_{i-1},x_i), \ldots, \boldsymbol{D}_\theta^i\boldsymbol{M}(\theta)(x_1,\ldots, x_i),
\end{gather*}
and their $\mathsf{H}_0^m$ adjoints. As all of these mappings are continuous in the operator norm with respect to $\theta$, it follows that $\boldsymbol{D}_\theta^i\phi_{\boldsymbol{L}}(z,\theta)(x_1,\ldots,x_i)$ is continuous in $H^m$ with respect to $\theta$.

Next, the joint continuity of $\frac{d}{dz}D_\theta^iM(z,\theta)(x_1,\ldots,x_i)=D_\theta^i\left(\frac{d}{dz}M(z,\theta)\right)(x_1,\ldots,x_i)$ for $i=1,\ldots,p$ implies the existence of the operator derivative
\begin{align*}
\boldsymbol{d}(\boldsymbol{D}_\theta^i \boldsymbol{M}(\theta)(x_1,\ldots,x_i)):\phi\mapsto\boldsymbol{P}\left(\mathrm{i}z\frac{d}{dz}D_\theta^i M(z,\theta)(x_1,\ldots,x_i)\phi\bigg|H_0^m\right),
\end{align*}
for $i=1,\ldots,p$, a similar result also holding for $\boldsymbol{d}(\boldsymbol{D}_\theta^i \boldsymbol{N}(\theta)(x_1,\ldots,x_i))$. This implies that
\begin{align*}
\boldsymbol{d}(\boldsymbol{\nabla}_{\epsilon_p,x_p}\cdots \boldsymbol{\nabla}_{\epsilon_1,x_1}\phi_{\boldsymbol{L}}(\theta))=\boldsymbol{\nabla}_{\epsilon_p,x_p}\cdots \boldsymbol{\nabla}_{\epsilon_1,x_1}\boldsymbol{d}\phi_{\boldsymbol{L}}(\theta)
\end{align*}
converges in $H^m$ to 
\begin{align*}
\boldsymbol{d}(\boldsymbol{D}_\theta^p\phi_{\boldsymbol{L}}(\theta)(x_1,\ldots,x_p))=\boldsymbol{D}_\theta^p(\boldsymbol{d}\phi_{\boldsymbol{L}}(\theta))(x_1,\ldots,x_p).
\end{align*}
In particular, $\boldsymbol{d}(\boldsymbol{D}_\theta^p\phi_{\boldsymbol{L}}(z,\theta_0)(x_1,\ldots,x_p))$ is a finite sum of composites of the mappings 
\begin{gather*}
\boldsymbol{M}(\theta), \boldsymbol{M}(\theta)^-, \boldsymbol{D}_\theta \boldsymbol{M}(\theta)(x_1),\ldots,\boldsymbol{D}_\theta \boldsymbol{M}(\theta)(x_p), \boldsymbol{D}_\theta^2 \boldsymbol{M}(\theta)(x_1,x_2),\ldots,\boldsymbol{D}_\theta^2 \boldsymbol{M}(\theta)(x_{p-1},x_p),\\
\boldsymbol{D}_\theta^3 \boldsymbol{M}(\theta)(x_1,x_2,x_3),\ldots,\boldsymbol{D}_\theta^3 \boldsymbol{M}(\theta)(x_{p-2},x_{p-1},x_p),\ldots, \boldsymbol{D}_\theta^p\boldsymbol{M}(\theta)(x_1,\ldots, x_p),\\
\boldsymbol{d}\boldsymbol{M}(\theta), \boldsymbol{d}(\boldsymbol{D}_\theta \boldsymbol{M}(\theta)(x_1)),\ldots, \boldsymbol{d}(\boldsymbol{D}_\theta \boldsymbol{M}(\theta)(x_p)), \boldsymbol{d}(\boldsymbol{D}_\theta^2 \boldsymbol{M}(\theta)(x_1,x_2)), \ldots, \boldsymbol{d}(\boldsymbol{D}_\theta^2 \boldsymbol{M}(\theta)(x_{p-1},x_p)),\\
\boldsymbol{d}(\boldsymbol{D}_\theta^3 \boldsymbol{M}(\theta)(x_1,x_2,x_3)),\ldots,\boldsymbol{d}(\boldsymbol{D}_\theta^3 \boldsymbol{M}(\theta)(x_{p-2},x_{p-1},x_p)),\ldots, \boldsymbol{d}(\boldsymbol{D}_\theta^p\boldsymbol{M}(\theta)(x_1,\ldots, x_p)),\\
\boldsymbol{N}(\theta), \boldsymbol{D}_\theta \boldsymbol{N}(\theta)(x_1), \ldots, \boldsymbol{D}_\theta \boldsymbol{N}(\theta)(x_p), \boldsymbol{D}_\theta^2 \boldsymbol{N}(\theta)(x_1,x_2), \ldots, \boldsymbol{D}_\theta^2 \boldsymbol{N}(\theta)(x_{p-1},x_p),\\ 
\boldsymbol{d}(\boldsymbol{D}_\theta^3 \boldsymbol{N}(\theta)(x_1,x_2,x_3)),\ldots,\boldsymbol{d}(\boldsymbol{D}_\theta^3 \boldsymbol{M}(\theta)(x_{p-2},x_{p-1},x_i)),\ldots, \boldsymbol{D}_\theta^p\boldsymbol{N}(\theta)(x_1,\ldots, x_p),\\
\boldsymbol{d}\boldsymbol{N}(\theta), \boldsymbol{d}(\boldsymbol{D}_\theta \boldsymbol{N}(\theta)(x_1)), \ldots, \boldsymbol{d}(\boldsymbol{D}_\theta \boldsymbol{N}(\theta)(x_p)), \boldsymbol{d}(\boldsymbol{D}_\theta^2 \boldsymbol{N}(\theta)(x_1,x_2)),\ldots, \boldsymbol{d}(\boldsymbol{D}_\theta^2 \boldsymbol{N}(\theta)(x_{p-1},x_p)),\\
\boldsymbol{d}(\boldsymbol{D}_\theta^3 \boldsymbol{N}(\theta)(x_1,x_2,x_3)),\ldots,\boldsymbol{d}(\boldsymbol{D}_\theta^3 \boldsymbol{M}(\theta)(x_{p-2},x_{p-1},x_p)),\ldots, \boldsymbol{d}(\boldsymbol{D}_\theta^p\boldsymbol{N}(\theta)(x_1,\ldots, x_p))
\end{gather*}
and their $\mathsf{H}_0^m$ adjoints applied to $I_n$. All of these operators are bounded on $H_0^m$ and continuous in the operator norm at $\theta_0$ by the joint continuity of $D_\theta^iM(z,\theta)$, $\frac{d}{dz}D_\theta^iM(z,\theta)$, $D_\theta^iN(z,\theta)$, and $\frac{d}{dz}D_\theta^iN(z,\theta)$ for $i=0,\ldots, p$. Thus, $\boldsymbol{d}(\boldsymbol{D}_\theta^p\phi_{\boldsymbol{L}}(\theta))$ is continuous at $\theta_0$ in $H^m$ and the result follows by Lemma \ref{lem:ga}.
\end{proof}

%\scriptsize
%\setlength{\bibsep}{-1pt}
\bibliographystyle{newapa}
\bibliography{spectral_ET3}

\end{document}